\journal{Journal of \LaTeX\ Templates}
\makeatletter \@addtoreset{equation}{section}
\newtheorem{lemma}{Lemma}
\renewcommand{\baselinestretch}{1.25}
\begin{document}

\begin{frontmatter}

\title{Riemann-Hilbert approach and soliton solutions for the higher-order dispersive nonlinear Schr\"{o}dinger equation with nonzero boundary conditions \tnoteref{mytitlenote}}
\tnotetext[mytitlenote]{
Corresponding author.\\
\hspace*{3ex}\emph{E-mail addresses}: sftian@cumt.edu.cn,
shoufu2006@126.com (S. F. Tian) }

\author{Zhi-Qiang Li, Shou-Fu Tian\tnoteref{mytitlenote}, Jin-Jie Yang}
\address{
School of Mathematics and Institute of Mathematical Physics, China University of Mining and Technology,\\ Xuzhou 221116, People's Republic of China\\
}

\begin{abstract}
In this work, the higher-order dispersive nonlinear Schr\"{o}dinger equation with non-zero boundary conditions at infinity is investigated including the simple and double zeros of the scattering coefficients. We introduce a appropriate Riemann surface and uniformization variable in order to deal with the double-valued functions occurring in the process of direct scattering. Then, the direct scattering problem is analyzed involving the analyticity, symmetries and asymptotic behaviors. Moreover, for the cases of simple and double poles, we study the discrete spectrum and residual conditions, trace foumulae and theta conditions and the inverse scattering problem which is solved via   the Riemann-Hilbert method. Finally,  for the both   cases, we construct the   soliton and breather solutions   under the condition of reflection-less potentials.  Some interesting phenomena of the soliton and  breather
solutions  are analyzed graphically   by considering the
influences of each parameters.
\end{abstract}

\begin{keyword}
The higher-order dispersive nonlinear Schr\"{o}dinger equations \sep Nonzero boundary conditions \sep Riemann-Hilbert method.
\end{keyword}

\end{frontmatter}


\section{Introduction}

Nonlinear Schr\"{o}dinger (NLS) equation reads
\begin{equation}
iu_{t}+u_{xx}+2|u|^{2}u=0
\end{equation}
is a very important soliton equation.
As basic physical models, the NLS equation and its extensions play an important role in various fields of nonlinear science such as deep water waves\cite{I-1}, plasma physics\cite{I-2,I-3}, nonlinear optical fibers\cite{I-6,I-7}, magneto-static spin waves\cite{I-8}, etc. Since many NLS equations are completely integrable infinite-dimensional Hamiltonian systems, they have rich mathematical structure. Therefore, various research on NLS equations are still popular. In particular, the studies on the exact solutions of the NLS equations have gradually become an important branch.
There are three well-known derivative
NLS equations, including the Kaup-Newell equation \cite{KN}, the Chen-Lee-Liu
equation \cite{CLL} and the Gerdjikov-Ivanov equation \cite{GI,Tian-PAMS}. It is known that these three equations
may be transformed into each other by implicit gauge transformations, and the method of
gauge transformation can also be applied to some generalized cases \cite{Fan-dnls}.
In this respect,  there are a lot of research for the extended NLS equations \cite{I-13}-\cite{I-18}.

In this work, we investigate the higher-order dispersive nonlinear Schr\"{o}dinger (HDNLS) equation\cite{I-19,I-20} with nonzero boundary conditions (NZBCs) at infinity which have extensive applications in physical fields. The HDNLS equation takes the form
\begin{gather}
iu_{t}+u_{xx}+2|u|^{2}u+\tau (u_{xxxx}+8|u|^{2}u_{xx}+2u^{2}u^{*}_{xx}+4u|u_{x}|^{2}+6u^{*}u_{x}^{2}+6|u|^{4}u )=0 ,\notag \\
\lim_{x\rightarrow \pm \infty}u(x,t)=u_{\pm}, \qquad \mid u_{\pm}\mid=u_{0}\neq 0,
\label{1.1}
\end{gather}
where $u$ is complex function of variables $x$, $t$ and represents the slowly changing envelope of the wave and $\tau=\frac{\epsilon^{2}}{12}$ is a small dimensionless parameter. The HDNLS equation plays a dominant role in controlling the ultrashort optical pulse propagation in a long-distance, high-speed fiber transmission systems, with the higher-order nonlinear effects such as fourth-order dispersion\cite{I-21}-\cite{I-23}. Meanwhile, this equation can be used to describe the nonlinear spin excitations in the one-dimensional isotropic biquadratic Heisenberg ferromagnetic spin with the octupole-dipole interaction\cite{I-24,I-25}. To study the HDNLS equation, experts and scholars have used many methods such as Darboux transformation method, modified Darboux transformation method\cite{I-26,I-27}. However, what we will study is inverse scattering transform (IST) for the HDNLS equation with nonzero boundary conditions. To the best of our knowledge, for this equation, the research using this method has not been reported yet.

The IST is first presented  in order to solve exactly the famous Korteweg-de Vries equation \cite{I-29}. Then, Zakharov and Shabat\cite{I-3} show that the method can be applied to physically significant nonlinear evolution equation, namely, the nonlinear Schr\"{o}dinger equation. After this, the study on the NLS equations using the IST is more and more popular. Recently,  the IST is extended to study the NLS equations with NZBCs \cite{I-30}-\cite{I-48}.

Two goals will be achieved in this work. On one hand, according to the lax pairs of the equation \eqref{1.1} and through the direct scattering transform and inverse scattering transform, some results are given, involving the analyticity, symmetries and asymptotic behaviors of the scattering coefficients, the establishment of a generalized Riemann-Hilbert problem and the construction of the discrete spectrum and residual conditions. On the other hand, based on the above results, the solutions of the HDNLS equation with NZBCs will be derived for the simple and double poles. In view of the complexity of the HDNLS equation and its lax pairs, some computational skills are used and a lot of calculations are made during the analysis to achieve that two goals.

The outline of this work is as follows. In section 2, we explore the simple poles of the HDNLS equation with NZBCs involving the direct scattering problem and inverse scattering problem. Meanwhile, the soliton solutions under the reflection-less potential condition are given. In section 3, we derive the inverse scattering problem of the HDNLS equation with NZBCs and double poles. In addition, the soliton solutions are presented for some special cases. In section 4, we make a comparison with the results of other experts and scholars. In section 5, some conclusions and discussions are given.

\section{The HDNLS equation with NZBCs: simple poles}
In this section, we consider the simple poles case of the HDNLS equation with NZBCs. Firstly, we seek for an appropriate transformation to simplify the boundary conditions and make the analysis process easier.

The Lax pair of the equation \eqref{1.1} reads
\begin{gather}
\Psi_{x}=X\Psi,\qquad \Psi_{t}=T\Psi, \notag \\
\Psi=(\Psi_{1},\Psi_{2})^{T}, \label{2.1}
\end{gather}
where $\Psi_{i}, (i=1,2)$ are eigenfunctions, and
\begin{gather}
X=-ik\sigma_{3}+U, \notag \\
T=[3i\tau |u|^{4}+i|u|^{2}+i\tau (u^{*}u_{xx}+uu^{*}_{xx}-|u_{x}|^{2})+8i\tau k^{4}+2k\tau (uu^{*}_{x}-u_{x}u^{*})  \notag \\
-2ik^{2}(2\tau |u|^{2}+1)]\sigma_{3}-8\tau k^{3}U-4i\tau k^{2}\sigma_{3}U_{x}+6i\tau U^{2}U_{x}\sigma_{3} \notag \\
+i\sigma_{3}U_{x}+i\tau \sigma_{3}U_{xxx}+2k(U+\tau U_{xx}-2\tau U^{3}),\label{2.2}
\end{gather}
with
\begin{equation}
U=\left(                 
  \begin{array}{cc}   
    0 & u(x,t)  \\  
    -u^{*}(x,t) & 0 \\  
  \end{array}
\right), \qquad
\sigma_{3}=\left(                 
  \begin{array}{cc}   
    1 & 0  \\  
    0 & -1 \\  
  \end{array}
\right). \notag
\end{equation}\\
It is not hard to check that $X,T$ satisfy the zero curvature equation $X_{t}-T_{x}+[X,T]=0$, which is the compatibility condition of \eqref{1.1}.

\noindent \textbf {Theorem 2.1}
\emph{Under the transformation}
\begin{align}
\begin{split}
&u= qe^{2i(3\tau q_{0}^{4}+q_{0}^{2})t},\\
&\Psi= \phi e^{(3\tau q_{0}^{4}+q_{0}^{2})\sigma_{3}}, \notag
\end{split}
\end{align}
\emph{then the equation \eqref{1.1} changes to}
\begin{gather}
iq_{t}+q_{xx}+2(|q|^{2}-3\tau q_{0}^{4}-q_{0}^{2})q+\tau (q_{xxxx}+8|q|^{2}q_{xx} \notag \\
+2q^{2}q^{*}_{xx}+4q|q_{x}|^{2}+6q^{*}q_{x}^{2}+6|q|^{4}q )=0 ,\notag \\
\lim_{x\rightarrow \pm \infty}q(x,t)=q_{\pm}, \qquad \mid q_{\pm}\mid=q_{0}\neq 0,
\label{2.3}
\end{gather}
\emph{and the equation \eqref{2.1} changes to }
\begin{gather}
\phi_{x}=X\phi,\qquad \phi_{t}=T\phi, \notag \\
\phi=(\phi_{1},\phi_{2})^{T}, \label{2.4}
\end{gather}
\emph{where $\phi_{i}, (i=1,2)$ are eigenfunctions, and}
\begin{gather}
X=-ik\sigma_{3}+Q, \notag \\
T=[3i\tau |q|^{4}-3i\tau |q_{0}|^{4}+i|q|^{2}-i|q_{0}|^{2}+\tau (u^{*}u_{xx}+uu^{*}_{xx}-|u_{x}|^{2}) +8i\tau k^{4} \notag \\
+2k\tau (uu^{*}_{x}-u_{x}u^{*})
-2ik^{2}(2\tau |u|^{2}+1)]\sigma_{3}-8\tau k^{3}U-4i\tau k^{2}\sigma_{3}U_{x} \notag \\
+6i\tau U^{2}U_{x}\sigma_{3}+i\sigma_{3}U_{x}+i\tau \sigma_{3}U_{xxx}+2k(U+\tau U_{xx}-2\tau U^{3}),\notag
\end{gather}
\emph{with}
\begin{equation}
Q=\left(                 
  \begin{array}{cc}   
    0 & q(x,t)  \\  
    -q^{*}(x,t) & 0 \\  
  \end{array}
\right), \qquad
\sigma_{3}=\left(                 
  \begin{array}{cc}   
    1 & 0  \\  
    0 & -1 \\  
  \end{array}
\right). \notag
\end{equation}

\subsection{Direct scattering problem with NZBCs}
In this section, the direct scattering problem will be studied. In the direct scattering process, the analyticity and asymptotic of the eigenfunction, asymptotic and symmetries of the scattering matrix, discrete spectrum, and residue conditions will be given. Meanwhile, the emergence of multi-valued functions will complicate the problem during the analysis process. So we introduce the two-sheeted Riemann surface to simplify the analysis process.

\subsubsection{Riemann surface and uniformization coordinate}
Letting $x\rightarrow\pm\infty$, the following asymptotic scattering problem
\begin{align}\label{2.5}
&\left\{ \begin{aligned}
&\psi_{x}=X_{\pm}\psi,\quad X_{\pm}=\lim_{x\rightarrow\pm\infty}X=-ik\sigma_{3}+Q_{\pm},\\
&\psi_{t}=T_{\pm}\psi, \quad T_{\pm}=\lim_{x\rightarrow\pm\infty}T=(-8\tau k^{3}+2k+4\tau kq^{2}_{0})X_{\pm},
     \end{aligned}  \right.
\end{align}
can be obtained from the \emph{Theorem 2.1},
where $Q_{\pm}=\lim_{x\rightarrow\pm\infty}Q=\left(\begin{array}{cc}
    0  & q_{\pm} \\
    -q^{*}_{\pm} & 0  \\
  \end{array}\right).$
It is easy to calculate that the eigenvalues of $X_{\pm}$ is $\pm i\sqrt{k^{2}+q_{0}^{2}}$. Obviously, the eigenvalues are doubly branched. So we introduce the two-sheeted Riemann surface defined by
\begin{align}\label{2.6}
\lambda^{2}=k^{2}+q_{0}^{2}
\end{align}
to simplify the problem. From the definition, we know that the branch points $k=\pm iq_{0}$ can be obtained when $\sqrt{k^{2}+q_{0}^{2}}=0$. Therefore, the two-sheeted Riemann surface completed by gluing together two copies of extended complex $k$-plane $S_{1}$ and $S_{2}$ along the cut $iq_{0}[-1,1]$ between the branch points $k=\pm iq_{0}$. And the $\lambda (k)$ is a single-value function.
Here, we introduce the local polar coordinates
\begin{align}
k+iq_{0}=r_{1}e^{i\theta_{1}},\quad k-iq_{0}=r_{2}e^{i\theta_{2}},\quad -\frac{\pi}{2}<\theta_{1},\theta_{2}<\frac{3\pi}{2}, \notag
\end{align}
then, we obtain a single-valued analytical function on the Riemann surface
\begin{align}
\lambda(k)=&\left\{\begin{aligned}
&(r_{1}r_{2})^\frac{1}{2}e^\frac{{\theta_{1}+\theta_{2}}}{2}, \quad &on\quad S_{1},\\
-&(r_{1}r_{2})^\frac{1}{2}e^\frac{{\theta_{1}+\theta_{2}}}{2}, \quad &on\quad S_{2}.
\end{aligned} \right. \notag
\end{align}
It is easy to get the relations between $\lambda$ and $k$ as
\begin{align}
&\left\{\begin{aligned}
&Im k>0 ~ of ~sheet ~S_{1} ~and ~Im k<0 ~of ~sheet ~S_{2} ~are ~mapped ~into ~Im \lambda >0,\\
&Im k<0 ~ of ~sheet ~S_{1} ~and ~Im k>0 ~of ~sheet ~S_{2} ~are ~mapped ~into ~Im \lambda <0,
\end{aligned} \right. \notag
\end{align}
by considering the imaginary part of $\lambda (k)$. Conveniently, define the uniformization variable $z$ \cite{I-50}
\begin{align}
z=k+\lambda, \notag
\end{align}
and compare with the equation \eqref{2.6}, we obtain the following two single-value functions
\begin{align}\label{2.7}
\lambda(z)=\frac{1}{2}(z+\frac{q_{0}^{2}}{z}),\quad k(z)=\frac{1}{2}(z-\frac{q_{0}^{2}}{z}).
\end{align}
We consider the imaginary part of the Joukowsky transformation
\begin{align}\label{2.8}
\lambda=\frac{z(|z|^{2}-q_{0}^{2})+2q_{0}^{2}Rez}{2|z|^{2}}
\end{align}
which comes from equation \eqref{2.7}, and obtain that the upper half of the $\lambda$-plane maps to the upper half of the $z$-plane except for the inner part of the circle with $q_{0}$ as the radius and $0$ as the center of the circle and the lower half of the $z$-plane except for the outer part of the circle with $q_{0}$ as the radius and $0$ as the center of the circle. Consistently, the lower half of the $\lambda$-plane maps to the lower half of the $z$-plane except for the inner part of the circle with $q_{0}$ as the radius and $0$ as the center of the circle and the upper half of the $z$-plane except for the outer part of the circle with $q_{0}$ as the radius and $0$ as the center of the circle.
To sum up, the $Im\lambda\gtrless0$, respectively, maps to $D_{+}$ and $D_{-}$ where
\begin{align*}
D_{+}=\left\{z\in \mathbb{C}:\left(|z|^{2}-q_{0}^{2}\right)Im z>0\right\},\quad D_{-}=\left\{z\in \mathbb{C}:\left(|z|^{2}-q_{0}^{2}\right)Im z<0\right\}.\notag
\end{align*}
Based on the above analysis, we can summarize the following illustrations.
\\

\centerline{\begin{tikzpicture}
\path [fill=green] (-4.5,0) -- (-0.5,0) to
(-0.5,2) -- (-4.5,2);
\draw[-][thick](-4.5,0)--(-2.5,0);
\draw[fill] (-2.5,0) circle [radius=0.035];
\draw[->][thick](-2.5,0)--(-0.5,0)node[above]{$Rek$};
\draw[<-][thick](-2.5,2)node[right]{$Imk$}--(-2.5,1)node[right]{$iq_{0}$};
\draw[fill] (-2.5,1) circle [radius=0.035];
\draw[-][thick](-2.5,1)--(-2.5,0);
\draw[-][thick](-2.5,0)--(-2.5,-1)node[right]{$-iq_{0}$};
\draw[fill] (-2.5,-1) circle [radius=0.035];
\draw[-][thick](-2.5,-1)--(-2.5,-2);
\draw[fill] (-2.5,-0.3) node[right]{$0$};
\draw[fill] (-1.7,0.8) circle [radius=0.035] node[right]{$z^{*}_{n}$};
\draw[fill] (-1.7,-0.8) circle [radius=0.035] node[right]{$z_{n}$};
\path [fill=green] (0.5,0) -- (4.5,0) to
(4.5,2) -- (0.5,2);
\filldraw[white, line width=0.5](3.5,0) arc (0:180:1);
\filldraw[green, line width=0.5](1.5,0) arc (-180:0:1);
\draw[->][thick](0.5,0)--(1,0);
\draw[-][thick](1,0)--(2,0);
\draw[<-][thick](2,0)--(2.5,0);
\draw[fill] (2.5,0) circle [radius=0.035];
\draw[-][thick](2.5,0)--(3,0);
\draw[<->][thick](3,0)--(4,0);
\draw[-][thick](4,0)--(4.5,0)node[above]{$Rez$};
\draw[-][thick](2.5,2)node[right]{$Imz$}--(2.5,0);
\draw[-][thick](2.5,0)--(2.5,-2);
\draw[fill] (2.5,-0.3) node[right]{$0$};
\draw[fill] (2.5,1) circle [radius=0.035];
\draw[fill] (2.5,-1) circle [radius=0.035];
\draw[fill] (2.5,1.3) node[right]{$iq_{0}$};
\draw[fill] (2.5,-1.3) node[right]{$-iq_{0}$};
\draw[fill][red] (3.8,1.5) circle [radius=0.035] node[right]{$z^{*}_{n}$};
\draw[fill] (3.8,-1.5) circle [radius=0.035] node[right]{$z_{n}$};
\draw[fill] (1.8,0.5) circle [radius=0.035] node[right]{$-\frac{q^{2}_{0}}{z^{*}_{n}}$};
\draw[fill][red] (1.8,-0.5) circle [radius=0.035] node[right]{$-\frac{q^{2}_{0}}{z_{n}}$};
\draw[-][thick](3.5,0) arc(0:360:1);
\draw[-<][thick](3.5,0) arc(0:30:1);
\draw[-<][thick](3.5,0) arc(0:150:1);
\draw[->][thick](3.5,0) arc(0:210:1);
\draw[->][thick](3.5,0) arc(0:330:1);
\end{tikzpicture}}
\noindent { \small \textbf{Figure 1.} (Color online) The left one , the first sheet of Riemann surface, shows the discrete spectrums and the two regions with $Imk > 0$ (green) and $Imk < 0$ (white). The right one, the complex $z$-plane, shows the discrete spectrums of the scattering problem i.e. the zeros of $s_{11}(z)$ (black) and the zeros of $s_{22}(z)$ (red), the regions $D_{+}$ and $D_{-}$ where $Im\lambda>0$ (green) and $Im\lambda<0$ (white), respectively, and the orientation of the contours about the Riemann Hilbert problem .}\\

Finally, we talk about the asymptotic relationship between two planes $k$ and $z$.
When $k\in S_{1}$, we have
\begin{align*}
z&=k+\sqrt{k^{2}+q_{0}^{2}}\\
&=k+k(1+\frac{q_{0}^{2}}{k^{2}}+\cdots)^{1/2}\sim 2k+o(k^{-1}), \quad k\rightarrow\infty.\notag
\end{align*}
That means $z\rightarrow\infty$ as $k\rightarrow\infty$ $(k\in S_{1})$. Similarly, the other case, $z\rightarrow 0$ as $k\rightarrow\infty$ $(k\in S_{2})$, can be obtained.
\subsubsection{Jost function}
According to the asymptotic Lax pair \eqref{2.5}, we calculate that the matrix $X_{\pm}$ has two eigenvalues i.e., $\pm i\lambda$. Therefore, based on the relationship between $X_{\pm}$ and $T_{\pm}$ in \eqref{2.5} and the property of the matrix, the matrix $T_{\pm}$ has two eigenvalues i.e. $\pm (-8\tau k^{3}+2k+4\tau kq^{2}_{0})i\lambda$. Then, $X_{\pm}$ and $T_{\pm}$ can be transformed to diagonal matrix i.e.,
\begin{align}
\begin{split}
  X_{\pm}(x,t;z)&=Y_{\pm}(z)(-i\lambda \sigma_{3})Y_{\pm}^{-1}(z),\\
  T_{\pm}(x,t;z)&=Y_{\pm}(z)[-i\lambda(-8\tau k^{3}+2k+4\tau kq^{2}_{0})\sigma_{3}]Y_{\pm}^{-1}(z),\notag
  \end{split}
\end{align}
where
\begin{align}
Y_{\pm}(z)=\left(
  \begin{array}{cc}
     1 & -\frac{iq_{\pm}}{z} \\
     -\frac{iq_{\pm}^{*}}{z} & 1 \\
  \end{array}
\right)=\mathbb{I}-(i/z)\sigma_{3}Q_{\pm}. \notag
\end{align}
We know that all the values of $k (on~S_{1}, S_{2})$ , included in the continuous spectrum $\Sigma_{k}$, meet $\lambda(k)\in\mathbb{R}$ which means $\Sigma_{k}=\mathbb{R}\cup i[-q_{0},q_{0}]$. After the introduction of the uniformization variable, the $\Sigma_{k}$ is transformed into the $\Sigma_{z}=\mathbb{R}\cup C_{0}$. The subscript $z$ indicates that the set is in complex $z$-plane and $C_{0}$ is a circle with $0$ as the center and $q_{0}$ as the radius. For the simplicity of the following analysis, we omit the subscript i.e., $\Sigma_{z}\rightarrow\Sigma$. Then, according to the Lax pair \eqref{2.5}, the Jost solutions can be constructed and satisfy that
\begin{align}
\phi_{\pm}(x,t;z)\thicksim\psi_{\pm}(x,t;z)=Y_{\pm}(z)e^{-i\theta(x,t;z)\sigma_{3}},
\quad x\rightarrow \pm\infty \notag
\end{align}
where $\theta(x,t;z)=\lambda(z)[x+(-8\tau k^{3}+2k+4\tau kq^{2}_{0})t]$.
Introducing the $\mu(x,t;z)$ which satisfies that
\begin{align}\label{E-1}
\mu_{\pm}(x,t;z)=\phi_{\pm}(x,t;z)e^{i\theta(x,t;z)\sigma_{3}}  \sim  Y_{\pm}(z), \quad x\rightarrow\pm\infty,
\end{align}
the equivalent Lax pair can be obtained as
\begin{gather}\label{2.10}
(Y_{\pm}^{-1}(z)\mu_{\pm}(z))_{x}-i\lambda [Y_{\pm}^{-1}(z)\mu_{\pm}(z),\sigma_{3}]=Y_{\pm}^{-1}(z)\Delta Q_{\pm}(z)\mu_{\pm}(z),\notag \\
(Y_{\pm}^{-1}(z)\mu_{\pm}(z))_{t}-i\lambda(-8\tau k^{3}+2k+4\tau kq^{2}_{0})[Y_{\pm}^{-1}(z)\mu_{\pm}(z),\sigma_{3}]=Y_{\pm}^{-1}(z)\Delta T_{\pm}(z)\mu_{\pm}(z),
\end{gather}
where $\Delta Q_{\pm}(z)=Q-Q_{\pm}$ and $\Delta T_{\pm}(z)=T-T_{\pm}$. Through calculations, Lax pair \eqref{2.10} can be written in full derivative form. Therefore, based on the properties of the differential, we can select two special integration paths i.e., $(-\infty,t)\rightarrow(x,t)$ and $(+\infty,t)\rightarrow(x,t)$. Then, we get the following linear integral equations
\begin{align}\label{2.11}
\begin{matrix}
\mu_{-}(x,t;z)=Y_{-}+\int_{-\infty}^{x}Y_{-}e^{-i\lambda(x-y)\hat{\sigma}_{3}}[Y_{-}^{-1}\Delta Q_{-}(y,t)\mu_{-}(y,t;z)]\, dy,\\
\mu_{+}(x,t;z)=Y_{+}-\int_{x}^{\infty}Y_{+}e^{-i\lambda(x-y)\hat{\sigma}_{3}}[Y_{+}^{-1}\Delta Q_{+}(y,t)\mu_{+}(y,t;z)]\, dy.
\end{matrix}
\end{align}
Therefore, the analyticity of the function $\mu_{\pm}$ can be derived.

\noindent \textbf {Theorem 2.2}
\emph{The functions $\mu_{-,1}, \mu_{+,2}$ are analytic in $D_{+}$ and $\mu_{-,2}, \mu_{+,1}$ are analytic in $D_{-}$ and they can be recorded as $\mu^{+}_{-,1}, \mu^{+}_{+,2}, \mu^{-}_{-,2}, \mu^{-}_{+,1}$, respectively.
The functions $\mu_{\pm,j} (j=1,2)$ is the $j$-th column of $\mu_{\pm}$. For conveniently, $\mu_{+}$ and $\mu_{-}$ can be rewritten as}
\begin{align}
\mu_{+}=(\mu^{-}_{+,1}, \mu^{+}_{+,2}), \quad \mu_{-}=(\mu^{+}_{-,1}, \mu^{-}_{-,2}). \notag
\end{align}
\begin{proof}
By simple calculations, we can get that
\begin{align}
Y^{-1}_{-}\mu_{-,1}(y,t;z)=\left(                 
  \begin{array}{c}   
    1  \\  
    0  \\  
  \end{array}
\right)+\int_{-\infty}^{x}G_{0}(x-y,z)\Delta Q_{-}(y)\mu_{-,1}(y,t;z)\, dy \notag
\end{align}
where $G_{0}(x-y,z)=\frac{1}{\gamma}\left(                 
  \begin{array}{cc}   
    1 & \frac{iq_{-}}{z}  \\  
    \frac{iq^{*}_{-}}{z}e^{2i\lambda(x-y)} & e^{2i\lambda(x-y)} \\ 
  \end{array}
\right)$ with $\gamma=det(Y_{\pm})=1+\frac{q^{2}_{0}}{z}$.
Considering the  $e^{2i\lambda\left(x-y\right)}=e^{2i\left(x-y\right)Re\lambda}e^{-2\left(x-y\right)Im\lambda}$ , we can make the conclusion that $x-y>0$, so $\mu_{-,1}$ is analytic in $D_{+}$ when $Im\lambda>0$. Then, $\mu_{-,1}$ can be recorded as $\mu^{+}_{-,1}$ which means the first column of $\mu_{-}$ is analytic in $D_{+}$.
Similar to the above analysis, the analyticity of $\mu_{-,2}, \mu_{+,1}, \mu_{+,2}$ can be obtained.
Now we finish the proof.
\end{proof}
\subsubsection{Scattering matrix}
In this section, the scattering matrix will be discussed. Before the discussion, we, firstly, introduce a lemma.
\begin{lemma}
If $A(x)$, $Y(x)$ are $n$-order matrix matrices, satisfying $Y_{x}=AY$, then $(\det Y)_{x}=tr(A)\det Y$ and $\det Y(x)=[\det Y(x_{0})]e^{\int_{x_{0}}^{x}tr[A(y)]\,dy}$.\label{lemma1}
\end{lemma}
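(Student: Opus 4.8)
The plan is to prove this by the classical Abel--Liouville--Jacobi argument: differentiate the determinant column by column (or row by row), use the ODE $Y_x=AY$ to rewrite each differentiated entry, and collapse the resulting sum by multilinearity and the vanishing of determinants with repeated columns. Concretely, writing $Y=(Y_{\cdot 1},\dots,Y_{\cdot n})$ in terms of its columns, the derivative of a determinant obeys the Leibniz-type rule
\begin{align}
(\det Y)_{x}=\sum_{j=1}^{n}\det\bigl(Y_{\cdot 1},\dots,(Y_{\cdot j})_{x},\dots,Y_{\cdot n}\bigr).\notag
\end{align}
First I would invoke $Y_{x}=AY$, which gives $(Y_{\cdot j})_{x}=A\,Y_{\cdot j}=\sum_{k=1}^{n}(\text{$k$-th column of }A)\,?$ — more precisely $(Y_{\cdot j})_{x}$ has $i$-th entry $\sum_{m}A_{im}Y_{mj}$, so $(Y_{\cdot j})_{x}=\sum_{m}Y_{mj}\,(A_{\cdot m})$ where $A_{\cdot m}$ is the $m$-th column of $A$. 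Hmm — it is cleaner to differentiate rows: the $j$-th row of $Y_{x}$ equals $\sum_{m}A_{jm}(\text{$m$-th row of }Y)$.

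So the second step is: expanding the $j$-th summand by multilinearity in the $j$-th row,
\begin{align}
\det\bigl(\text{$Y$ with row $j$ replaced by }{\textstyle\sum_{m}}A_{jm}(\text{row $m$ of }Y)\bigr)=\sum_{m=1}^{n}A_{jm}\det\bigl(\text{$Y$ with row $j$ replaced by row $m$ of }Y\bigr).\notag
\end{align}
Every term with $m\neq j$ has two identical rows and hence vanishes, while the $m=j$ term is exactly $A_{jj}\det Y$. Summing over $j$ yields $(\det Y)_{x}=\bigl(\sum_{j}A_{jj}\bigr)\det Y=tr(A)\det Y$, the first assertion.

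For the second assertion I would set $w(x)=\det Y(x)$, a scalar function satisfying the linear first-order ODE $w_{x}=tr(A(x))\,w$. Then $\frac{d}{dx}\bigl(w(x)\exp(-\int_{x_{0}}^{x}tr[A(y)]\,dy)\bigr)=0$, so this quantity is constant, equal to $w(x_{0})=\det Y(x_{0})$; rearranging gives $\det Y(x)=[\det Y(x_{0})]\exp\bigl(\int_{x_{0}}^{x}tr[A(y)]\,dy\bigr)$. There is no genuine obstacle here — the only thing to be careful about is the combinatorial bookkeeping in the differentiation-of-a-determinant step, i.e.\ making sure one differentiates one row (or column) at a time and correctly identifies which terms drop out; the passage from the scalar ODE to its exponential solution is then immediate.
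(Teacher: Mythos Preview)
Your argument is correct and is essentially the same as the paper's: both proofs differentiate $\det Y$ one row at a time, use $Y_{i,x}=\sum_{m}a_{im}Y_{m}$ from $Y_{x}=AY$, and collapse by multilinearity (the paper leaves the vanishing of the off-diagonal $m\neq j$ terms implicit, while you spell it out), then integrate the scalar ODE $(\det Y)_{x}=tr(A)\det Y$ to get the exponential formula. Your brief detour through columns before settling on rows is harmless; the final structure matches the paper exactly.
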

\begin{proof}
Introduce
\begin{gather}
A =\left(
      \begin{array}{cccc}
        a_{11} & a_{12} & \cdot\cdot\cdot & a_{1n} \\
        a_{21} & a_{22} & \cdot\cdot\cdot & a_{2n} \\
        \cdot\cdot\cdot & \cdot\cdot\cdot & \cdot\cdot\cdot & \cdot\cdot\cdot \\
        a_{n1} & a_{n2} & \cdot\cdot\cdot & a_{nn} \\
      \end{array}
    \right),  \qquad Y=\left(
                 \begin{array}{c}
                   Y_{1} \\
                   Y_{2} \\
                   \cdot\cdot\cdot \\
                   Y_{n} \\
                 \end{array}
               \right), \notag
\end{gather}
where $Y_{i} (i=1,2,\cdot\cdot\cdot,n)$ is the $i$-th row of matrix $Y$. From $Y_{x}=AY$, we obtain
\begin{gather}
Y_{i,x}=a_{i1}Y_{1}+a_{i2}Y_{2}+\cdot\cdot\cdot+a_{in}Y_{n}, \qquad i=1,2,\cdot\cdot\cdot,n. \notag
\end{gather}
Then, we denote
\begin{gather}
\left(\det Y\right)_{x}=\sum_{i=1}^{n}\det \left(
                 \begin{array}{c}
                   Y_{1} \\
                   \cdot\cdot\cdot \\
                   Y_{i,x} \\
                   \cdot\cdot\cdot \\
                   Y_{n} \\
                 \end{array}
               \right)=\sum_{i=1}^{n}\det \left(
                 \begin{array}{c}
                   Y_{1} \\
                   \cdot\cdot\cdot \\
                   a_{i1}Y_{1}+\cdot\cdot\cdot+a_{ii}Y_{i}+\cdot\cdot\cdot+a_{in}Y_{n} \\
                   \cdot\cdot\cdot \\
                   Y_{n} \\
                 \end{array}
               \right) \notag
\end{gather}
which means
\begin{gather}
\left(\det Y\right)_{x}=\sum_{i=1}^{n}a_{ii}\det Y=tr(A)\det Y. \notag
\end{gather}
Furthermore, through integral, we can obtain
\begin{gather}
\left(\det Y\right)=Ce^{\int_{x_{0}}^{x}tr[A(y)]\,dy}. \notag
\end{gather}
Selecting $x=x_{0}$, we can calculate that $C=\left(\det Y\right)_{x_{0}}$. So, we have
\begin{gather}
\left(\det Y\right)=\left(\det Y\right)_{x_{0}}e^{\int_{x_{0}}^{x}tr[A(y)]\,dy}. \notag
\end{gather}
Now, we finish the proof.
\end{proof}
According to the $\mathbf{Lemma 1}$ and $tr(X)=tr(T)=0$ in equation \eqref{2.4}, we denote that $(\det \phi)_{x}=(\det \phi)_{t}=0$. So, for any $z\in \Sigma$, we obtain that $\det\phi_{\pm}(x,t;z)=\det Y_{\pm}(z)=\gamma(z)$. Meanwhile, for any $z\in \Sigma_{0}$ defined as $\Sigma_{0}:=\Sigma\setminus{\pm iq_{0}}$, $\phi_{\pm}$ are two fundamental matrix solutions of scattering problem. Then, we can denote the relationship between the $\phi_{+}$ and $\phi_{-}$ as follows
\begin{align}\label{2.12}
 \phi_{+}(x,t;z)=\phi_{-}(x,t;z)S(z),
\end{align}
where $S(z)=(s_{ij})_{2\times2}$ is a constant matrix and is independent of the variable $x$ and $t$.
In addition, based on the equation \eqref{2.12}, we can get the following relation
\begin{align}
&s_{11}(z)=\frac{Wr\left(\phi_{+,1},\phi_{-,2}\right)}{\gamma},\label{2.13}\quad
s_{22}(z)=\frac{Wr\left(\phi_{-,1},\phi_{+,2}\right)}{\gamma},\\
&s_{12}(z)=\frac{Wr\left(\phi_{+,2},\phi_{-,2}\right)}{\gamma},\quad
s_{21}(z)=\frac{Wr\left(\phi_{-,1},\phi_{+,1}\right)}{\gamma},
\end{align}
where the subscript of $\phi_{\pm,j}$ mean the $j$-column of $\phi_{\pm}$ and $\gamma=1+\frac{q^{2}_{0}}{z}$.
Then, we discuss the analytical properties of the scattering matrix $S(z)$.

\noindent \textbf {Theorem 2.3}
\emph{The function $s_{11}$ is analytic in $D_{-}$ and $s_{22}$ is analytic in $D_{+}$. However, the functions $s_{12}$ and $s_{21}$ are nowhere analytic.}
\begin{proof}
With the equation \eqref{E-1}, we can derive that
\begin{align}\label{2.14}
\mu_{+}=\mu_{-}e^{-i\theta(z)\hat{\sigma}_{3}}S(z)
\end{align}
and
\begin{align}\label{2.15}
(\det \mu_{\pm})_{x}=(\det \phi_{\pm})_{x}=0.
\end{align}
The equation \eqref{2.15} implies that $\det \mu_{\pm}$ are independent of variable $x$. Thus, we have
\begin{align}
\det \mu_{\pm}=\det(\lim_{x\rightarrow\pm\infty}\mu_{\pm}=\det Y_{\pm})_{x}=\gamma\neq0, \notag
\end{align}
which means $\mu_{\pm}$ is reversible. Then, equation \eqref{2.14} can be transformed into $e^{-i\theta(z)\hat{\sigma}_{3}}S(z)=\mu_{-}^{-1}\mu_{+}$.
Then, based on the \emph{Theorem 2.2} and the expression of $S(z)$, the corresponding analytical properties of the $s(z)_{ij}(i,j=1,2)$ can be derived.
\end{proof}
Here, we introduce the reflection coefficients
\begin{align}\label{2.16}
\rho(z)=s_{21}(z)/s_{11}(z),\quad \tilde{\rho}(z)=s_{12}(z)/s_{22}(z),\quad \forall z\in\Sigma.
\end{align}
which will be important in the reverse problem.
\subsubsection{Symmetries}
Here, the symmetry of the scattering matrix will be analyzed. This property will play an important role in the following analysis and make the analysis process more elegant. Different from the case of zero boundary conditions(ZBCs), we have to deal with not only the map $k\rightarrow k^{*}$ but also the sheets of the Riemann surface. On one hand, $z\rightarrow z^{*}$($\in$ $z$-plane) implies $(k,\lambda)\rightarrow(k^{*},\lambda^{*})$($\in$ $k$-plane), on the other hand, $z\rightarrow -q_{0}^{2}/z$($\in$ $z$-plane) implies $(k,\lambda)\rightarrow(k,-\lambda)$($\in$ $k$-plane). The corresponding symmetries of the scattering problem are these two transformations. Then, we discuss the symmetries of the scattering problem.

Firstly, the symmetries of $\mu_{\pm}(x,t;z)$ which correspond to the Jost solutions $\phi_{\pm}(x,t;z)$ are given as follows.

\noindent \textbf {Theorem 2.4}
\emph{The two symmetries:}
\begin{align}
\mu_{\pm}(x,t;z)&=-\sigma\mu_{\pm}^{*}(x,t;z^{*})\sigma, \label{E-2}\\
\mu_{\pm}(x,t;z)&=-\frac{i}{z}\mu_{\pm}(x,t;-\frac{q_{0}^{2}}{z})\sigma_{3}Q_{\pm},\label{E-3}
\end{align}
\emph{where $\sigma=\left(
      \begin{array}{cc}
        0 & 1  \\
        -1 & 0  \\
      \end{array}
    \right)$.}
\emph{Furthermore, the \eqref{E-2} and \eqref{E-3} can be written as the form of the components of each column as follows}
\begin{align}
\mu_{\pm,1}(x,t;z)&=\sigma\mu_{\pm,2}^{*}(x,t;z^{*}),\quad \qquad
\mu_{\pm,2}(x,t;z)=-\sigma\mu_{\pm,1}^{*}(x,t;z^{*}),\label{E-4}\\
\mu_{\pm,1}(x,t;z)&=(-\frac{iq_{\pm}^{*}}{z})\mu_{\pm,2}(x,t;-\frac{q_{0}^{2}}{z}),\quad
\mu_{\pm,2}(x,t;z)=(-\frac{iq_{\pm}}{z})\mu_{\pm,1}(x,t;-\frac{q_{0}^{2}}{z}).\label{E-5}
\end{align}
\begin{proof}
Let $\omega(x,t;z)=\sigma\mu_{\pm}^{*}(x,t;z^{*})$. Notice that $\sigma\sigma=-I$, $\sigma\sigma_{3}\sigma=\sigma_{3}$, $\sigma Y_{\pm}^{*}(z^{*})\sigma=-Y_{\pm}^{-1}(z)$ and $\sigma\Delta Q_{\pm}^{*}\sigma=\Delta Q_{\pm}$, then, it is not hard to calculate that $\omega$ is the solution of the equation \eqref{2.11} when the $\mu_{\pm}(x,t;z)$ is the solution of the equation \eqref{2.11}. Based on the equation \eqref{2.10}, we can obtain that
\begin{align}
\omega_{\pm}(x,t;z)\sigma=-(Y_{\pm}(z)+o(1)) \notag
\end{align}
as $x\rightarrow\pm\infty$.
Since the solution of the scattering problem with given boundary conditions is unique, the $\omega_{\pm}\sigma=\mu_{\pm}$ is obtained. So, the \eqref{E-2} is proved.\\
Equation \eqref{E-3} can be proved similarly.
\end{proof}
Then, we discuss the symmetry of the scattering matrix $S(z)$.

\noindent \textbf {Theorem 2.5}
\emph{The symmetries of the scattering matrix $S(z)$ are expressed as follows}
\begin{align}
S(z)&=-\sigma S^{*}(z^{*})\sigma, \label{E-6}\\
S(z)&=(\sigma_{3}Q_{-})^{-1}S(-\frac{q_{0}^{2}}{z})\sigma_{3}Q_{+}.\label{E-7}
\end{align}
\emph{Furthermore, according to \eqref{E-6} and \eqref{E-7}, the relationship among the elements of $S(z)$ can be denoted as follows}
\begin{align}
s_{22}(z)=s^{*}_{11}(z^{*}), ~ s_{12}(z)=-s^{*}_{21}(z^{*}), \label{E-8}\\
s_{11}(z)=(q_{+}^{*}/q_{-}^{*})s_{22}(-q_{0}^{2}/z),\label{E-9}\\
s_{12}(z)=(q_{+}/q_{-}^{*})s_{21}(-q_{0}^{2}/z).\label{E-10}
\end{align}
\begin{proof}
Based on the \emph{Theorem 2.4} and $S(z)=e^{i\theta(z)\hat{\sigma}_{3}}\mu_{-}^{-1}(x,t;z)\mu_{+}(x,t;z)$, we can obtain that
\begin{align}
-\sigma S^{*}(z^{*})\sigma&=\sigma e^{-i\theta(z)\sigma_{3}}\sigma\sigma(\mu_{-}^{-1})^{*}(x,t;z^{*})\sigma\sigma
\mu^{*}_{+}(x,t;z^{*})\sigma\sigma e^{i\theta(z)\sigma_{3}}\sigma, \notag\\
&=e^{i\theta(z)\hat{\sigma}_{3}}\mu_{-}^{-1}(x,t;z)\mu_{+}(x,t;z)=S(z).\notag
\end{align}
So, the \eqref{E-6} is proved.
Equation \eqref{E-7} can be proved similarly.
\end{proof}
In addition, the relationship between the reflection coefficients can be denoted as
\begin{align}
\rho(z)=-\tilde{\rho}^{*}(z^{*})=(q^{*}_{-}/q_{-})\tilde{\rho}(-q_{0}^{2}/z).\notag
\end{align}
\subsubsection{Discrete spectrum and residue condition }
The discrete spectrum of the scattering problem is set that contains all values $z\in\mathbb{C}\setminus\Sigma$ which make the eigenfunctions exist in $L^{2}(\mathbb{R})$. These discrete spectrum are the values $z\in D_{-}$ and $z\in D_{+}$ such that $s_{11}(z)=0$ and $s_{22}(z)=0$, respectively. Here, we assume that $z_{n}(\in D_{-}\cap\{z\in\mathbb{C}: Imz<0\}, n=1,2,...,N)$ are the simple poles of $s_{11}(z)$ i.e. $s_{11}(z_{n})=0$ but $s'_{11}(z_{n})\neq0$, $n=1, 2,..., N$. Then, based on the \emph{Theorem 2.5} , we can acquire that
\begin{align}
s_{22}(z_{n}^{*})=s_{22}(-q_{0}^{2}/z_{n})=s_{11}(-q_{0}^{2}/z_{n}^{*})=0. \notag
\end{align}
So, the set of the discrete spectrum can be obtained as follows
\begin{align}
\mathbb{Z}=\left\{z_{n}, -\frac{q_{0}^{2}}{z_{n}^{*}},
  z_{n}^{*}, -\frac{q_{0}^{2}}{z_{n}}\right\},\quad s_{11}(z_{n})=0, \quad n=1,2,...,N. \notag
\end{align}

Then, we pay attention to the residue condition that will be useful in the inverse problem. According to the equation \eqref{2.13} and the nature of the determinant, the following relation can be derived as
\begin{align}\label{E-11}
\phi_{+,1}(z_{n})=b_{n}(z_{n})\phi_{-,2}(z_{n}),
\end{align}
where $b_{n}$ is a constant. Contacting the equation \eqref{E-1}, we can get
\begin{align}\label{E-12}
\mu_{+,1}(z_{n})=b_{n}(z_{n})e^{2i\theta(z_{n})}\mu_{-,2}(z_{n}).
\end{align}
Therefore, we have
\begin{align}\label{E-13}
\mathop{Res}_{z=z_{n}}\left[\frac{\mu_{+,1}(z)}{s_{11}(z)}\right]=
\frac{\mu_{+,1}(z_{n})}{s'_{11}(z_{n})}=\frac{b_{n}(z_{n})}{s'_{11}(z_{n})}
e^{2i\theta(z_{n})}\mu_{-,2}(z_{n}).
\end{align}
Through a similar analysis process, we derive that
\begin{align}\label{E-14}
\mathop{Res}_{z=z_{n}^{*}}\left[\frac{\mu_{+,2}(z)}{s_{22}(z)}\right]=
\frac{\mu_{+,2}(z_{n}^{*})}{s'_{22}(z_{n}^{*})}=
\frac{d_{n}(z_{n}^{*})}{s'_{22}(z_{n}^{*})}
e^{-2i\theta(z_{n}^{*})}\mu_{-,1}(z_{n}^{*}),
\end{align}
where $d_{n}$ is a constant and satisfies $\phi_{+,2}(z_{n}^{*})=d_{n}(z_{n}^{*})\phi_{-,1}(z_{n}^{*})$ which can be denoted as
\begin{align}\label{E-15}
\mu_{+,2}(z_{n}^{*})=d_{n}(z_{n}^{*})e^{-2i\theta(z_{n}^{*})}\mu_{-,1}(z_{n}^{*}).
\end{align}
According to the symmetries, the constants $b_{n}$ and $d_{n}$ have a fixed relationship. By applying the \emph{Theorem 2.4} to \eqref{E-12} and comparing with \eqref{E-15}, it is easy to acquire that
\begin{align}\label{E-16}
-b_{n}^{*}(z_{n})=d_{n}(z_{n}^{*}).
\end{align}
For the following analysis more convenient, we use the following transformation
\begin{align}
 C_{n}[z_{n}]=\frac{b_{n}(z_{n})}{s'_{11}(z_{n})},\quad
\tilde{C}_{n}[z_{n}]=\frac{d_{n}(z_{n}^{*})}{s'_{22}(z_{n}^{*})}. \notag
\end{align}
Thus, the relationship between $C_{n}[z_{n}]$ and $\tilde{C}_{n}[z_{n}]$ is that
\begin{align}\label{E-17}
-C^{*}_{n}[z_{n}]=\tilde{C}_{n}[z_{n}^{*}].
\end{align}
Finally, with the similar analysis, we can acquire the residue of the remaining two points of the eigenvalue quartet as follows
\begin{align}
\mathop{Res}_{z=-\frac{q_{0}^{2}}{z^{*}_{n}}}\left[\frac{\mu_{+,1}(z)}{s_{11}(z)}\right]&=
C_{N+n}e^{2i\theta(-\frac{q_{0}^{2}}{z^{*}_{n}})}\mu_{-,2}(-\frac{q_{0}^{2}}{z^{*}_{n}}), \notag \\
\mathop{Res}_{z=-\frac{q_{0}^{2}}{z_{n}}}\left[\frac{\mu_{+,2}(z)}{s_{22}(z)}\right]&=
\tilde{C}_{N+n}e^{-2i\theta(-\frac{q_{0}^{2}}{z_{n}})}\mu_{-,1}(-\frac{q_{0}^{2}}{z_{n}}), \notag
\end{align}
where $C_{N+n}=\frac{q_{-}^{*}d_{n}(z_{n}^{*})}{q_{+}s'_{11}(-q_{0}^{2}/z^{*}_{n})}$ and $\tilde{C}_{N+n}=\frac{q_{-}b_{n}(z_{n})}{q^{*}_{+}s'_{22}(-q_{0}^{2}/z_{n})}$.
\subsubsection{Analysis of asymptotic behavior}
In this subsection, we will analyse the asymptotic behaviors of the eigenfunction and the scattering matrix which will make an important impact on the construction of the Riemann Hilbert problem in inverse problem.

From the above analysis, we can see that when the limit $k\rightarrow\infty$, there are two cases  corresponding to it, namely, $z\rightarrow\infty$ and $z\rightarrow0$. Then, the asymptotic properties of the Jost function are given.

\noindent \textbf {Theorem 2.6}
\emph{The asymptotic properties are given as}
\begin{align}\label{E-18}
u_{\pm}(x,t;z)=
\left\{
\begin{aligned}
&I+\frac{i}{z}\sigma_{3}Q+O(z^{-2}),\quad\quad &z\rightarrow\infty,\\
&-\frac{i}{z}\sigma_{3}Q_{\pm}+O(1),\quad &z\rightarrow0.
\end{aligned}
\right.
\end{align}
\begin{proof}
Firstly, we verify the case $z\rightarrow\infty$. According to the Lax pair \eqref{2.10}, we let
\begin{align}\label{E-19}
Y^{-1}_{\pm}\mu_{\pm}=\chi_{\pm}^{(0)}+\chi_{\pm}^{(1)}/z+o(1/z)=
Y^{-1}_{\pm}(\mu_{\pm}^{(0)}+\mu_{\pm}^{(1)}/z+o(1/z)),\quad z\rightarrow\infty.
\end{align}
We know that $Y_{\pm}(z)=\mathbb{I}-(i/z)\sigma_{3}Q_{\pm}$. It is easy to calculate that $Y^{-1}_{\pm}(z)=\frac{1}{1+q_{0}^{2}/z^{2}}(\mathbb{I}+(i/z)\sigma_{3}Q_{\pm})$.
Then, from \eqref{E-19}, the relationship between $\chi_{\pm}^{(i)}$ and $\mu_{\pm}^{(i)}$ $(i=0,1,2)$ can be obtained. Furthermore, substituting \eqref{E-19} into the Lax pair \eqref{2.10}, comparing the same power coefficients of $z$, and combining with \eqref{E-1} and the relationship between $\chi_{\pm}^{(i)}$ and $\mu_{\pm}^{(i)}$, it is not hard to calculate that $\mu_{\pm}^{(0)}=\mathbb{I}$ and $\mu_{\pm}^{(1)}=i\sigma_{3}Q$. So, the asymptotic behavior of $\mu_{\pm}$ is obtained, i.e., $u_{\pm}(x,t;z)=I+\frac{i}{z}\sigma_{3}Q+O(z^{-2})$ as $z\rightarrow\infty$.\\
The other case $z\rightarrow0$ can be proved similarly.
\end{proof}
Meanwhile, the asymptotic behavior of scattering matrix is also needed in the following analysis.

\noindent \textbf {Theorem 2.7}
\emph{The asymptotic properties of the scattering matrix are given as}
\begin{align}
S(z)=
\left\{
\begin{aligned}
&I+O(1/z),\quad\quad\quad\quad &z\rightarrow\infty,\\
&diag(q_{-}/q_{+},q_{+}/q_{-})+O(z),\quad &z\rightarrow0.
\end{aligned}
\right.
\end{align}
\begin{proof}
Based on the \eqref{2.14} and \emph{Theorem 2.6}, the conclusion of \emph{Theorem 2.7} can be proved easily.
\end{proof}
\subsection{Inverse scattering problem}
\subsubsection{Generalized Riemann-Hilbert problem}
In this subsection, we are going to construct a generalized Riemann-Hilbert problem(RHP). From the above analysis and \eqref{2.12}, we know the relationship between eigenfunctions which are analytic in $D_{+}$ and analytic in $D_{-}$. Based on the \eqref{2.14}, we can calculate that
\begin{align}\label{E-22}
\mu_{+,1}(z)&=s_{11}(z)\mu_{-,1}(z)+s_{21}(z)e^{2i\theta(z)}\mu_{-,2}(z), \\
\mu_{+,2}(z)&=s_{12}(z)e^{-2i\theta(z)}\mu_{-,1}(z)+s_{22}(z)\mu_{-,2}(z).
\end{align}
According to the analytical properties of $\mu_{\pm}$ and scattering matrix, we introduce the sectionally  meromorphic matrices
\begin{align}\label{Matrix}
M(x,t;z)=\left\{\begin{aligned}
&M^{+}(x,t;z)=\left(\mu_{-,1}(x,t;z),\frac{\mu_{+,2}(x,t;z)}{s_{22}(z)}\right), \quad z\in D^{+},\\
&M^{-}(x,t;z)=\left(\frac{\mu_{+,1}(x,t;z)}{s_{11}(z)},\mu_{-,2}(x,t;z)\right), \quad z\in D^{-}.
\end{aligned}\right.
\end{align}
Here the superscripts $\pm$ mean analyticity in $D_{+}$ and $D_{-}$, respectively. From \eqref{E-22}, we can obtain the relationship between $M^{+}$ and $M^{-}$ as follows
\begin{align}\label{E-23}
M^{+}(x,t;z)=M^{-}(x,t;z)(\mathbb{I}-G(x,t;z)),
\end{align}
where the jump matrix is
\begin{align*}
G(x,t;z)=e^{i\theta(z)\hat{\sigma}_{3}}\left(
\begin{array}{ccc}
  0 & -\tilde{\rho}(z) \\
  \rho(z) & \rho(z)\tilde{\rho}(z)
\end{array} \right). \notag
\end{align*}
Based on the \emph{Theorem 2.6} and \emph{Theorem 2.7}, the asymptotic behavior of $M^{\pm}$ can be obtained easily, i.e.,
\begin{align}
M^{\pm}(x,t;z)=\left\{
\begin{aligned}
\mathbb{I}+O(1/z), \quad z\rightarrow\infty,\\
-\frac{i}{z}\sigma_{3}Q_{-}+O(1), z\rightarrow0.
\end{aligned}
\right.
\end{align}
From the above analysis and \eqref{E-13} and \eqref{E-14}, we obtain a generalized RHP.

\noindent \textbf {Theorem 2.8}
\emph{The generalized Riemann-Hilbert problem}
\begin{itemize}
  \item  $M(x,t;z)$ is meromorphic in $C\setminus\Sigma$;
  \item  $M^{+}(x,t;z)=M^{-}(x,t;z)(\mathbb{I}-G(x,t;z))$,~~~$z\in\Sigma$;
  \item  $M(x,t;z)$ satisfies residue conditions at zero points $\{z| s_{11}(z)=s_{22}(z)=0\}$;
  \item  $M^{\pm}(x,t;z)\thicksim\mathbb{I}+O(1/z)$, ~~$z\rightarrow\infty$;
  \item  $M^{\pm}(x,t;z)\thicksim-\frac{i}{z}\sigma_{3}Q_{-}+O(1)$,~~$z\rightarrow0$.
\end{itemize}
Here, we introduce a transformation to make the following analysis more convenient. The transformation are that
\begin{align}\label{E-24}
\xi_{n}=z_{n}, \quad \xi_{N+n}=-\frac{q_{0}^{2}}{z^{*}_{n}}, \xi^{*}_{n}=z^{*}_{n},\quad \xi^{*}_{N+n}=-\frac{q_{0}^{2}}{z_{n}}.
\end{align}
So, the $\xi_{n}, (n=1,2,\cdots, 2N)$ are the poles in $D_{-}$ and $\xi^{*}_{n}, (n=1,2,\cdots, 2N)$ are the poles in $D_{+}$.
Then, the residue conditions can be translated into
\begin{align}
\mathop{Res}_{z=\xi_{n}}\left[\frac{\mu_{+,1}(z)}{s_{11}(z)}\right]&=
\frac{\mu_{+,1}(\xi_{n})}{s'_{11}(\xi_{n})}=C_{n}[\xi_{n}]e^{2i\theta(\xi_{n})}\mu_{-,2}(\xi_{n}), \\
\mathop{Res}_{z=\xi^{*}_{n}}\left[\frac{\mu_{+,2}(z)}{s_{22}(z)}\right]&=
\frac{\mu_{+,2}(\xi^{*}_{n})}{s'_{22}(\xi^{*}_{n})}=\tilde{C}_{n}[\xi^{*}_{n}]e^{-2i\theta(\xi^{*}_{n})}\mu_{-,1}(\xi^{*}_{n}).
\end{align}
Furthermore, we have
\begin{align}\label{R-1}
\mathop{Res}_{z=\xi^{*}_{n}}M^{+}=(0,\tilde{C}_{n}[\xi^{*}_{n}]e^{-2i\theta(\xi^{*}_{n})}\mu_{-,1}(\xi^{*}_{n})),\quad
n=1,2,\cdots,2N,\notag \\
\mathop{Res}_{z=\xi_{n}}M^{-}=(C_{n}[\xi_{n}]e^{2i\theta(\xi_{n})}\mu_{-,2}(\xi_{n}),0), \quad
n=1,2,\cdots,2N.
\end{align}
Now, we need to solve the the Riemann-Hilbert problem. Firstly, via subtracting out the asymptotic behavior and the pole contributions, we obtain a regular RHP. Then, we have
\begin{align}\label{RR-1}
\begin{split}
&M^{+}(x,t;z)-\mathbb{I}+\frac{i}{z}\sigma_{3}Q_{-}-\sum_{n=1}^{2N}\frac
{\mathop{Res}_{z=\xi^{*}_{n}}M^{+}(z)}{z-\xi^{*}_{n}}-\sum_{n=1}^{2N}\frac
{\mathop{Res}_{z=\xi_{n}}M^{-}(z)}{z-\xi_{n}}\\=&
M^{-}(x,t;z)-\mathbb{I}+\frac{i}{z}\sigma_{3}Q_{-}-\sum_{n=1}^{2N}\frac
{\mathop{Res}_{z=\xi^{*}_{n}}M^{+}(z)}{z-\xi^{*}_{n}}-\sum_{n=1}^{2N}\frac
{\mathop{Res}_{z=\xi_{n}}M^{-}(z)}{z-\xi_{n}}-M^{-}(z)G(z).
\end{split}
\end{align}
From the above analysis, it is easy to know that the left side of \eqref{RR-1} is analytic in $D_{+}$ and the right side of \eqref{RR-1}, apart from the item $M^{-}(z)G(z)$, is analytic in $D_{-}$. Meanwhile, both sides of the equation \eqref{RR-1} have the asymptotic behavior that are $O(1/z)(z\rightarrow\infty)$ and $O(1)(z\rightarrow0)$. From the \emph{Theorem 2.7}, we can obtain the asymptotic behavior of $G(x,t;s)$, i.e., $O(1/z)(z\rightarrow\infty)$ and $O(1)(z\rightarrow0)$. Then, we introduce the Cauchy projectors $P_{\pm}$ over $\Sigma$ by
\begin{align}\label{Cauchy}
P_{\pm}[f](z)=\frac{1}{2i\pi}\int_{\Sigma}\frac{f(\zeta)}{\zeta-(z\pm i0)}\,d\zeta,
\end{align}
where the $\int_{\Sigma}$ implies the integral along the oriented contour shown in Fig. 1 and the $z\pm i0$ mean the limit is taken from the left/right of $z(z\in\Sigma)$. Using this Cauchy projectors, then, we can obtain the solution of the RHP as follows
\begin{align}\label{E-25}
\begin{split}
M(x,t;z)=&\mathbb{I}-\frac{i}{z}\sigma_{3}Q_{-}+\sum_{n=1}^{2N}\frac
{\mathop{Res}_{z=\xi^{*}_{n}}M^{+}(z)}{z-\xi^{*}_{n}}+\sum_{n=1}^{2N}\frac
{\mathop{Res}_{z=\xi_{n}}M^{-}(z)}{z-\xi_{n}}\\
&+\frac{1}{2i\pi}\int_{\Sigma}\frac{M(x,t;s)^{-}G(x,t;s)}{s-z}\,ds,\quad
z\in\mathbb{C}\setminus\Sigma,
\end{split}
\end{align}
where the $\int_{\Sigma}$ implies the contour shown in Fig. 1.
\subsubsection{Reconstruct the formula for potential}
To reduce to an algebraic integral system, the expression of the residue which emerge in \eqref{E-25} is needed. Coincidentally, the explicit $\mathop{Res}_{z=\xi_{n}}M^{+}$ and $\mathop{Res}_{z=\xi^{*}_{n}}M^{-}$ have been shown in \eqref{R-1}. We, therefore, evaluate the second column of the \eqref{E-25} at $z=\xi_{n}$ in $D_{-}$ and  get
\begin{align}\label{E-26}
u_{-,2}(x,t;\xi_{n})=\left(\begin{array}{cc}
                       -iq_{-}/\xi_{n} \\
                        1
                     \end{array}\right)
+\left(\sum_{k=1}^{2N}\frac{\mathop{Res}_{z=\xi^{*}_{k}}M^{+}(z)}{\xi_{n}-\xi^{*}_{k}}\right)_{2}
+\frac{1}{2i\pi}\int_{\Sigma}\frac{(M^{-}G)_{2}(x,t;\xi)}{s-\xi_{n}}\,ds \notag \\
=\left(\begin{array}{cc}
                       -iq_{-}/\xi_{n} \\
                        1
                     \end{array}\right)+\sum_{k=1}^{2N}
\frac{\tilde{C}_{k}[\xi^{*}_{k}]e^{-2i\theta(\xi^{*}_{k})}}{\xi_{n}-\xi^{*}_{k}}
u_{-,1}(x,t;\xi^{*}_{k})
+\frac{1}{2i\pi}\int_{\Sigma}\frac{(M^{-}G)_{2}(x,t;\xi)}{s-\xi_{n}}\,ds,
\end{align}
for $n=1,2,\cdots,2N$. Similarly, evaluating the first column of the \eqref{E-25} at $z=\xi^{*}_{n}$ in $D_{+}$ , we have
\begin{align}\label{E-27}
u_{-,1}(x,t;\xi^{*}_{n})=\left(\begin{array}{cc}
                            1 \\
                        -iq^{*}_{-}/\xi^{*}_{n}
                     \end{array}\right)+\sum_{j=1}^{2N}
\frac{C_{j}[\xi_{j}]e^{2i\theta(\xi_{j})}}{\xi_{n}^{*}-\xi_{j}}
u_{-,2}(x,t;\xi_{j})
+\frac{1}{2i\pi}\int_{\Sigma}\frac{(M^{-}G)_{1}(x,t;\xi)}{s-\xi^{*}_{n}}\,ds,
\end{align}
for $n=1,2,\cdots,2N$. The $(M^{-}G)_{j}$ means the $j-th$ column of $(M^{-}G)$. Additionally, a closed algebraic integral system for the solution of the RHP can be obtained by evaluating the $M^{-}$ through \eqref{E-25} with the equation \eqref{E-26} and \eqref{E-27}. Then, considering the \eqref{E-25}, the asymptotic behavior can be obtained as
\begin{align}\label{E-28}
M(x,t;z)=&\mathbb{I}+\frac{1}{z}\{-i\sigma_{3}Q_{-}+\sum_{n=1}^{2N}\mathop{Res}_{z=\xi^{*}_{n}}M^{+}(z)
+\sum_{n=1}^{2N}\mathop{Res}_{z=\xi_{n}}M^{-}(z) \notag \\
&-\frac{1}{2i\pi}\int_{\Sigma}M^{-}(x,t;s)G(x,t;s)\,ds\}+O(z^{-2}),\quad
z\rightarrow\infty.
\end{align}
Finally, through taking $M=M^{-}$ and combining the $1,2$ element of \eqref{E-28} and the \emph{Theorem 2.6}, the reconstruction formula for the potential can be acquired as
\begin{align}\label{E-29}
q(x,t)=-q_{-}-i\sum_{n=1}^{2N}\tilde{C}_{n}[\xi^{*}_{n}]e^{-2i\theta(x,t;\xi^{*}_{n})}\mu_{-,11}(x,t;\xi^{*}_{n})
+\frac{1}{2\pi}\int_{\Sigma}(M^{-}(x,t;s)G(x,t;s))_{12}\,ds.
\end{align}
\subsubsection{Trace formulate and theta condition}
Based on the \emph{Theorem 2.3}, we know the analytic properties of $s_{11}$ and $s_{22}$. The discrete spectrum we have analysed in the subsection \emph{2.1.5} are that $z_{n}, -\frac{q_{0}^{2}}{z_{n}^{*}}, z_{n}^{*}, -\frac{q_{0}^{2}}{z_{n}}, n=1,2,\cdots,N.$ Now, we construct the following function
\begin{align}\label{TT-1}
\zeta^{-}_{1}(z)=s_{11}(z)\prod_{n=1}^{N}\frac{(z-z_{n}^{*})(z+q_{0}^{2}/z_{n})}
{(z-z_{n})(z+q_{0}^{2}/z_{n}^{*})},\notag\\
\zeta^{+}_{1}(z)=s_{22}(z)\prod_{n=1}^{N}\frac{(z-z_{n})(z+q_{0}^{2}/z_{n}^{*})}
{(z-z_{n}^{*})(z+q_{0}^{2}/z_{n})}.
\end{align}
Therefore, the $\zeta^{-}_{1}$ and $\zeta^{+}_{1}$ are analytic in $D_{\mp}$ corresponding to the analytic properties of $s_{11}$ and $s_{22}$ and have no zeros. Meanwhile, according to the asymptotic behavior of $S(z)$ in \emph{Theorem 2.7}, we obtain that $\zeta^{\mp}_{1}(z)\rightarrow1$ as $z\rightarrow\infty$. Then, considering that $\det S(z)=1$ and the expression of the reflection coefficients, we obtain that
\begin{align}\label{TT-2}
\zeta^{-}_{1}(z)\zeta_{1}^{+}(z)=\frac{1}{1-\rho(z)\tilde{\rho}(z)},\quad z\in\Sigma.
\end{align}
Furthermore, via taking the logarithm of the above formula and applying the Plemelj's formulae and Cauchy projectors, we obtain that
\begin{align}\label{TT-3}
\log\zeta^{\mp}_{1}(z)=\pm\frac{1}{2\pi i}\int_{\Sigma}
\frac{\log[1-\rho(s)\tilde{\rho}(s)]}{s-z}\,ds,\quad z\in D_{\mp}.
\end{align}
Then, the trace formula can be obtained by substituting the \eqref{TT-3} into \eqref{TT-1} as
\begin{align}
s_{11}(z)&=exp\left(\frac{1}{2\pi i}\int_{\Sigma}\frac{\log[1-\rho(s)\tilde{\rho}(s)]}{s-z}
\,ds\right)\prod_{n=1}^{N}\frac{(z-z_{n})(z+q_{0}^{2}/z_{n}^{*})}
{(z-z_{n}^{*})(z+q_{0}^{2}/z_{n})},\label{TT-4}\\
s_{22}(z)&=exp\left(-\frac{1}{2\pi i}\int_{\Sigma}\frac{\log[1-\rho(s)\tilde{\rho}(s)]}{s-z}
\,ds\right)\prod_{n=1}^{N}\frac{(z-z_{n}^{*})(z+q_{0}^{2}/z_{n})}
{(z-z_{n})(z+q_{0}^{2}/z_{n}^{*})}.\label{TT-5}
\end{align}
Based on the \emph{Theorem 2.7}, we know that $s_{11}(z)\rightarrow q_{-}/q_{+}$ as $z\rightarrow0$. At the same time, since
\begin{align}
\prod_{n=1}^{N}\frac{(z-z_{n}^{*})(z+q_{0}^{2}/z_{n})}
{(z-z_{n})(z+q_{0}^{2}/z_{n}^{*})}\rightarrow 1, \quad as \quad z\rightarrow0, \notag
\end{align}
we have the theta condition
\begin{align}
\arg\frac{q_{-}}{q_{+}}=\frac{1}{2\pi}\int_{\Sigma}
\frac{\log[1-\rho(s)\tilde{\rho}(s)]}{s}\,ds+4\sum_{n=1}^{N}\arg z_{n}.
\end{align}

\subsubsection{Reflection-less potentials}
In this subsection, it is interesting to study a type solutions which are that the reflection coefficients $\rho(z)$ and $\tilde{\rho}(z)$ disappear. So, the jump matrix from $M^{-}$ to $M^{+}$ also vanishes i.e., $G(x,t;z)=0$. Under this conditions, we can acquire that
\begin{align}\label{E-30}
u_{-,12}(x,t;\xi_{j})
=-\frac{iq_{-}}{\xi_{j}}+\sum_{k=1}^{2N}
\frac{\tilde{C}_{k}[\xi^{*}_{k}]e^{-2i\theta(\xi^{*}_{k})}}{\xi_{j}-\xi^{*}_{k}}
u_{-,11}(x,t;\xi^{*}_{k}), \notag\\
u_{-,11}(x,t;\xi^{*}_{n})=1+\sum_{j=1}^{2N}
\frac{C_{j}[\xi_{j}]e^{2i\theta(\xi_{j})}}{\xi_{n}^{*}-\xi_{j}}
u_{-,12}(x,t;\xi_{j}).
\end{align}
from the \eqref{E-26} and \eqref{E-27}. For convenience, we introduce a transformation
\begin{align}\label{E-31}
c_{j}(x,t;z)=\frac{C_{j}[\xi_{j}]e^{2i\theta(\xi_{j})}}{z-\xi_{j}},\quad j=1,2,\cdots,2N.
\end{align}
Then, comprehensive \eqref{E-30} and \eqref{E-31}, we can get the following formula
\begin{align}\label{E-32}
u_{-,11}(x,t;\xi^{*}_{n})=1-iq_{-}\sum_{j=1}^{2N}\frac{c_{j}(\xi^{*}_{n})}{\xi_{j}}-
\sum_{j=1}^{2N}\sum_{k=1}^{2N}c_{j}(\xi_{n}^{*})c_{k}^{*}(\xi_{j}^{*})\mu_{-,11}(x,t;\xi_{k}^{*}).
\end{align}
In order to better express the solution, we introduce that
\begin{gather}
X_{n}=\mu_{-,11}(x,t;\xi_{n}),\quad X=(X_{1},\cdots,X_{2N})^{T}, \notag \\
U_{n}=1-iq_{+}\sum_{j=1}^{2N}\frac{c_{j}(\xi^{*}_{n})}{\xi_{j}}, \quad U=(U_{1},\cdots,U_{2N})^{T},\notag\\
P=(P_{n,k})_{2N\times2N},\quad P_{n,k}=\sum_{j=1}^{2N}c_{j}(\xi_{n}^{*})c_{k}^{*}(\xi_{j}^{*}),\notag\\
n,k=1,2,\cdots,2N, \notag
\end{gather}
where the superscript $T$ implies transposition.
Then, letting $M=\mathbb{I}+P$, we have $MX=U$. So, the solution for the potential can be expressed as
\begin{align}\label{E-33}
q(x,t)=-q_{-}+i\frac{\det M^{\sharp}}{\det M},
\end{align}
where $\det M^{\sharp}=\left(\begin{array}{cc}
                      0 & \Upsilon \\
                      U & M
                    \end{array}\right)$,
$\Upsilon=(\Upsilon_{1},\cdots,\Upsilon_{2N})$ and $\Upsilon_{n}=\tilde{C}_{n}[\xi^{*}_{n}]e^{-2i\theta(x,t;\xi^{*}_{n})}$ $(n=1,2,\cdots,2N)$.
\subsection{Soliton solutions}
In this section, for further studying the dynamic behavior of the soliton solutions, we make a graphical analysis. Based on the equation \eqref{E-33} and selecting appropriate parameters, some figures are illustrated. Firstly, we consider the case that the dimensionless parameter $\epsilon$ is zero. Therefore, the HDNLS equation is reduced to the classical nonlinear Schr\"{o}dinger equation. Then, when $N=1$, we apply the appropriate parameters and get the following images.\\

{\rotatebox{0}{\includegraphics[width=3.6cm,height=3.0cm,angle=0]{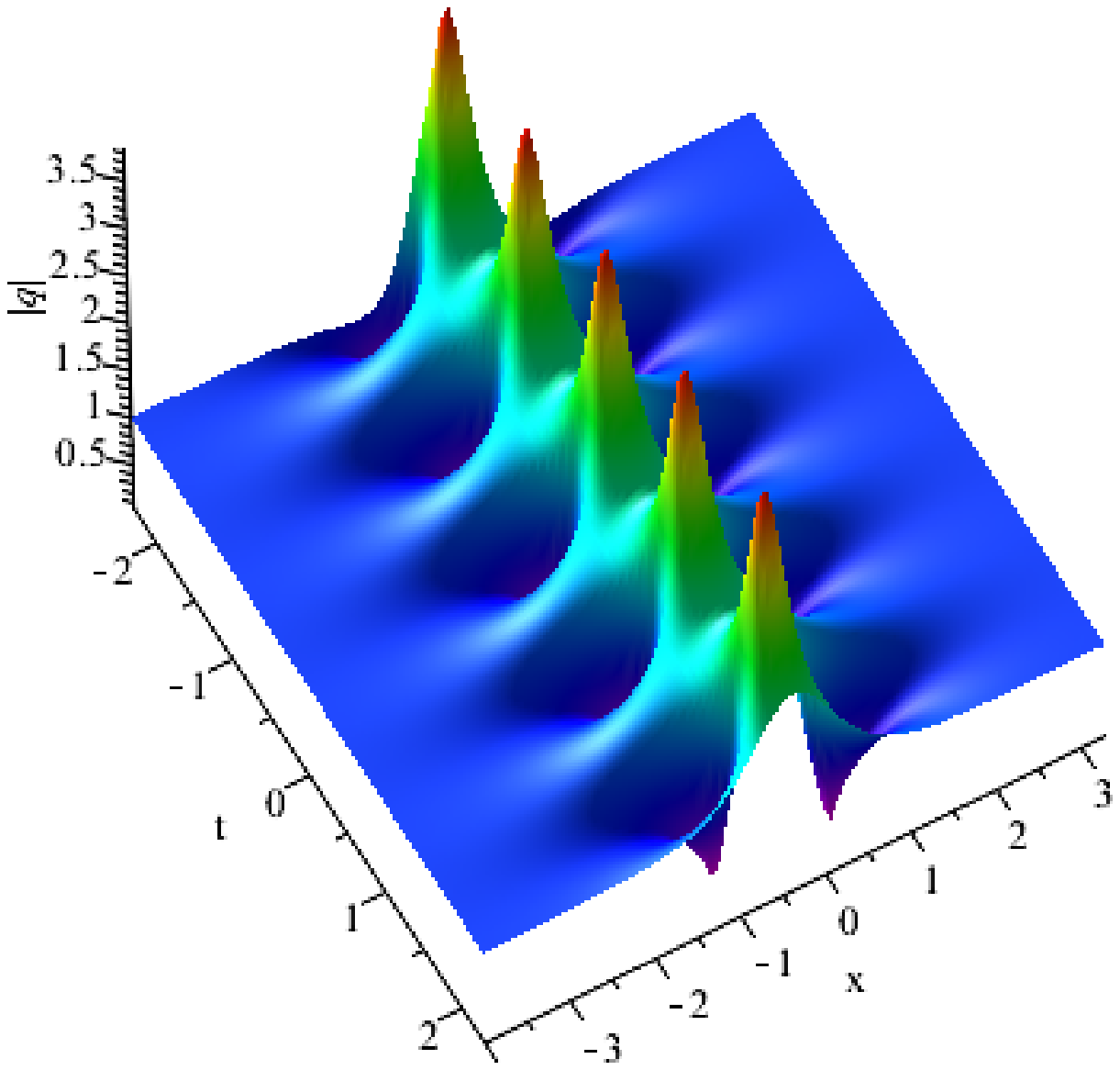}}}
~~~~
{\rotatebox{0}{\includegraphics[width=3.6cm,height=3.0cm,angle=0]{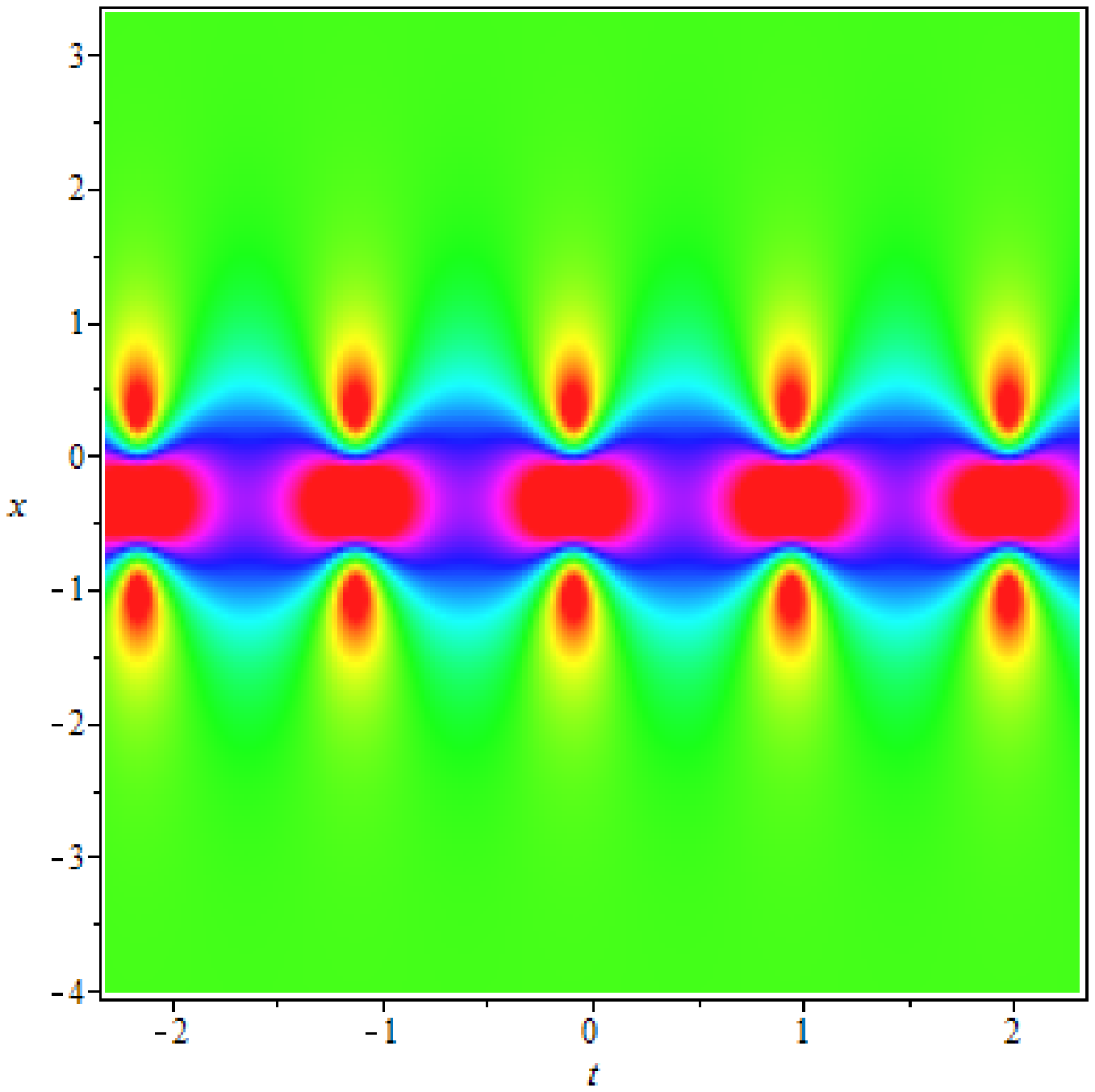}}}
~~~~
{\rotatebox{0}{\includegraphics[width=3.6cm,height=3.0cm,angle=0]{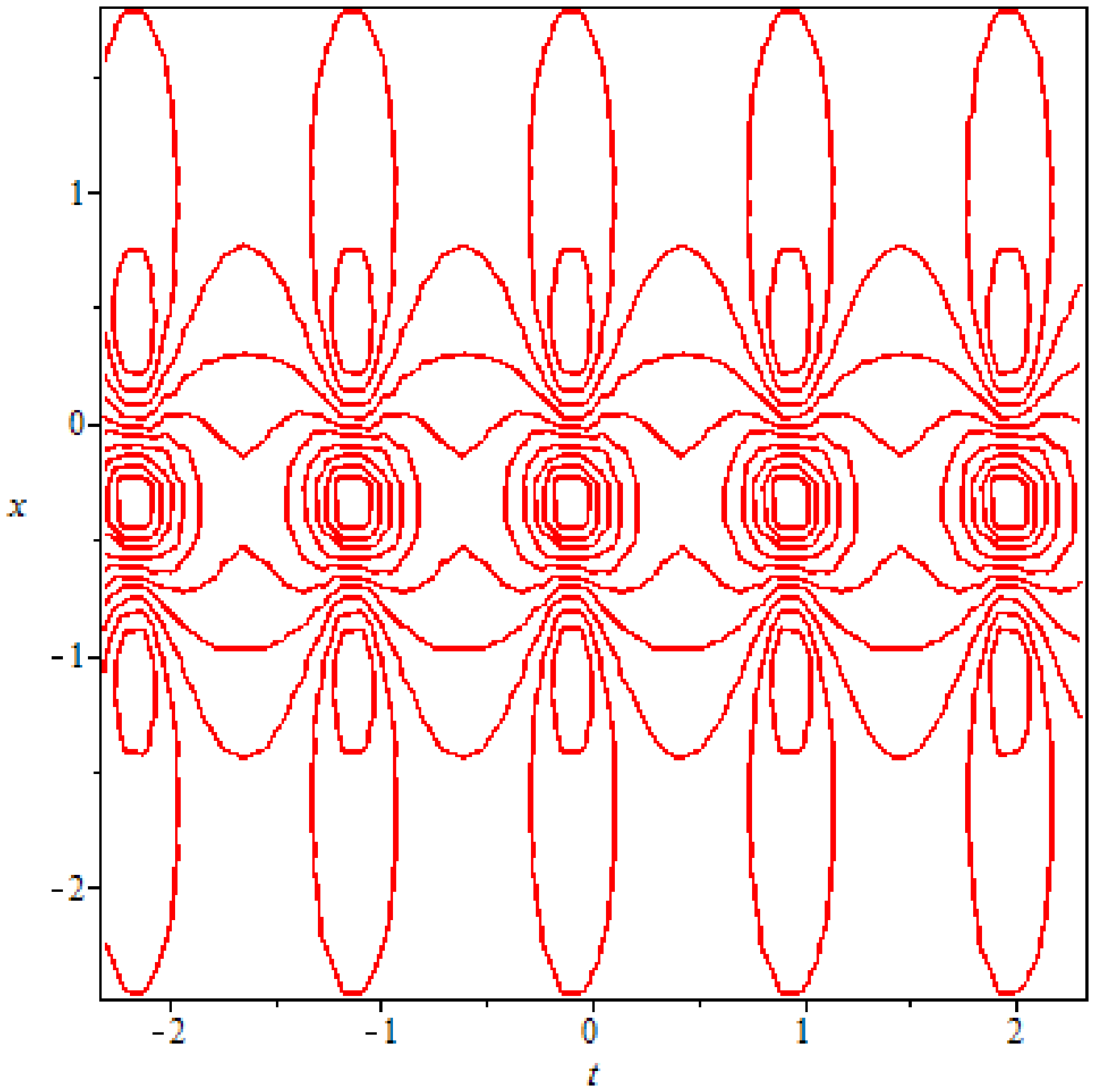}}}

$\qquad~~~~~~~~~(\textbf{a})\qquad \ \qquad\qquad\qquad\qquad~~~(\textbf{b})
\qquad\qquad\qquad\qquad\qquad~(\textbf{c})$\\
\noindent { \small \textbf{Figure 2.} (Color online) Plots of the breather solution of the equation  with the parameters $\epsilon=0$, $q_{-}=1$, $\xi_{1}=-2.5i$ and $b_{1}=e^{2+i}$.
$\textbf{(a)}$: the breather solution ,
$\textbf{(b)}$: the density plot ,
$\textbf{(c)}$: the contour line of the breather solution.} \\

From the illustrations, it is easy to find that the propagation of the solution is parallel to the time axis. This solution is called the stationary breather solution. Then, we change the parameter $\xi$ and obtain the following graphs.\\

{\rotatebox{0}{\includegraphics[width=3.6cm,height=3.0cm,angle=0]{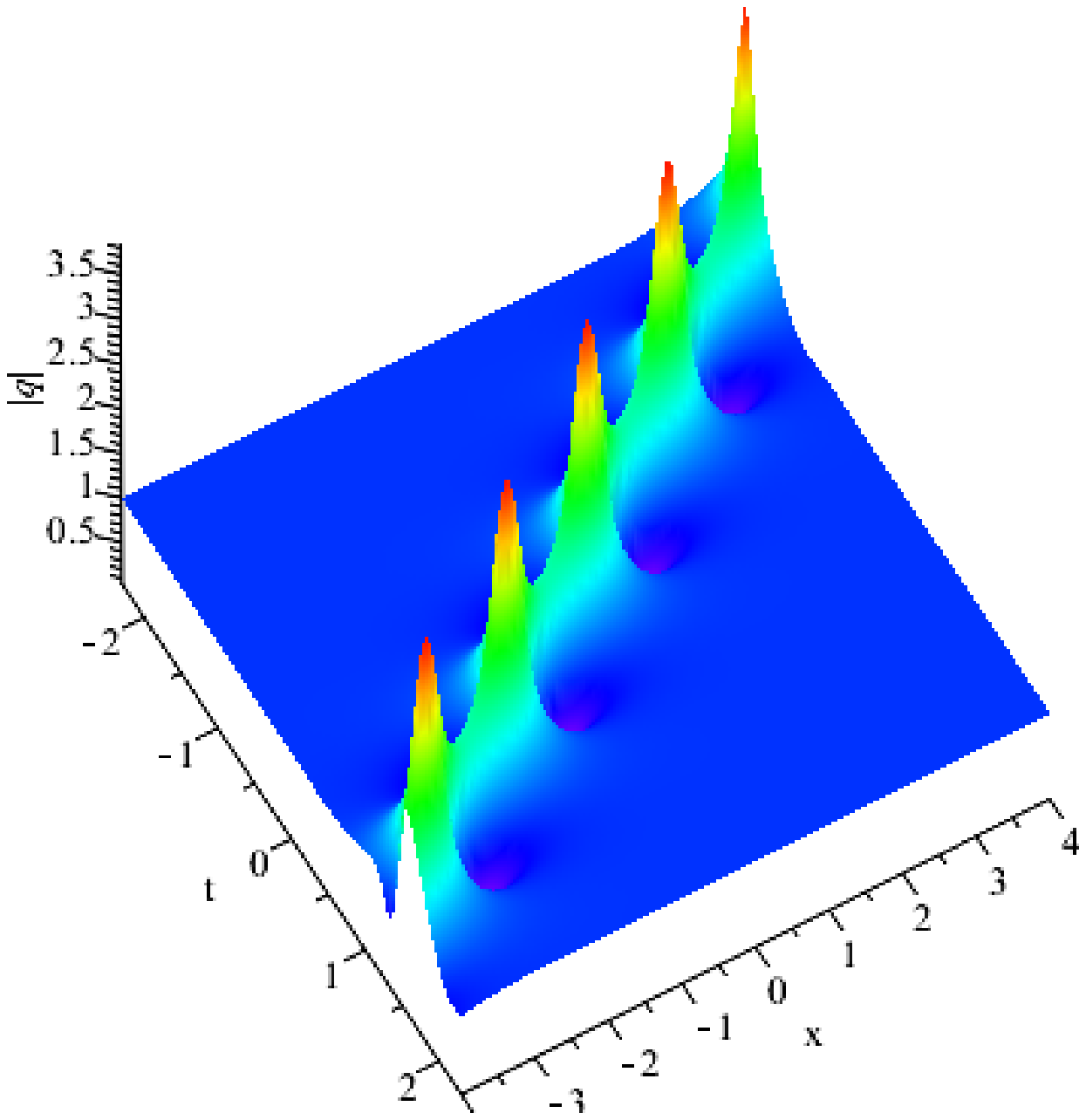}}}
~~~~
{\rotatebox{0}{\includegraphics[width=3.6cm,height=3.0cm,angle=0]{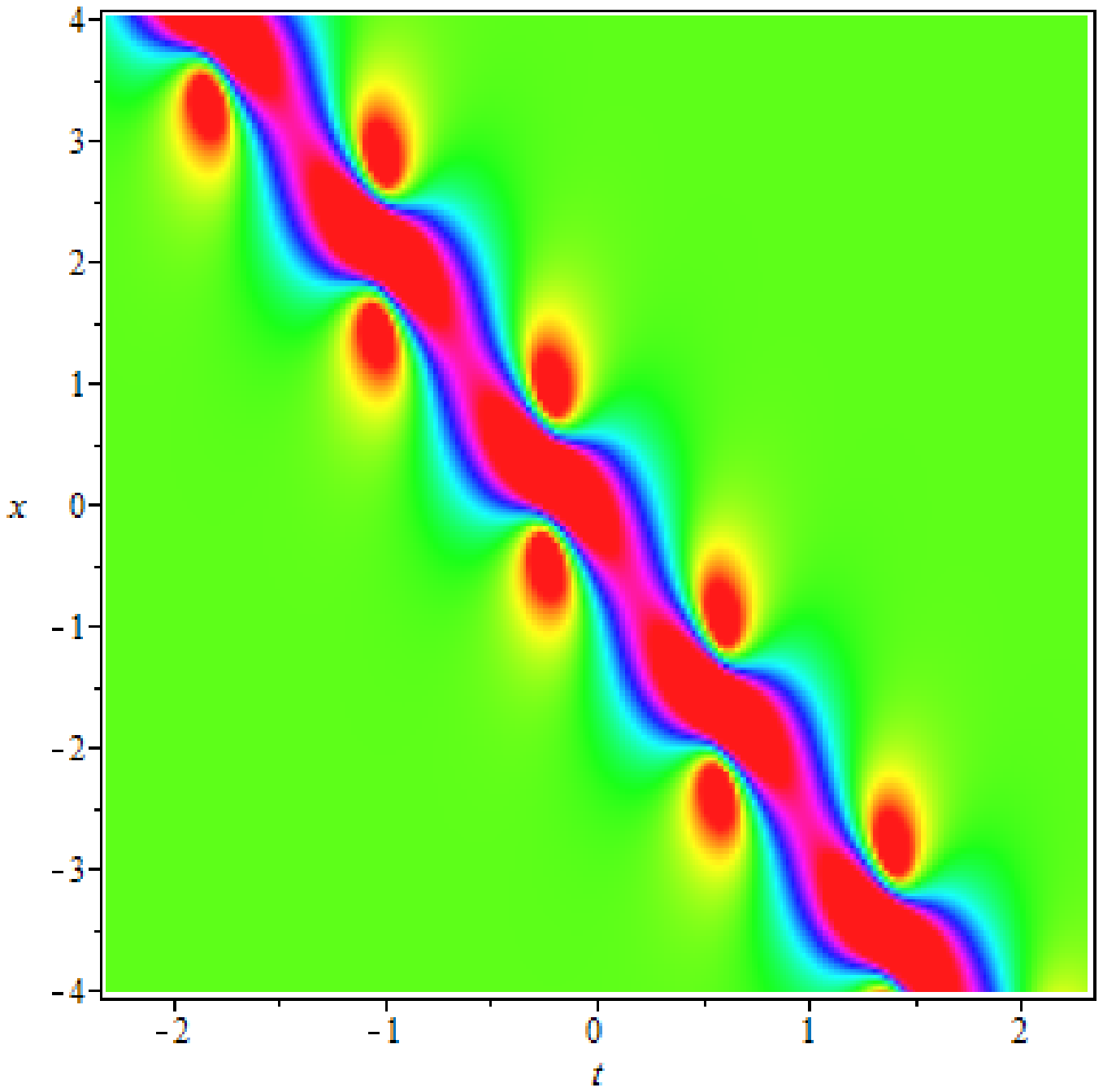}}}
~~~~
{\rotatebox{0}{\includegraphics[width=3.6cm,height=3.0cm,angle=0]{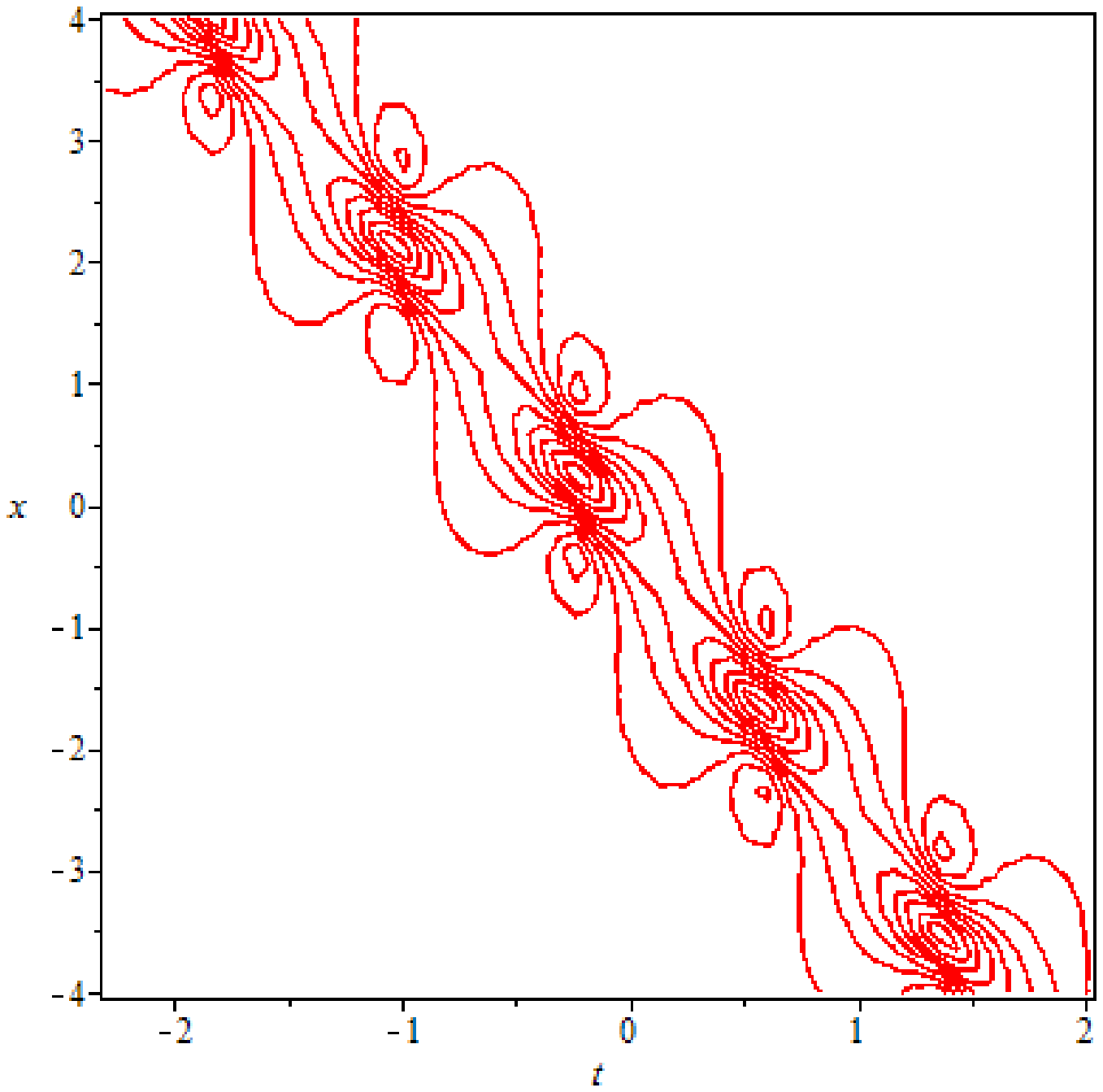}}}

$\qquad~~~~~~~~~(\textbf{a})\qquad \ \qquad\qquad\qquad\qquad~~~(\textbf{b})
\qquad\qquad\qquad\qquad\qquad~(\textbf{c})$\\
\noindent { \small \textbf{Figure 3.} (Color online) Plots of the breather solution of the equation  with the parameters $\epsilon=0$, $q_{-}=1$, $\xi_{1}=1-2.5i$ and $b_{1}=e^{2+i}$.
$\textbf{(a)}$: the breather solution,
$\textbf{(b)}$: the density plot ,
$\textbf{(c)}$: the contour line of the breather solution.}\\

It is obvious that the propagation of the solution is parallel to neither the $x$-axis nor the time axis. This solution can be called the non-stationary breather soliton solution. By comparing Figure 2. and Figure 3, we can find that when the discrete spectrum $\xi_{n}$ are pure imaginary, the stationary breather solution can be obtained. While, when the discrete spectrum $\xi_{n}$ have real parts, the non-stationary breather soliton solution is generated.

Furthermore, we change the boundary value $q_{-}$ and get the following graphs.\\

{\rotatebox{0}{\includegraphics[width=3.6cm,height=3.0cm,angle=0]{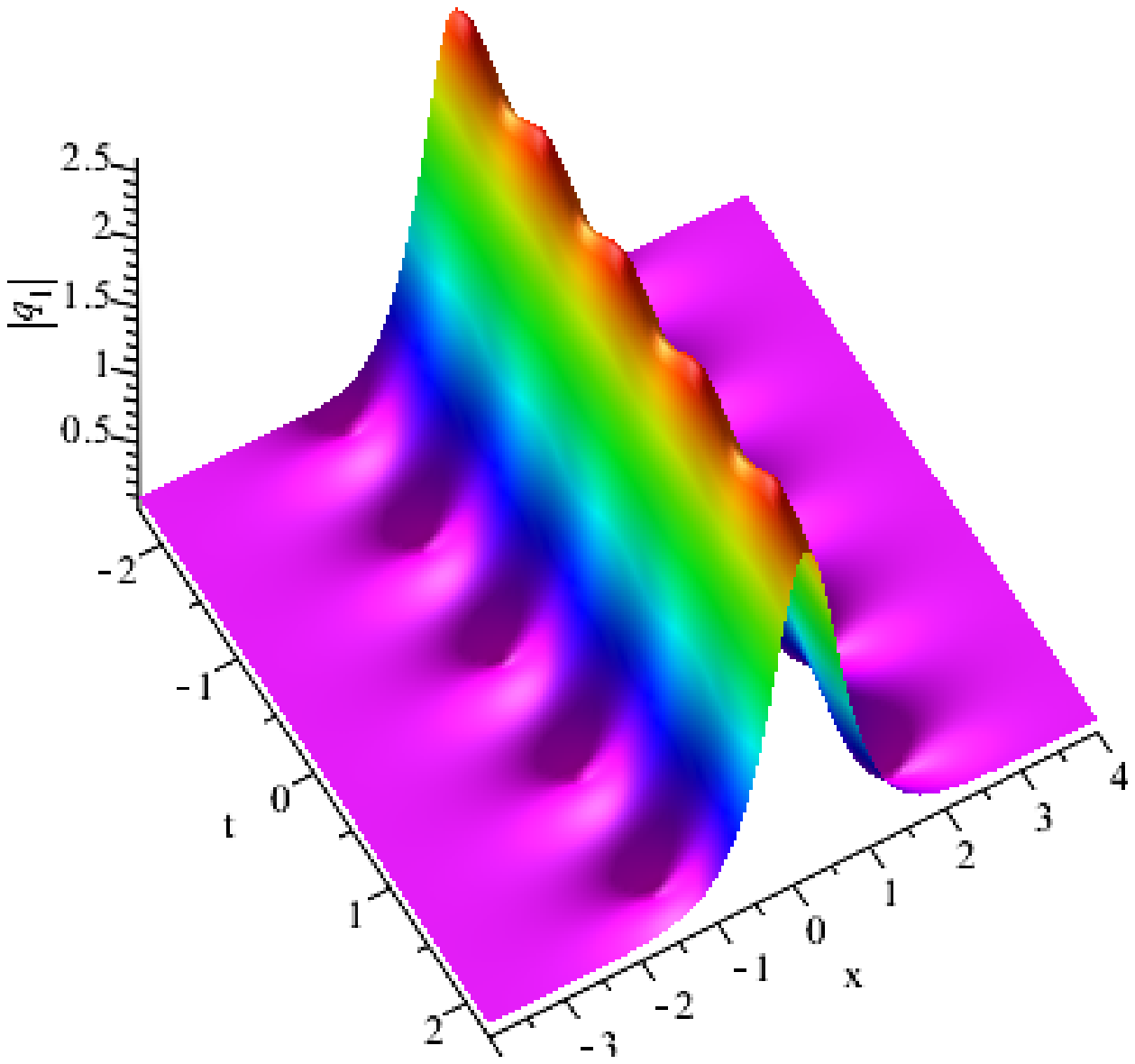}}}
~~~~
{\rotatebox{0}{\includegraphics[width=3.6cm,height=3.0cm,angle=0]{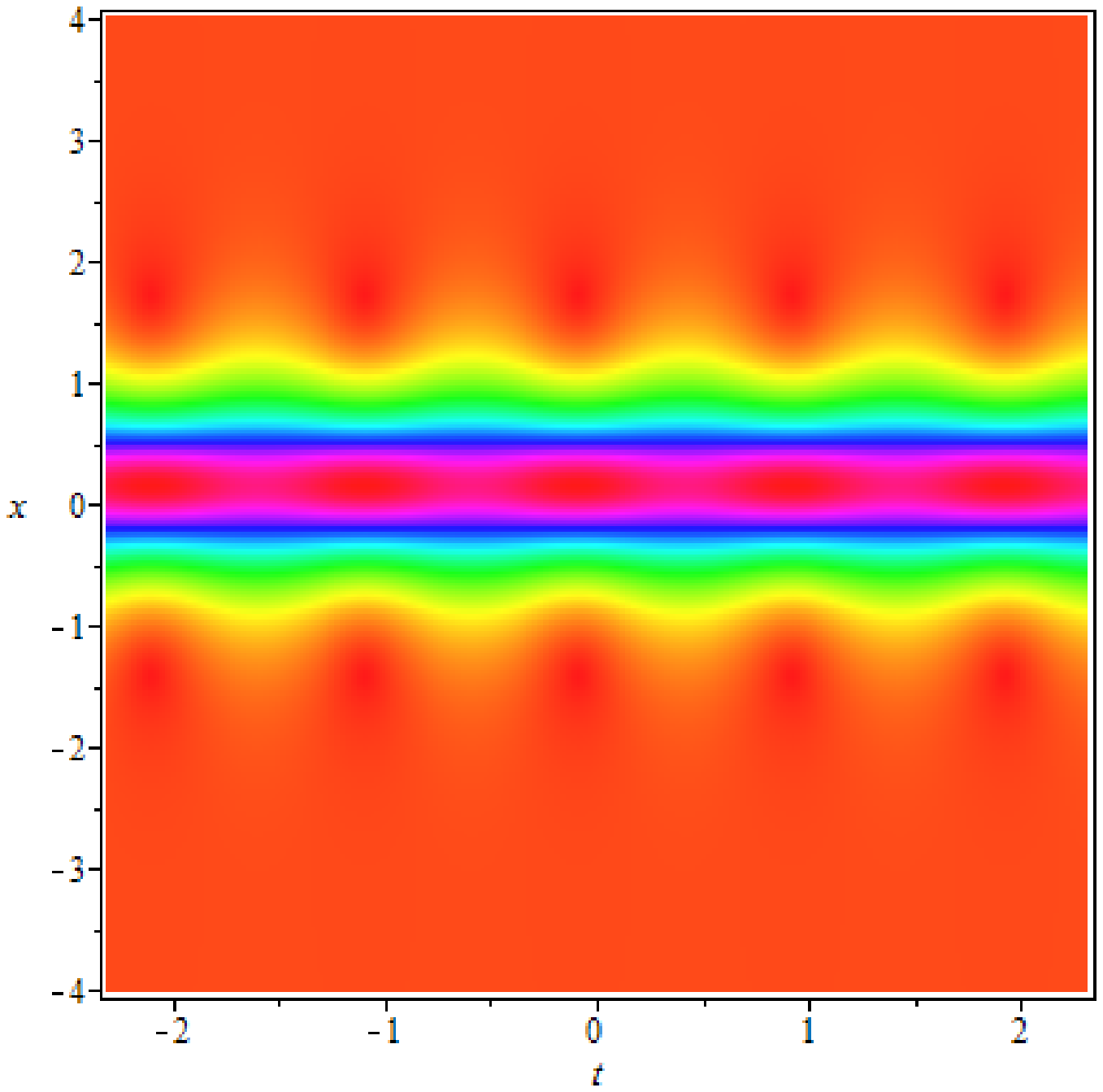}}}
~~~~
{\rotatebox{0}{\includegraphics[width=3.6cm,height=3.0cm,angle=0]{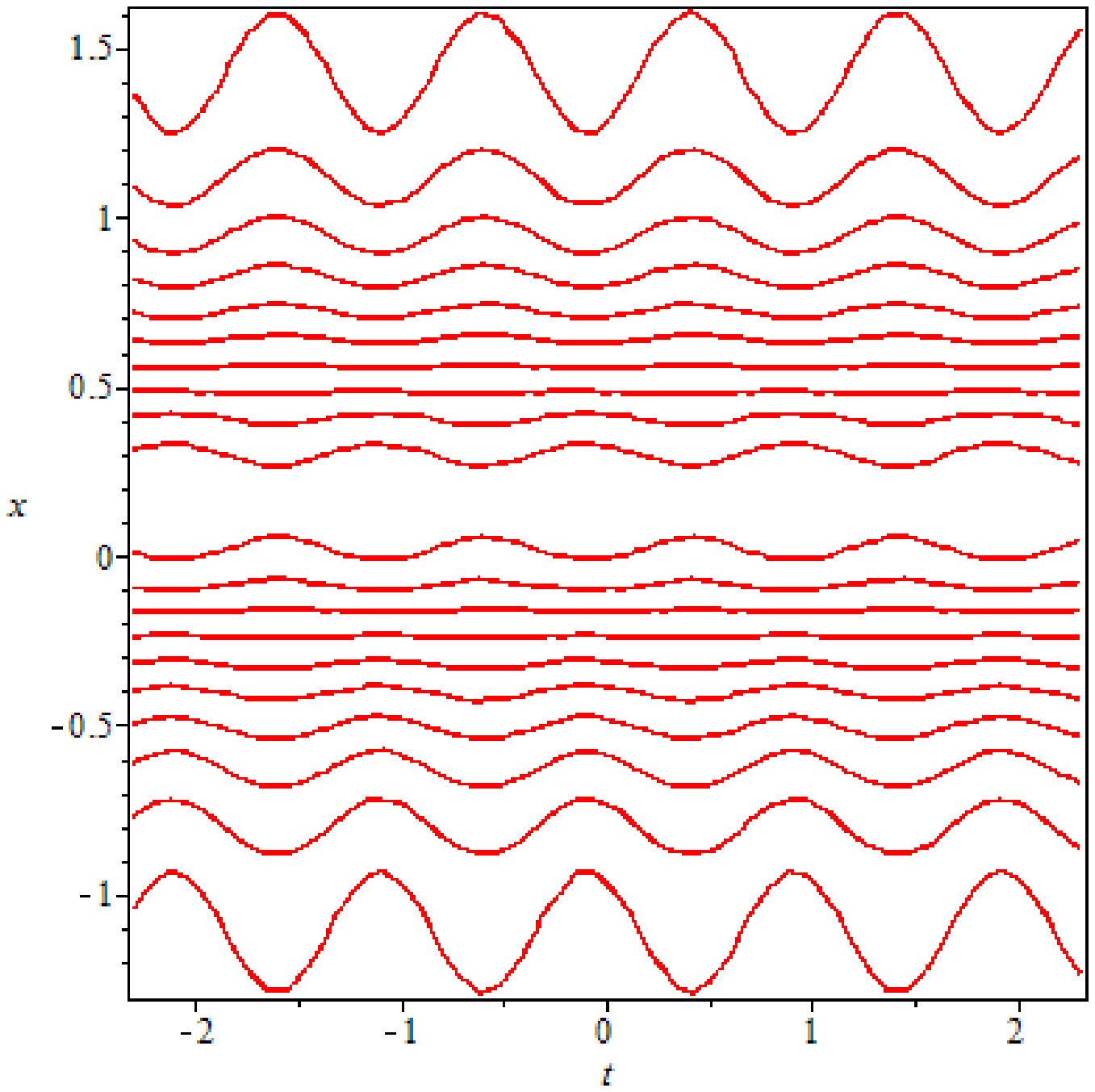}}}

$\qquad~~~~~~~~~(\textbf{a})\qquad \ \qquad\qquad\qquad\qquad~~~(\textbf{b})
\qquad\qquad\qquad\qquad\qquad~(\textbf{c})$\\

{\rotatebox{0}{\includegraphics[width=3.6cm,height=3.0cm,angle=0]{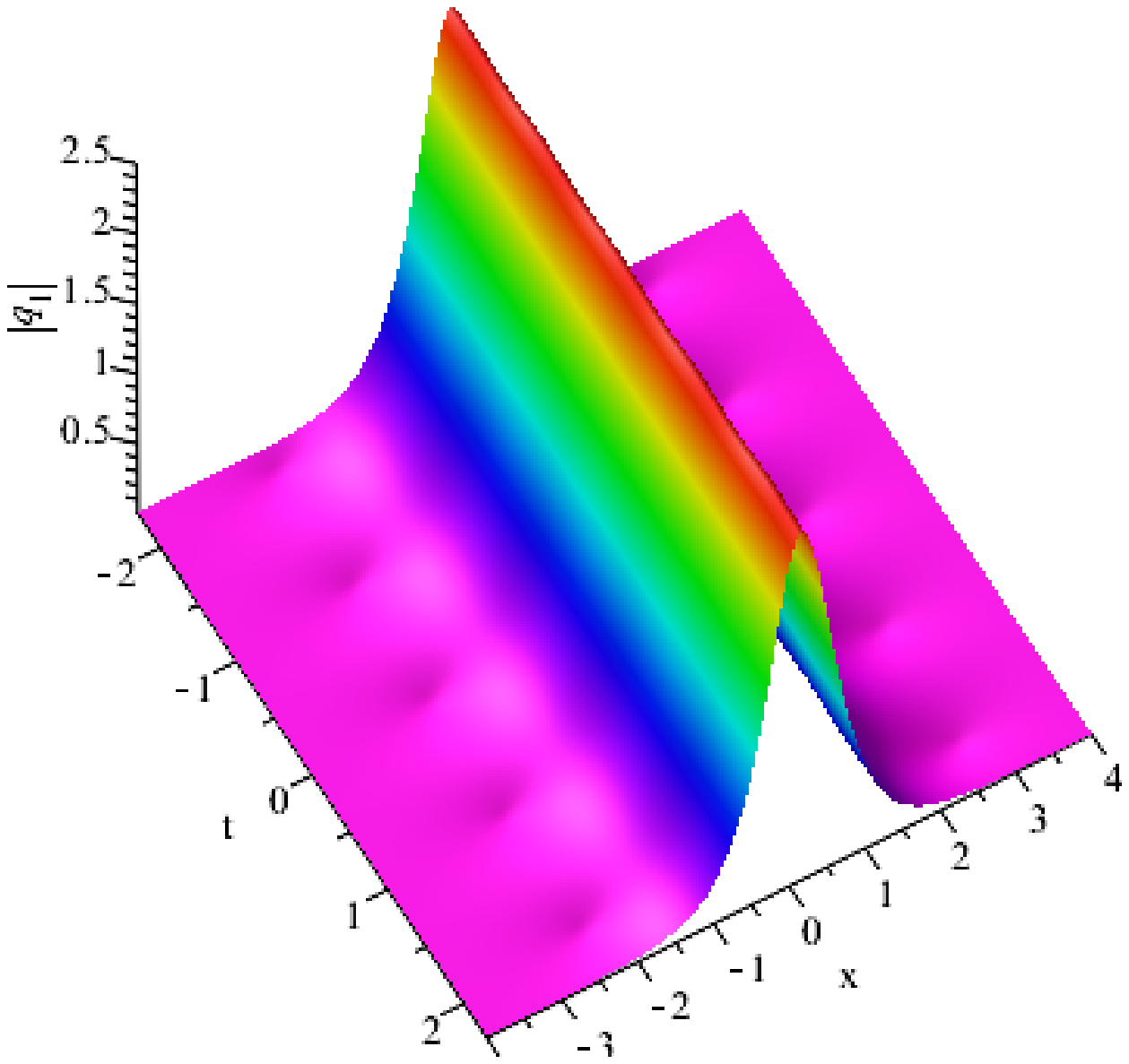}}}
~~~~
{\rotatebox{0}{\includegraphics[width=3.6cm,height=3.0cm,angle=0]{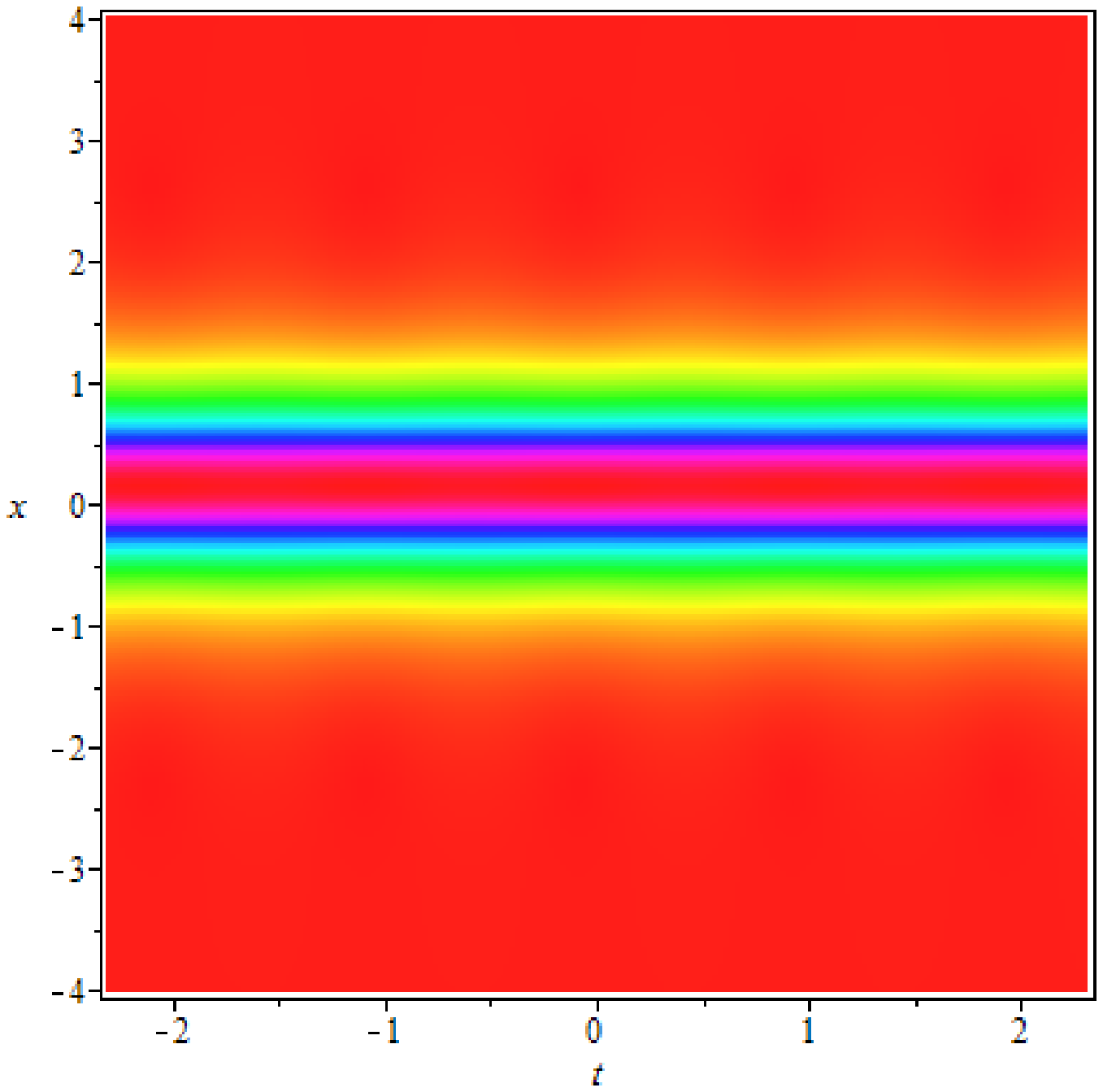}}}
~~~~
{\rotatebox{0}{\includegraphics[width=3.6cm,height=3.0cm,angle=0]{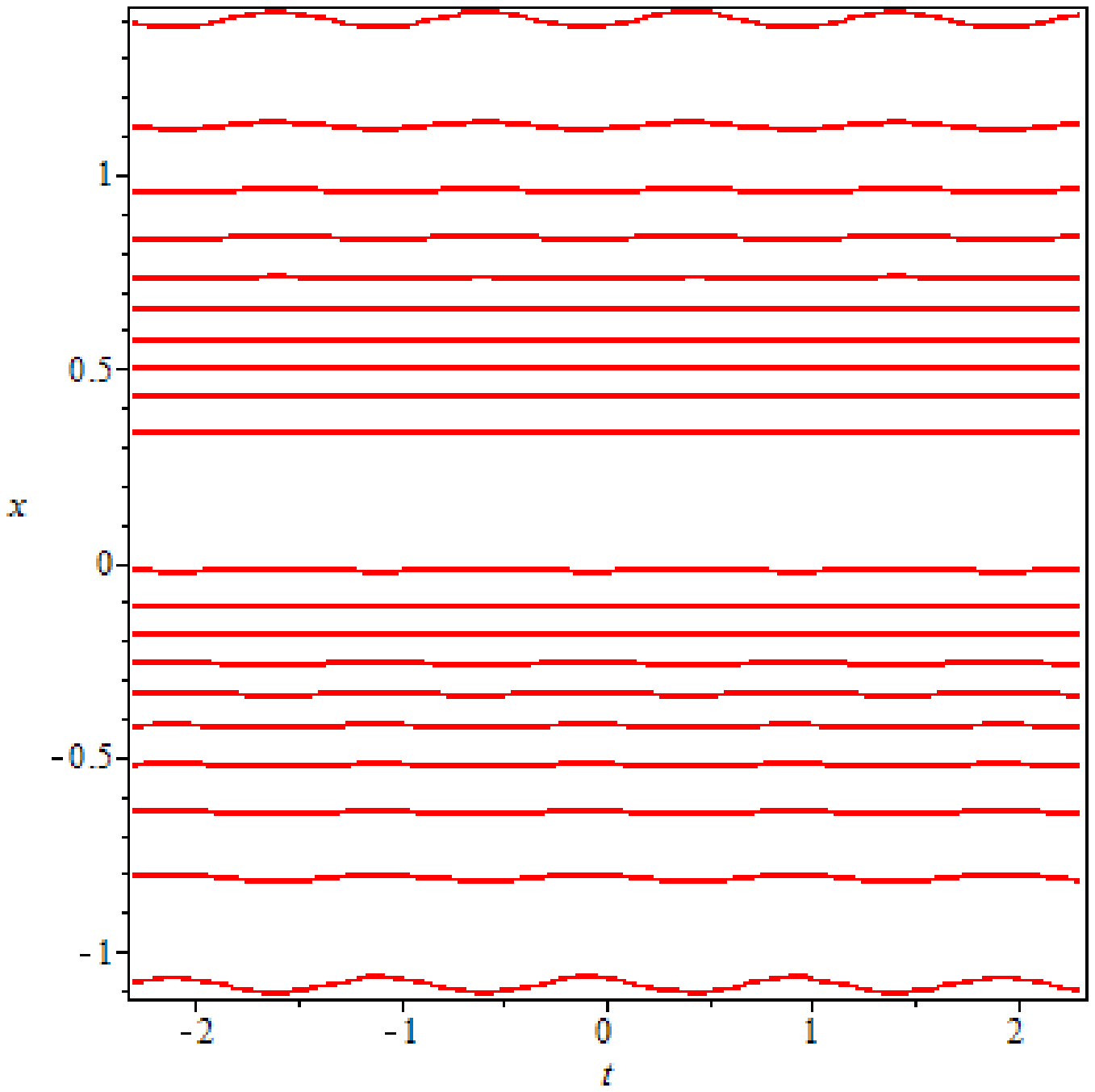}}}

$\qquad~~~~~~~~~(\textbf{d})\qquad \ \qquad\qquad\qquad\qquad~~~(\textbf{e})
\qquad\qquad\qquad\qquad\qquad~(\textbf{f})$\\
\noindent { \small \textbf{Figure 4.} (Color online) Plots of the soliton solution of the equation  with the parameters $\epsilon=0$, $\xi_{1}=-2.5i$ and $b_{1}=e^{2+i}$.
$\textbf{(a)}$: the soliton solution with $q_{-}=0.1$,
$\textbf{(b)}$: the density plot corresponding to $(a)$,
$\textbf{(c)}$: the contour line corresponding to $(a)$,
$\textbf{(d)}$: the soliton solution with $q_{-}=0.01$,
$\textbf{(e)}$: the density plot corresponding to $(d)$,
$\textbf{(f)}$: the contour line corresponding to $(d)$.}\\

Figure 4. reveals that when the boundary value $q_{-}$ becomes smaller and smaller, the stationary breather solution is getting closer a bell soliton solution. Of course, it just looks like a a bell soliton solution and some properties might be different, but it is really an interesting phenomenon.

Now, we consider the case that the dimensionless parameter $\epsilon$ is non-zero. Similarly, the following images can be obtained by selecting appropriate parameters and changing $\epsilon$.\\

{\rotatebox{0}{\includegraphics[width=3.6cm,height=3.0cm,angle=0]{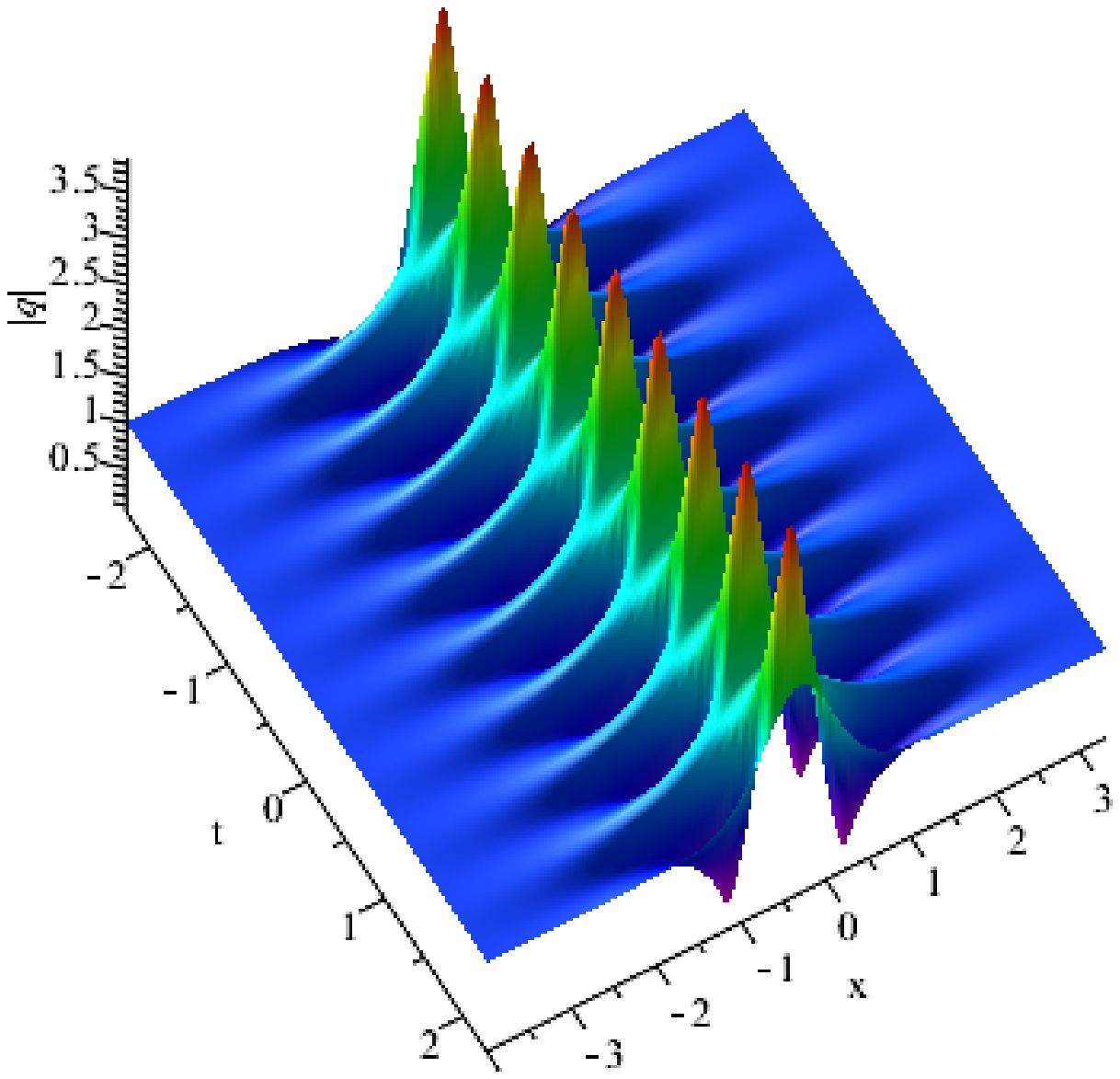}}}
~~~~
{\rotatebox{0}{\includegraphics[width=3.6cm,height=3.0cm,angle=0]{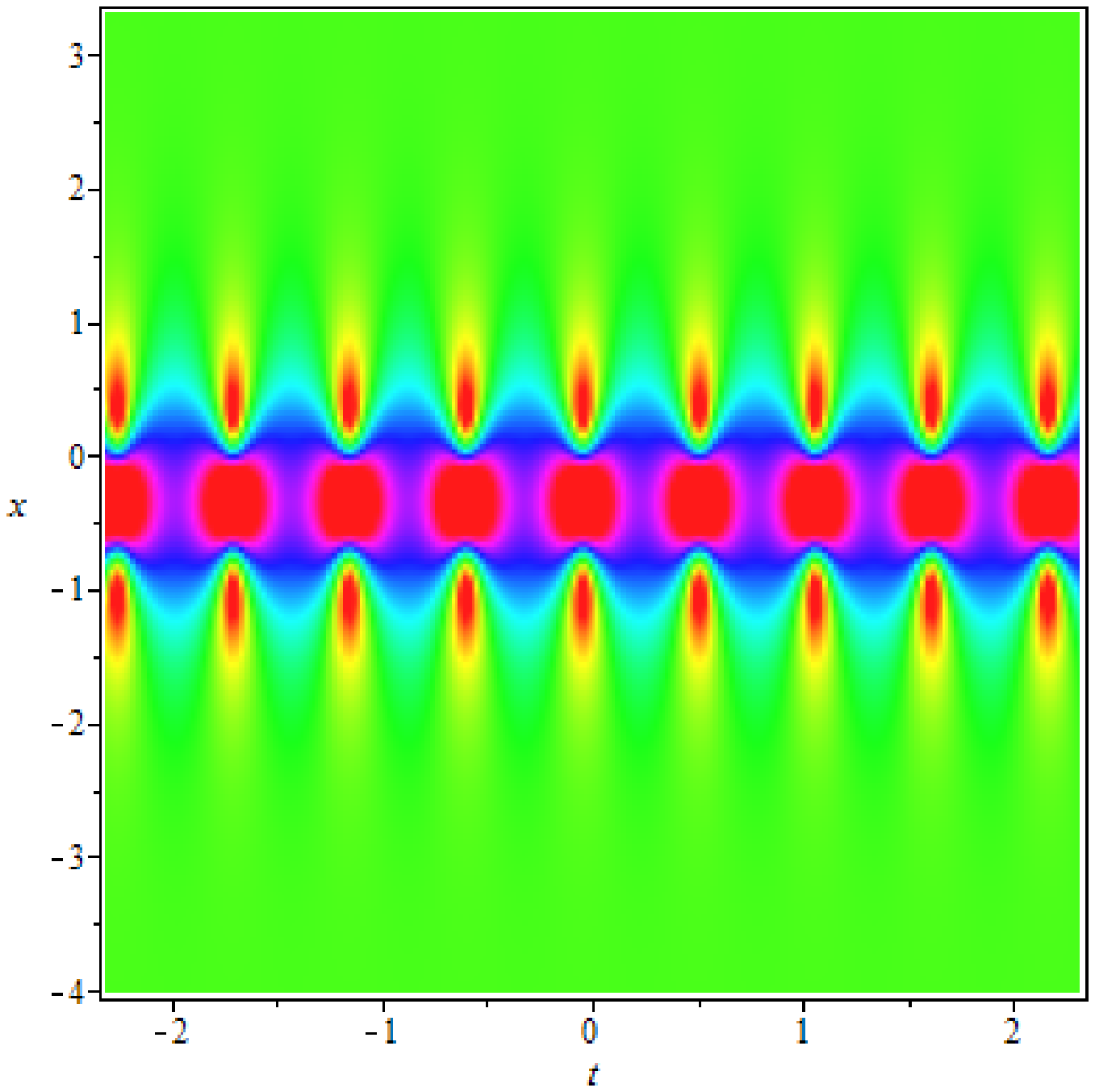}}}
~~~~
{\rotatebox{0}{\includegraphics[width=3.6cm,height=3.0cm,angle=0]{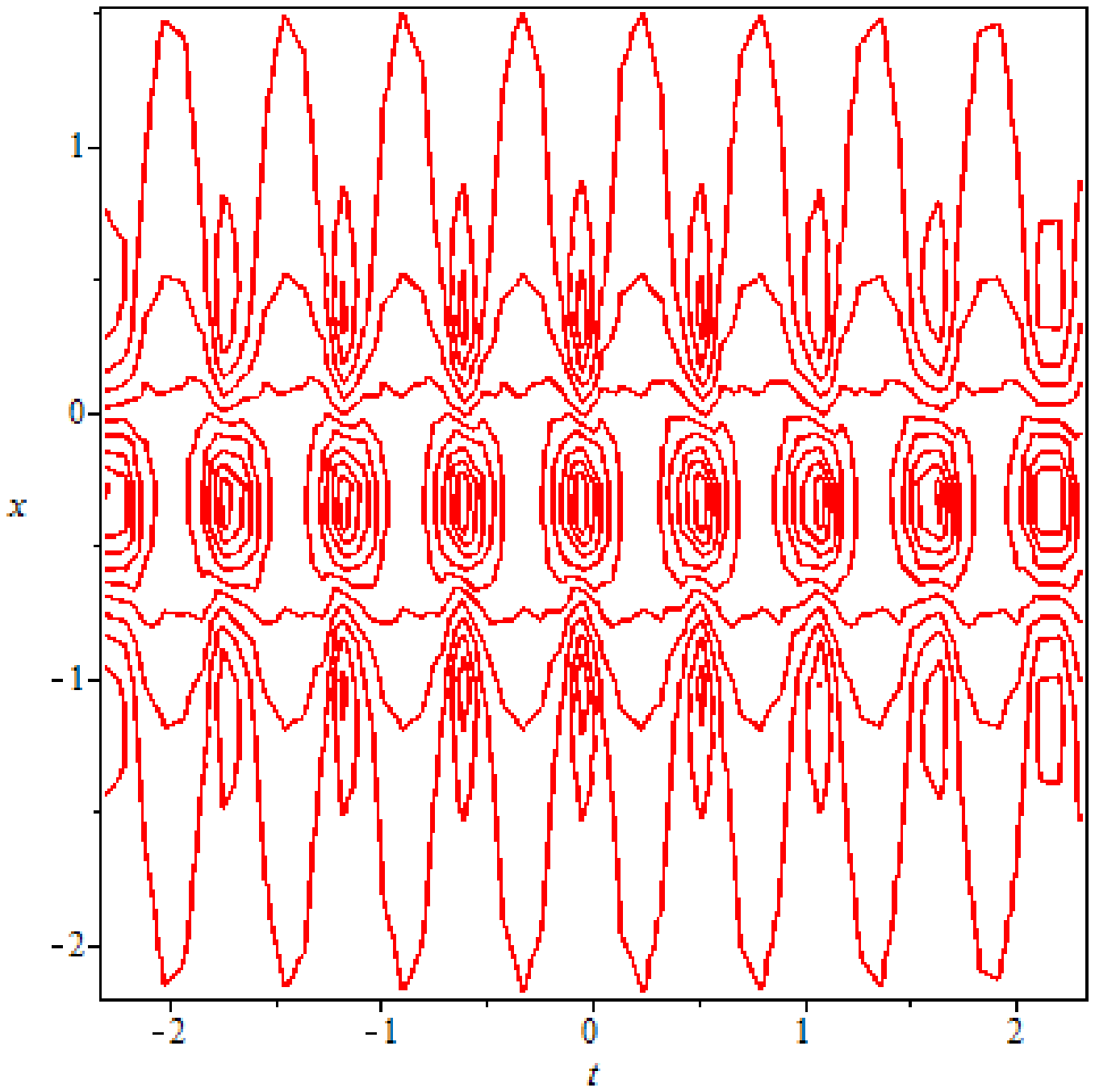}}}

$\qquad~~~~~~~~~(\textbf{a})\qquad \ \qquad\qquad\qquad\qquad~~~(\textbf{b})
\qquad\qquad\qquad\qquad\qquad~(\textbf{c})$\\

{\rotatebox{0}{\includegraphics[width=3.6cm,height=3.0cm,angle=0]{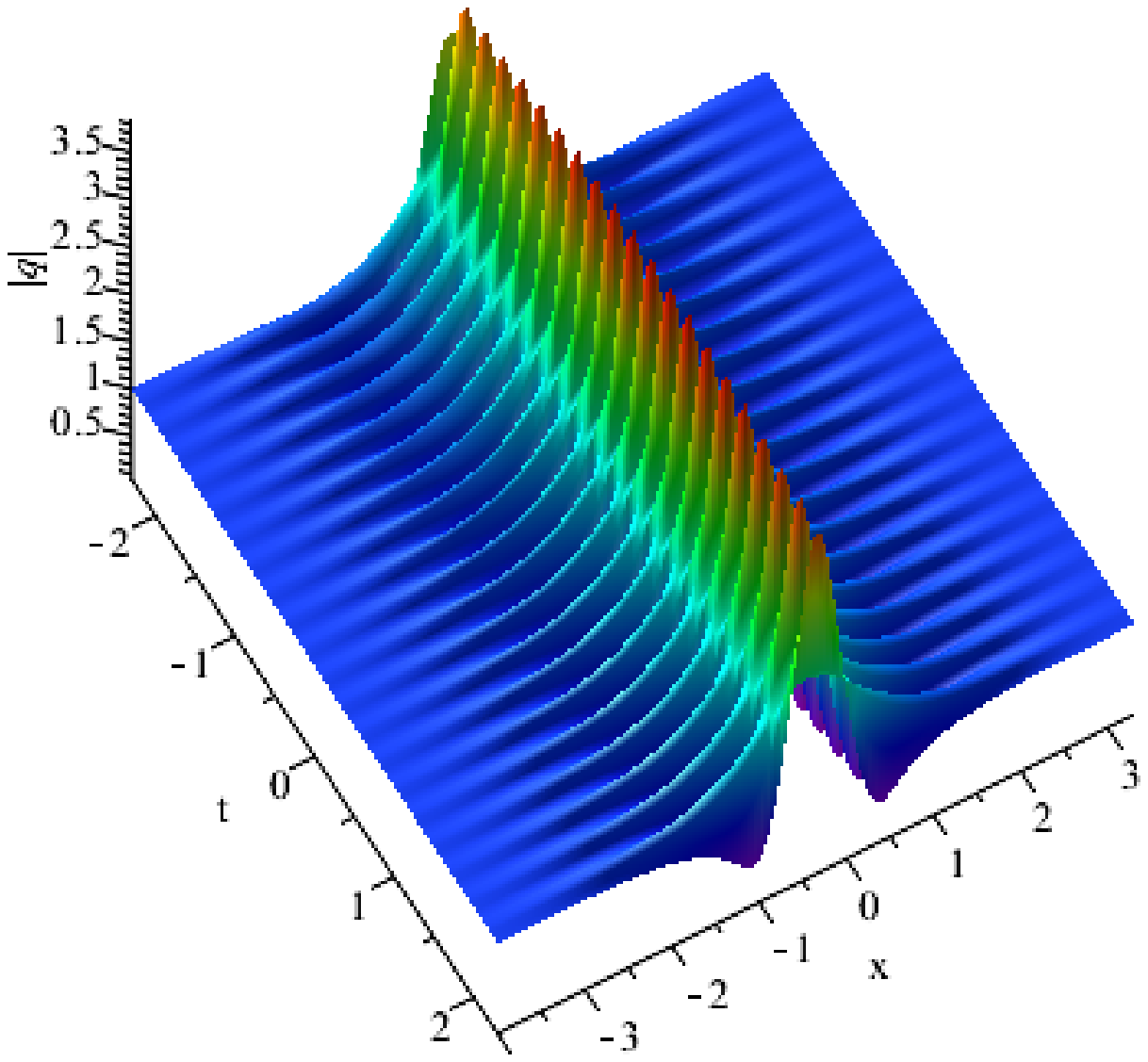}}}
~~~~
{\rotatebox{0}{\includegraphics[width=3.6cm,height=3.0cm,angle=0]{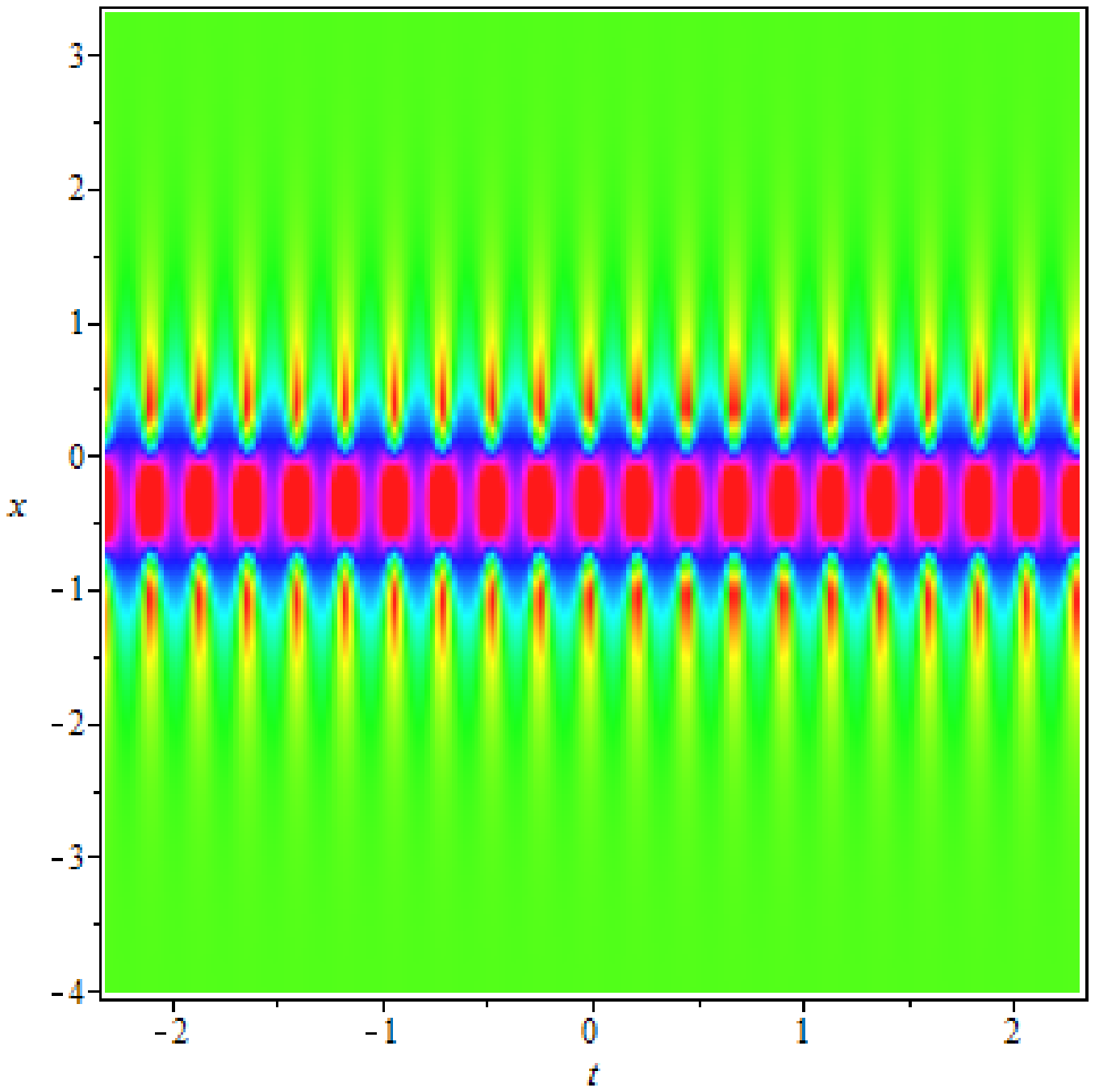}}}
~~~~
{\rotatebox{0}{\includegraphics[width=3.6cm,height=3.0cm,angle=0]{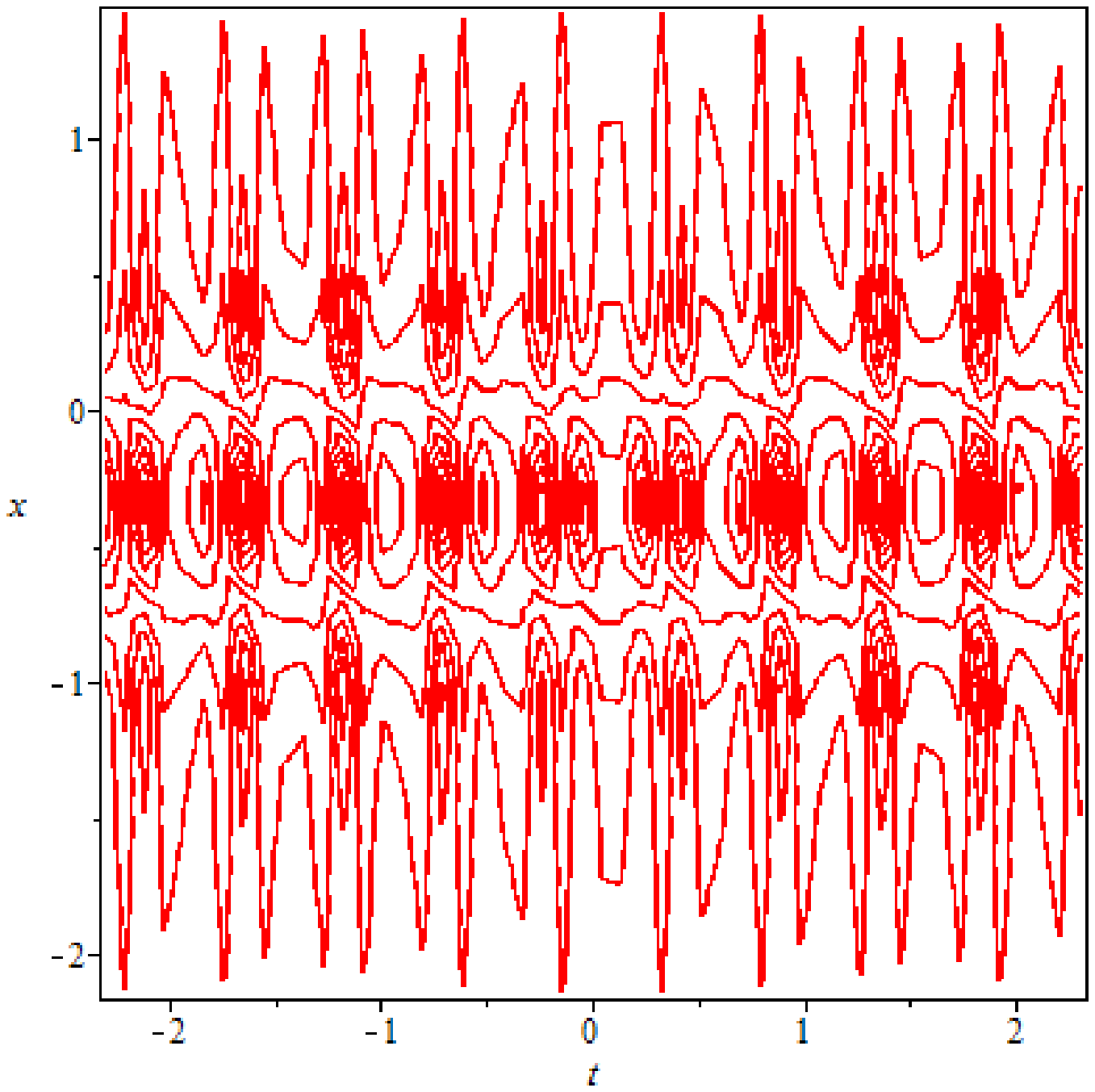}}}

$\qquad~~~~~~~~~(\textbf{d})\qquad \ \qquad\qquad\qquad\qquad~~~(\textbf{e})
\qquad\qquad\qquad\qquad\qquad~(\textbf{f})$\\

{\rotatebox{0}{\includegraphics[width=3.6cm,height=3.0cm,angle=0]{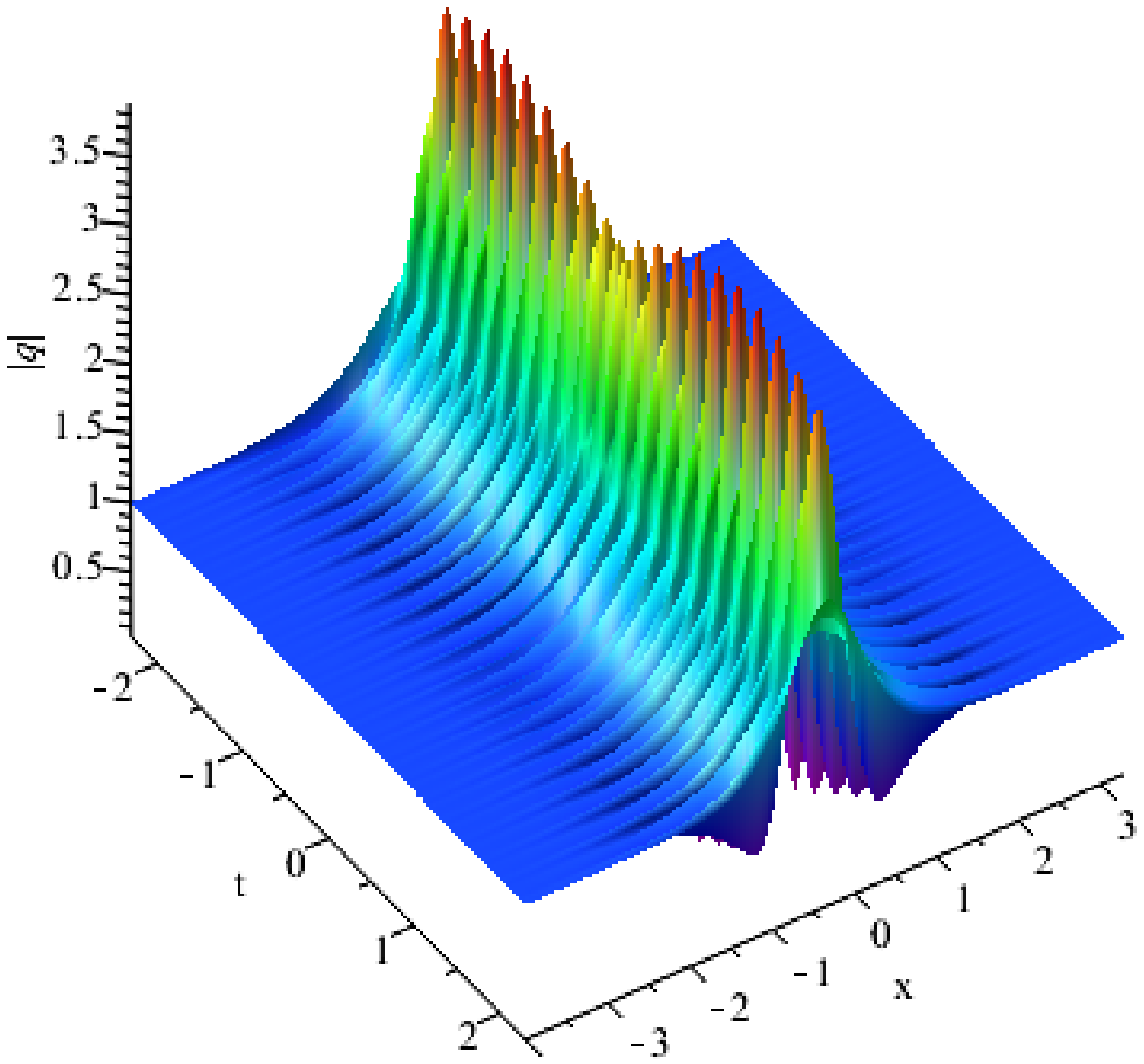}}}
~~~~
{\rotatebox{0}{\includegraphics[width=3.6cm,height=3.0cm,angle=0]{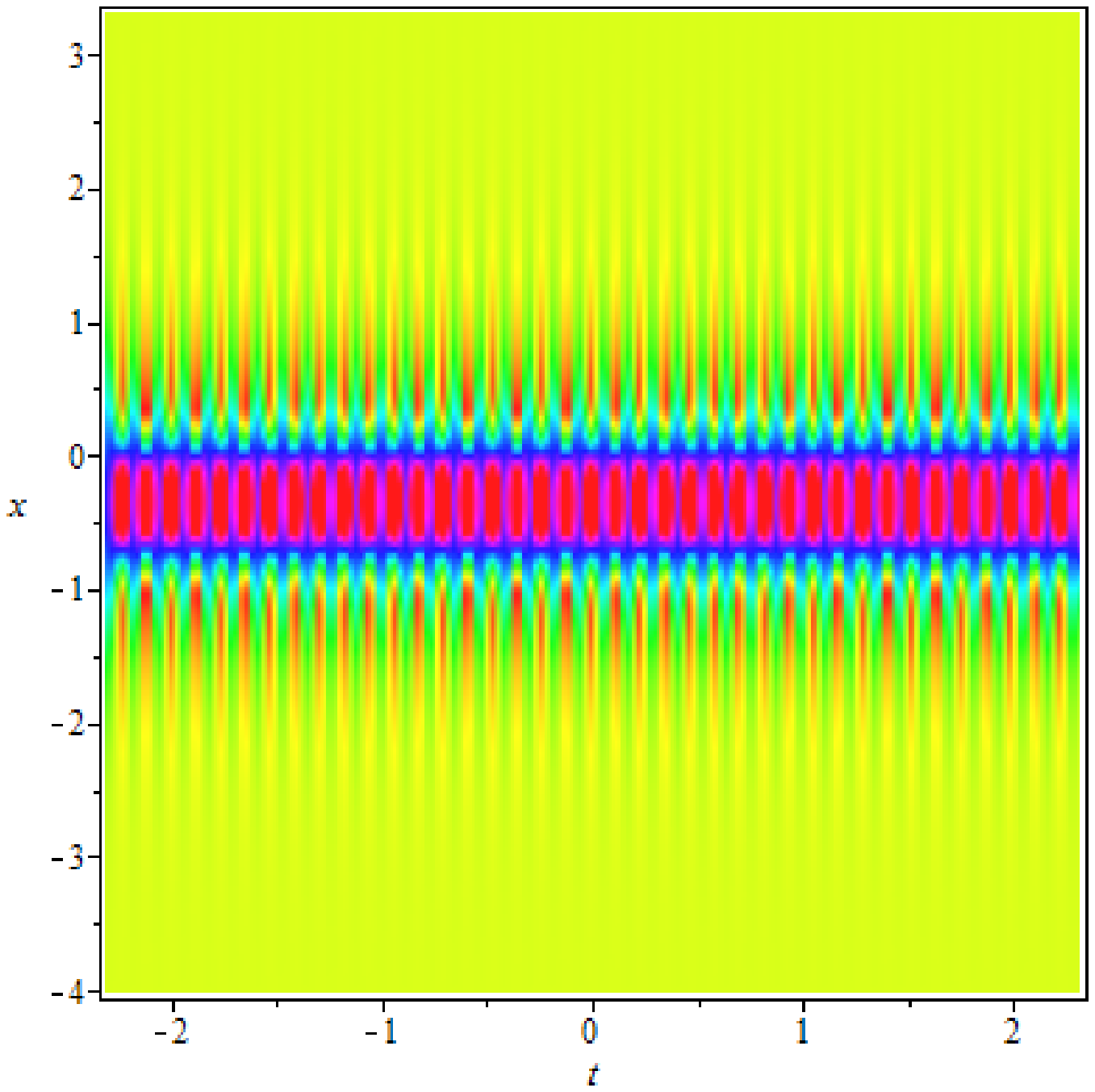}}}
~~~~
{\rotatebox{0}{\includegraphics[width=3.6cm,height=3.0cm,angle=0]{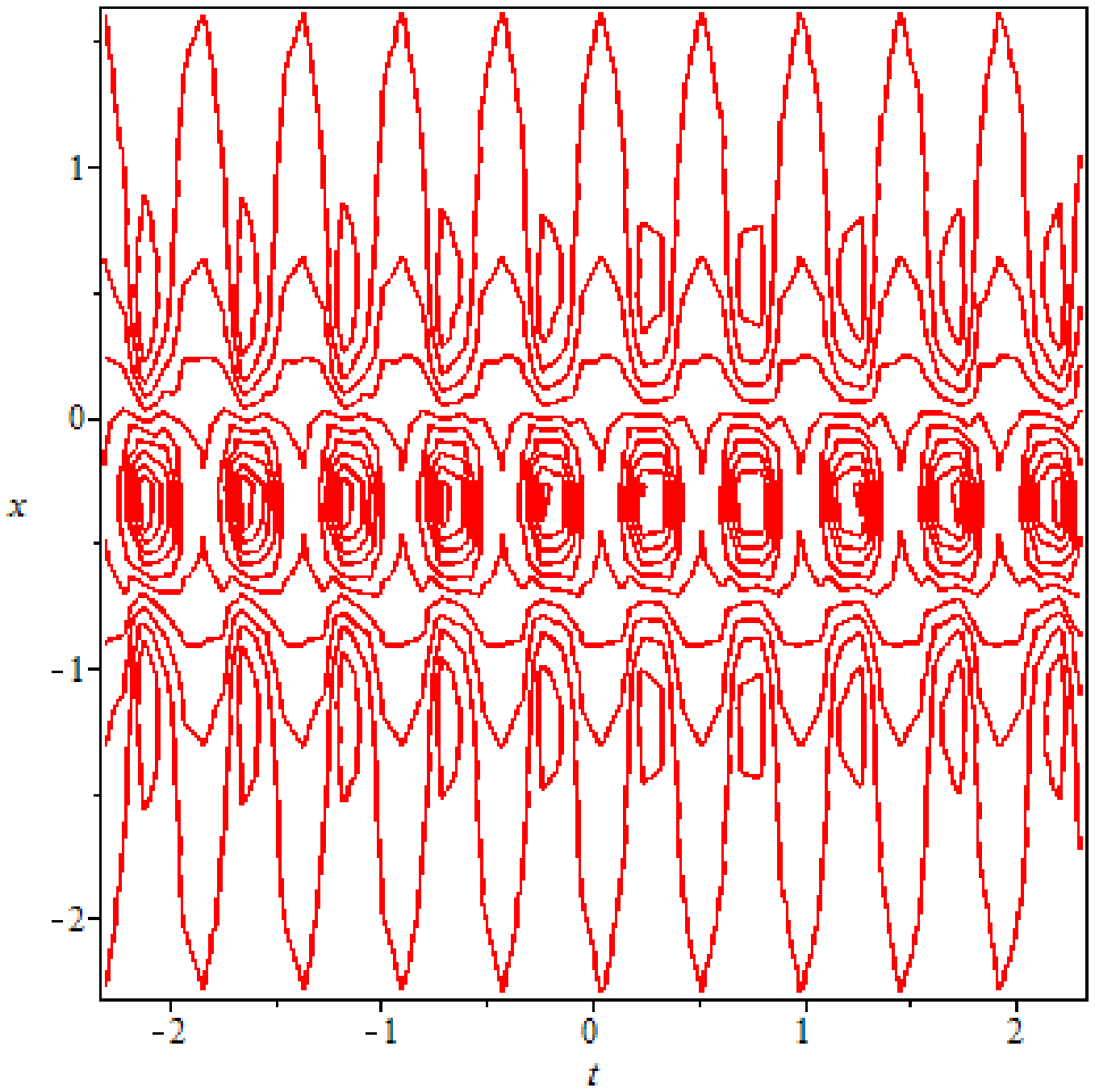}}}

$\qquad~~~~~~~~~(\textbf{g})\qquad \ \qquad\qquad\qquad\qquad~~~(\textbf{h})
\qquad\qquad\qquad\qquad\qquad~(\textbf{i})$\\
\noindent { \small \textbf{Figure 5.} (Color online) Plots of the soliton solution of the equation  with the parameters $q_{-}=1$, $\xi_{1}=-2.5i$ and $b_{1}=e^{2+i}$.
$\textbf{(a)}$: the soliton solution with $\epsilon=1$,
$\textbf{(b)}$: the density plot corresponding to $(a)$,
$\textbf{(c)}$: the contour line of the soliton solution corresponding to $(a)$,
$\textbf{(d)}$: the soliton solution with $\epsilon=2$,
$\textbf{(e)}$: the density plot corresponding to $(d)$,
$\textbf{(f)}$: the contour line of the soliton solution corresponding to $(d)$,
$\textbf{(g)}$: the soliton solution with $\epsilon=3$,
$\textbf{(h)}$: the density plot corresponding to $(g)$,
$\textbf{(i)}$: the contour line of the soliton solution corresponding to $(g)$.}\\

From the Figure 5. and comparing it with Figure 1, it is obvious that when the dimensionless parameter $\epsilon$ becomes larger, the soliton solutions are more closely arranged. We can learn it more clearly from the density maps. Thus, compared to the classical Schr\"{o}dinger equation, the addition of $\epsilon$ will change the shape of the solution but will not change the structure of the solution.

The above analysis reveals the dynamic behavior of the soliton solutions and shows various interesting phenomena by selecting appropriate parameters and change some parameters. It is hoped that these results can contribute to the physical field.

\section{The HDNLS equation with NZBCs: double poles}
The case of single zeros of the analytic scattering coefficients has been investigated. In this section, we, then, study the double zeros of the analytic scattering coefficients.
\subsection{Discrete spectrum with double poles and residue conditions}
From the above analysis, we know that the discrete spectrum of the scattering problem is set that contains all values $z\in\mathbb{C}\setminus\Sigma$ which make the eigenfunctions exist in $L^{2}(\mathbb{R})$. We assume that $z_{n}(\in D_{-}\cap\{z\in\mathbb{C}: Imz<0\}, n=1,2,...,N)$ are the double zeros of $s_{11}(z)$ i.e. $s_{11}(z_{n})=s'_{11}(z_{n})=0$ but $s''_{11}(z_{n})\neq0$, $n=1, 2,\cdots, N$. Then, based on the \emph{Theorem 2.5} , the discrete spectrum of the scattering problem can be acquired as
\begin{align}
\left\{z_{n}, -\frac{q_{0}^{2}}{z_{n}^{*}},
  z_{n}^{*}, -\frac{q_{0}^{2}}{z_{n}}\right\}, \quad n=1,2,...,N. \notag
\end{align}
For convenience, we use the transformation that $\hat{z}_{n}=-\frac{q^{2}_{0}}{z_{n}}$ and $\check{z}_{n}=-\frac{q^{2}_{0}}{z^{*}_{n}}$. Then, according to the \eqref{E-1} and \eqref{2.13} we have the following theorem.

\noindent \textbf {Theorem 3.1}
\emph{When $z_{n}(\in D_{-}\cap\{z\in\mathbb{C}: Imz<0\}, n=1,2,...,N)$ are the double zeros of $s_{11}(z)$ , the }
\begin{align} \label{E-34}
\mu_{+,1}(x,t;z_{n})=e_{n}e^{2i\theta(x,t;z_{n})}\mu_{-,2}(x,t;z_{n})
\end{align}
\emph{and}
\begin{align}\label{E-35}
\mu'_{+,1}(x,t;z_{n})=e^{2i\theta(x,t;z_{n})}[(h_{n}+2ie_{n}\theta'(x,t;z_{n}))\mu_{-,2}(x,t;z_{n})
+e_{n}\mu'_{-,2}(x,t;z_{n})],
\end{align}
\emph{where $e_{n}$ and $h_{n}$ are constants and independent of $x$ and $t$.}
\begin{proof}
Based on the \eqref{2.13} and $z_{n}$ are the double zeros of $s_{11}(z)$, we can derive that
\begin{align}\label{E-36}
\phi_{+,1}(z_{n})=e_{n}\phi_{-,2}(z_{n}).
\end{align}
Then, using the \eqref{E-1}, we have the \eqref{E-34}.
Considering that $s'_{11}(z_{n})=0$, we can derive that
\begin{align}
Wr\left(\phi'_{+,1}(z_{n})-e_{n}\phi'_{-,2}(z_{n}),\phi_{-,2}(z_{n})\right)=0.\notag
\end{align}
So, we calculate that
\begin{align}\label{E-37}
\phi'_{+,1}(z_{n})=e_{n}\phi'_{-,2}(z_{n})+h_{n}\phi_{-,2}(z_{n}).
\end{align}
The $e_{n}$ and $h_{n}$ in \eqref{E-36} and \eqref{E-37}, respectively, are constant and independent of $x$ and $t$.
Therefore, based on the \eqref{E-1}, we can derive the \eqref{E-35}.
\end{proof}
Similar to the \emph{Theorem 3.1}, we have the following relationship.
\begin{subequations}
\begin{gather}
\mu_{+,1}(x,t;\check{z}_{n})=\check{e}_{n}e^{2i\theta(x,t;\check{z}_{n})}\mu_{-,2}(x,t;\check{z}_{n}),\label{E-38a}\\
\mu'_{+,1}(x,t;\check{z}_{n})=e^{2i\theta(x,t;\check{z}_{n})}
[(\check{h}_{n}+2i\check{e}_{n}\theta'(x,t;\check{z}_{n}))\mu_{-,2}(x,t;\check{z}_{n})
+\check{e}_{n}\mu'_{-,2}(x,t;\check{z}_{n})];\label{E-38b}\\
\mu_{+,2}(x,t;z^{*}_{n})=\tilde{e}_{n}e^{-2i\theta(x,t;z^{*}_{n})}\mu_{-,1}(x,t;z^{*}_{n}),\label{E-39a}\\
\mu'_{+,2}(x,t;z^{*}_{n})=e^{-2i\theta(x,t;z^{*}_{n})}
[(\tilde{h}_{n}-2i\tilde{e}_{n}\theta'(x,t;z^{*}_{n}))\mu_{-,1}(x,t;z^{*}_{n})
+\tilde{e}_{n}\mu'_{-,1}(x,t;z^{*}_{n})];\label{E-39b}\\
\mu_{+,2}(x,t;\hat{z}_{n})=\hat{e}_{n}e^{-2i\theta(x,t;\hat{z}_{n})}\mu_{-,1}(x,t;\hat{z}_{n}),\label{E-40a}\\
\mu'_{+,2}(x,t;\hat{z}_{n})=e^{-2i\theta(x,t;\hat{z}_{n})}
[(\hat{h}_{n}-2i\hat{e}_{n}\theta'(x,t;\hat{z}_{n}))\mu_{-,1}(x,t;\hat{z}_{n})
+\hat{e}_{n}\mu'_{-,1}(x,t;\hat{z}_{n})];\label{E-40b}
\end{gather}
\end{subequations}
where the $\check{e}_{n}$, $\check{h}_{n}$, $\tilde{e}_{n}$, $\tilde{h}_{n}$, $\hat{e}_{n}$, and $\hat{h}_{n}$ are constants and independent of $x$ and $t$. Furthermore, the constants have the fixed relationship.

\noindent \textbf {Theorem 3.2}
\emph{The relationship among the constants are that }
\begin{gather}
\tilde{e}_{n}=-e^{*}_{n}, \quad \hat{e}_{n}=e_{n}\frac{q_{-}}{q^{*}_{+}}, \quad \check{e}_{n}=-\hat{e}^{*}_{n}, \notag\\
\tilde{h}_{n}=-h^{*}_{n},\quad \hat{h}_{n}=h_{n}\frac{q_{-}z^{2}_{n}}{q^{*}_{+}q^{2}_{0}},\quad
\check{h}_{n}=-\hat{h}^{*}_{n}.\notag
\end{gather}
\begin{proof}
According to the \eqref{E-4}, \eqref{E-39a} and \eqref{E-39b}, we have that
\begin{align} \label{E-41}
-\sigma\mu^{*}_{+,1}(x,t;z_{n})=\tilde{e}_{n}e^{-2i\theta(x,t;z^{*}_{n})}\sigma\mu^{*}_{-,2}(x,t;z_{n}).
\end{align}
Applying $\theta(x,t;z^{*}_{n})=\theta^{*}(x,t;z_{n})$, multiplying $\sigma$ to both ends of equation \eqref{E-41} and taking the complex conjugate, we obtain that
\begin{align}
\mu_{+,1}(x,t;z_{n})=-\tilde{e}^{*}_{n}e^{2i\theta(x,t;z_{n})}\mu_{-,2}(x,t;z_{n}).
\end{align}
Therefore, the $\tilde{e}_{n}=-e^{*}_{n}$ has been proved.\\
The other relationships among the constants can be proved similarly.
\end{proof}
Then, we pay attention to the residue condition that will be useful in the inverse problem.

\noindent \textbf {Proposition 1.}
\emph{If the functions $f$ and $g$ are analytic in a complex region $\Omega\in\mathbb{C}$ , $g$ has a double poles at $z_{0}\in\Omega$ i.e. $g(z_{0})=g'(z_{0})=0$, $g''(z_{0})\neq0$, and $f(z_{0})\neq0$. Thus the residue of $f/g$ can be calculated by the Laurent expansion at $z=z_{0}$, namely}
\begin{align}\label{E-42}
\mathop{Res}_{z=z_{0}}\left[\frac{f}{g}\right]=\frac{2f'(z_{0})}{g''(z_{0})}-
\frac{2f(z_{0})g'''(z_{0})}{3(g''(z_{0}))^{2}},\quad
\mathop{P_{-2}}_{z=z_{0}}\left[\frac{f}{g}\right]=\frac{2f(z_{0})}{g''(z_{0})}.
\end{align}

According to the \emph{Theorem 3.1} and the formulae \eqref{E-42}, we can derive the residue of the $u_{+,1}(x,t;z)/s_{11}(z)$ as
\begin{subequations}
\begin{align}
&\mathop{P_{-2}}_{z=z_{n}}\left[\frac{\mu_{+,1}(x,t;z)}{s_{11}(z)}\right]=
\frac{2\mu_{+,1}(x,t;z_{n})}{s''_{11}(z_{n})}=
\frac{2e_{n}}{s''_{11}(z_{n})}e^{2i\theta(x,t;z_{n})}\mu_{-,2}(x,t;z_{n}),\label{E-43a}\\
&\mathop{Res}_{z=z_{n}}\left[\frac{\mu_{+,1}(x,t;z)}{s_{11}(z)}\right]=
\frac{2\mu'_{+,1}(x,t;z_{n})}{s''_{11}(z_{n})}
-\frac{2\mu_{+,1}(x,t;z_{n})a'''_{11}(z_{n})}{3(s''_{11}(z_{n}))^{2}} \notag\\
&=\frac{2e_{n}}{s''_{11}(z_{n})}e^{2i\theta(x,t;z_{n})}\left[\mu'_{-,2}(x,t;z_{n})+
\left(\frac{h_{n}}{e_{n}}+2i\theta'(x,t;z_{n})-\frac{s'''_{11}(z_{n})}{3s''_{11}(z_{n})}
\right)\mu_{-,2}(x,t;z_{n})\right].\label{E-43b}
\end{align}
\end{subequations}
In order to make the analysis more convenient, we introduce the following representation
\begin{align}
E_{n}=\frac{2e_{n}}{s''_{11}(z_{n})}, \quad H_{n}=\frac{h_{n}}{e_{n}}-\frac{s'''_{11}(z_{n})}{3s''_{11}(z_{n})}.\notag
\end{align}
Then, the \eqref{E-43a} and \eqref{E-43b} can be transformed into
\begin{subequations}
\begin{align}
&\mathop{P_{-2}}_{z=z_{n}}\left[\frac{\mu_{+,1}(x,t;z)}{s_{11}(z)}\right]=
E_{n}e^{2i\theta(x,t;z_{n})}\mu_{-,2}(x,t;z_{n}),\label{E-44a}\\
&\mathop{Res}_{z=z_{n}}\left[\frac{\mu_{+,1}(x,t;z)}{s_{11}(z)}\right]=
E_{n}e^{2i\theta(x,t;z_{n})}\left[\mu'_{-,2}(x,t;z_{n})+
\left(H_{n}+2i\theta'(x,t;z_{n})\right)\mu_{-,2}(x,t;z_{n})\right].\label{E-44b}
\end{align}
\end{subequations}
From the above analysis, we know that $z=\check{z}_{n}$ are also the double zeros of $s_{11}$ and $z=z_{n}^{*}, z=\hat{z}_{n}$ are the double zeros of $s_{22}$. Similar to the \eqref{E-43a}, \eqref{E-43b}, \eqref{E-44a} and \eqref{E-44b}, we can derive that
\begin{subequations}
\begin{align}
&\mathop{P_{-2}}_{z=\check{z}_{n}}\left[\frac{\mu_{+,1}(z)}{s_{11}(z)}\right]=
\check{E}_{n}e^{2i\theta(\check{z}_{n})}\mu_{-,2}(\check{z}_{n}),\\
&\mathop{Res}_{z=\check{z}_{n}}\left[\frac{\mu_{+,1}(z)}{s_{11}(z)}\right]=
\check{E}_{n}e^{2i\theta(\check{z}_{n})}\left[\mu'_{-,2}(\check{z}_{n})+
\left(\check{H}_{n}+2i\theta'(\check{z}_{n})\right)\mu_{-,2}(\check{z}_{n})\right];\\
&\mathop{P_{-2}}_{z=z^{*}_{n}}\left[\frac{\mu_{+,2}(z)}{s_{22}(z)}\right]=
\tilde{E}_{n}e^{-2i\theta(z^{*}_{n})}\mu_{-,1}(z^{*}_{n}),\\
&\mathop{Res}_{z=z^{*}_{n}}\left[\frac{\mu_{+,2}(z)}{s_{22}(z)}\right]=
\tilde{E}_{n}e^{-2i\theta(z^{*}_{n})}\left[\mu'_{-,1}(z^{*}_{n})+
\left(H_{n}-2i\theta'(z^{*}_{n})\right)\mu_{-,1}(z^{*}_{n})\right];\\
&\mathop{P_{-2}}_{z=\hat{z}_{n}}\left[\frac{\mu_{+,2}(z)}{s_{22}(z)}\right]=
\hat{E}_{n}e^{-2i\theta(\hat{z}_{n})}\mu_{-,1}(\hat{z}_{n}),\\
&\mathop{Res}_{z=\hat{z}_{n}}\left[\frac{\mu_{+,2}(z)}{s_{22}(z)}\right]=
\hat{E}_{n}e^{-2i\theta(\hat{z}_{n})}\left[\mu'_{-,1}(\hat{z}_{n})+
\left(\hat{H}_{n}-2i\theta'(\hat{z}_{n})\right)\mu_{-,1}(\hat{z}_{n})\right];
\end{align}
\end{subequations}
where $\check{E}_{n}$, $\check{H}_{n}$, $\tilde{E}_{n}$, $\tilde{H}_{n}$, $\hat{E}_{n}$ and $\hat{H}_{n}$ are the transformations corresponding to the constants $\check{e}_{n}$, $\check{h}_{n}$, $\tilde{e}_{n}$, $\tilde{h}_{n}$, $\hat{e}_{n}$, and $\hat{h}_{n}$, respectively.

\subsection{RH problem and reconstruction formula for the potential}
In the analysis of simple poles, we have constructed a generalized RHP and obtained the asymptotic behavior of the $M^{\pm}$ and the jump matrix $G$. Then, for convenience, we introduce a substitution
\begin{align}
\xi_{n}:=z_{n},\quad \xi_{N+n}:=\check{z}_{n},\quad \hat{\xi}_{n}:=\hat{z}_{n}, \quad \hat{\xi}_{N+n}:=z_{n}^{*}.
\end{align}
So, the $\xi_{n}, n=1,2,\cdots, 2N$ are the double poles in $D_{-}$ and $\xi^{*}_{n}, n=1,2,\cdots, 2N$ are the double poles in $D_{+}$. Via subtracting out the asymptotic behavior and the double poles contributions, we obtain a regular RHP. Then, we have
\begin{align}\label{E-45}
\begin{split}
 M^{+}-&\mathbb{I}+\frac{i}{z}\sigma_{3}Q_{-}-\sum_{n=1}^{2N}\left\{
\frac{\mathop{Res}\limits_{z=\hat{\xi}_{n}}M^{+}}{z-\hat{\xi}_{n}}
+\frac{\mathop{P_{-2}}\limits_{z=\hat{\xi}_{n}}M^{+}}{(z-\hat{\xi}_{n})^{2}}
+\frac{\mathop{Res}\limits_{z=\xi_{n}}M^{-}}{z-\xi_{n}}
+\frac{\mathop{P_{-2}}\limits_{z=\xi_{n}}M^{-}}{(z-\xi_{n})^{2}}\right\}\\&=
M^{-}-\mathbb{I}+\frac{i}{z}\sigma_{3}Q_{-}-\sum_{n=1}^{2N}\left\{
\frac{\mathop{Res}\limits_{z=\hat{\xi}_{n}}M^{+}}{z-\hat{\xi}_{n}}
+\frac{\mathop{P_{-2}}\limits_{z=\hat{\xi}_{n}}M^{+}}{(z-\hat{\xi}_{n})^{2}}
+\frac{\mathop{Res}\limits_{z=\xi_{n}}M^{-}}{z-\xi_{n}}
+\frac{\mathop{P_{-2}}\limits_{z=\xi_{n}}M^{-}}{(z-\xi_{n})^{2}}\right\}
-M^{-}G.
\end{split}
\end{align}
According to the \eqref{Matrix}, we acquire that
\begin{align}
\mathop{Res}_{z=\xi_{n}}[M^{-}]=(\mathop{Res}_{z=\xi_{n}}\left[\frac{\mu_{+,1}(x,t;z)}{s_{11}(z)}\right],0), \quad \mathop{P_{-2}}_{z=\xi_{n}}M^{-}=(\mathop{P_{-2}}_{z=\xi_{n}}\left[\frac{\mu_{+,1}(x,t;z)}{s_{11}(z)}\right],0), \notag \\
\mathop{Res}_{z=\hat{\xi}_{n}}[M^{+}]=(0,\mathop{Res}_{z=\hat{\xi}_{n}}\left[\frac{\mu_{+,2}(x,t;z)}
{s_{22}(z)}\right]), \quad \mathop{P_{-2}}_{z=\hat{\xi}_{n}}M^{+}=(0,\mathop{P_{-2}}_{z=\hat{\xi}_{n}}\left[\frac{\mu_{+,2}(x,t;z)}
{s_{22}(z)}\right]). \notag
\end{align}
It is easy to verify that the left side of \eqref{E-45} is analytic in $D_{+}$ and the right side of \eqref{E-45}, apart from the item $M^{-}(z)G(z)$, is analytic in $D_{-}$. Meanwhile, both sides of the equation \eqref{E-45} have the asymptotic behavior that are $O(1/z)(z\rightarrow\infty)$ and $O(1)(z\rightarrow0)$. We also have the asymptotic behavior of $G(x,t;s)$, i.e., $O(1/z)(z\rightarrow\infty)$ and $O(1)(z\rightarrow0)$. Therefore, applying the cauchy projectors \eqref{Cauchy}, the solution of the RHP can be obtained as
\begin{align}\label{E-46}
M(x,t;z)=&\mathbb{I}-\frac{i}{z}\sigma_{3}Q_{-}+\sum_{n=1}^{2N}\left\{
\frac{\mathop{Res}\limits_{z=\hat{\xi}_{n}}M^{+}}{z-\hat{\xi}_{n}}
+\frac{\mathop{P_{-2}}\limits_{z=\hat{\xi}_{n}}M^{+}}{(z-\hat{\xi}_{n})^{2}}
+\frac{\mathop{Res}\limits_{z=\xi_{n}}M^{-}}{z-\xi_{n}}
+\frac{\mathop{P_{-2}}\limits_{z=\xi_{n}}M^{-}}{(z-\xi_{n})^{2}}\right\} \notag\\
&+\frac{1}{2i\pi}\int_{\Sigma}\frac{M^{-}(x,t;s)G(x,t;s)}{s-z}\,ds,\quad z\in\mathbb{C}\setminus\Sigma.
\end{align}
To obtain a closed algebraic integral system, the expression of the residue which emerge in \eqref{E-46} is necessary. We have shown this in the above analysis. We, therefore, evaluate the second column of the \eqref{E-46} at $z=\xi_{k}$ in $D_{-}$. Before this, we introduce the notation $\hat{E}_{N+n}:=\tilde{E}_{n}$ and $\hat{H}_{N+n}:=\tilde{H}_{n}$, and define that
\begin{align}
C_{n}(x,t;z)=\frac{\hat{E}_{n}}{z-\hat{\xi}_{n}}e^{-2i\theta(x,t;\hat{\xi}_{n})},\quad
D_{n}(x,t)=\hat{H}_{n}-2i\theta'(x,t;\hat{\xi}_{n}).\notag
\end{align}
It can be derived that $C'_{n}(x,t;z)=-\frac{C_{n}(x,t;z)}{z-\hat{\xi}_{n}}$.
Then, we have
\begin{align}\label{E-47}
u_{-,2}(x,t;\xi_{k})=&\left(\begin{array}{cc}
                       -\frac{iq_{-}}{\xi_{k}} \\
                        1
                     \end{array}\right)
+\frac{1}{2i\pi}\int_{\Sigma}\frac{(M^{-}G)_{2}(s)}{s-\xi_{k}}\,ds \notag \\
&+\sum_{n=1}^{2N}C_{n}(\xi_{k})
\left[\mu'_{-,1}(x,t;\hat{\xi}_{n})+\left(D_{n}(x,t)+\frac{1}{\xi_{k}-\hat{\xi}_{n}}\right)
\mu_{-,1}(x,t;\hat{\xi}_{n})\right].
\end{align}
Based on the \eqref{E-5}, we have $\mu_{-,2}(x,t;\xi_{k})=-\frac{iq_{-}}{\xi_{k}}\mu_{-,1}(x,t;\hat{\xi}_{k})$ and substitute it into the \eqref{E-47}. The following formulae,
\begin{align}\label{E-48}
&\sum_{n=1}^{2N}\left(C_{n}(\xi_{k})\mu'_{-,1}(x,t;\hat{\xi}_{n})+
\left[C_{n}(\xi_{k})\left(D_{n}(x,t)+\frac{1}{\xi_{k}-\hat{\xi}_{n}}\right)+
\frac{iq_{-}}{\xi_{k}}\delta_{kn}\right]\mu_{-,1}(x,t;\hat{\xi}_{n})\right) \notag\\
&+\left(\begin{array}{cc}
        -\frac{iq_{-}}{\xi_{k}} \\
         1
       \end{array}\right)
+\frac{1}{2i\pi}\int_{\Sigma}\frac{(M^{-}G)_{2}(s)}{s-\xi_{k}}\,ds=0,
\end{align}
can be obtained. Here, the $\delta_{ij}$ is Kronecker delta. Then, by taking the first-order derivative of $\mu_{-,2}(x,t;z)$ and \eqref{E-5} with respect to $z$, and evaluating at $z=\xi_{k}$, we can obtain that
\begin{align}\label{E-49}
&\sum_{n=1}^{2N}\left(\left[\frac{C_{n}(\xi_{k})}{\xi_{k}-\hat{\xi}_{n}}-
\frac{iq_{-}q_{0}^{2}}{\xi_{k}^{3}}\delta_{kn}\right]\mu'_{-,1}(\hat{\xi}_{n})+
\left[\frac{C_{n}(\xi_{k})}{\xi_{k}-\hat{\xi}_{n}}\left(D_{n}(x,t)+\frac{2}{\xi_{k}-\hat{\xi}_{n}}\right)
+\frac{iq_{-}}{\xi_{k}^{2}}\delta_{kn}\right]\mu_{-,1}(\hat{\xi}_{n})\right) \notag\\
&+\left(\begin{array}{cc}
        -\frac{iq_{-}}{\xi_{k}^{2}} \\
         0
       \end{array}\right)
-\frac{1}{2i\pi}\int_{\Sigma}\frac{(M^{-}G)_{2}(s)}{(s-\xi_{k})^{2}}\,ds=0.
\end{align}
Finally, considering the asymptotic behavior of the \eqref{E-46}, we obtain that
\begin{align}\label{E-50}
M(x,t;z)=\mathbb{I}&+\frac{1}{z}\{-i\sigma_{3}Q_{-}+\sum_{n=1}^{2N}\left[
\mathop{Res}_{z=\hat{\xi}_{n}}M^{+}
+\mathop{Res}_{z=\xi_{n}}M^{-}\right] \notag\\
&-\frac{1}{2i\pi}\int_{\Sigma}M^{-}(x,t;s)G(x,t;s)\,ds\}+O(z^{-2}),\quad z\rightarrow\infty.
\end{align}
Through taking $M=M^{-}$ and combining the $1,2$ element of \eqref{E-50} and the \emph{Theorem 2.6}, the reconstruction formula for the potential can be acquired as
\begin{align}\label{E-51}
q(x,t)=&-q_{-}+\frac{1}{2\pi}\int_{\Sigma}(M^{-}(x,t;s)G(x,t;s))_{12}\,ds \notag \\
&-i\sum_{n=1}^{2N}\hat{E}_{n}e^{-2i\theta(x,t;\hat{\xi}_{n})}
[\mu'_{-,11}(x,t;\hat{\xi}_{n})+\mu_{-,11}(x,t;\hat{\xi}_{n})D_{n}(x,t)].
\end{align}
\subsection{Trace formulate and theta condition}
Here, we will deduce the trace formulate and theta condition for the double poles case. According to the above analysis, we know that the $z_{n}, -\frac{q_{0}^{2}}{z_{n}^{*}} (n=1,2,\cdots,N)$ are the double zeros of $s_{11}$ and the $z_{n}^{*}, -\frac{q_{0}^{2}}{z_{n}} (n=1,2,\cdots,N)$ are the double zeros of $s_{22}$. Then, we construct the following function
\begin{align}\label{dTT-1}
\zeta^{-}_{2}(z)=s_{11}(z)\prod_{n=1}^{N}\frac{(z-z_{n}^{*})^{2}(z+q_{0}^{2}/z_{n})^{2}}
{(z-z_{n})^{2}(z+q_{0}^{2}/z_{n}^{*})^{2}},\notag\\
\zeta^{+}_{2}(z)=s_{22}(z)\prod_{n=1}^{N}\frac{(z-z_{n})^{2}(z+q_{0}^{2}/z_{n}^{*})^{2}}
{(z-z_{n}^{*})^{2}(z+q_{0}^{2}/z_{n})^{2}}.
\end{align}
The analytic properties of the $\zeta^{-}_{2}$ and $\zeta^{+}_{2}$ correspond to the $s_{11}$ and $s_{22}$. However, they have no zeros. Furthermore, considering the $\det S(z)=1$ and applying the expression of the reflection coefficients, i.e. $\rho(z)=s_{21}(z)/s_{11}(z), \tilde{\rho}(z)=s_{12}(z)/s_{22}(z)$, we obtain that
\begin{align}\label{dTT-2}
\zeta^{-}_{2}(z)\zeta_{2}^{+}(z)=\frac{1}{1-\rho(z)\tilde{\rho}(z)},\quad z\in\Sigma.
\end{align}
Based on the asymptotic behavior of $S(z)$ which we have given in \emph{Theorem 2.7}, we can obtain that $\zeta^{\mp}_{2}(z)\rightarrow1$ as $z\rightarrow\infty$. Then, via taking the logarithm of the \eqref{dTT-2} and applying the Plemelj's formulae and Cauchy projectors, we obtain that
\begin{align}\label{dTT-3}
\log\zeta^{\mp}_{2}(z)=-\frac{1}{2\pi i}\int_{\Sigma}
\frac{\log[1-\rho(s)\tilde{\rho}(s)]}{s-(z\pm i0)}\,ds,\quad z\in D_{\mp}.
\end{align}
Therefore, the trace formula can be obtained as
\begin{align}
s_{11}(z)&=exp\left(-\frac{1}{2\pi i }\int_{\Sigma}\frac{\log[1-\rho(s)\tilde{\rho}(s)]}{s-z}
\,d\zeta\right)\prod_{n=1}^{N}\frac{(z-z_{n})^{2}(z+q_{0}^{2}/z_{n}^{*})^{2}}
{(z-z_{n}^{*})^{2}(z+q_{0}^{2}/z_{n})^{2}},\\
s_{22}(z)&=exp\left(-\frac{1}{2\pi i }\int_{\Sigma}\frac{\log[1-\rho(s)\tilde{\rho}(s)]}{s-z}
\,ds\right)\prod_{n=1}^{N}\frac{(z-z_{n}^{*})^{2}(z+q_{0}^{2}/z_{n})^{2}}
{(z-z_{n})^{2}(z+q_{0}^{2}/z_{n}^{*})^{2}}.
\end{align}
Finally, based on the asymptotic behavior of $s_{11}$, the theta condition can be obtained as
\begin{align}
\arg\frac{q_{-}}{q_{+}}=\arg q_{-}-\arg q_{+}=8\sum_{n=1}^{N}\arg z_{n}+\frac{1}{2\pi}\int_{\Sigma}
\frac{\log[1-\rho(s)\tilde{\rho}(s)]}{s}\,ds.
\end{align}

\subsection{Reflection-less potentials}
Here, we are interested to investigate a type solutions which are that the reflection coefficients $\rho(z)$ and $\tilde{\rho}(z)$ disappear. Therefore, the jump matrix from $M^{-}$ to $M^{+}$ also vanishes, i.e. $G(x,t;z)=0$. Under this conditions, from the \eqref{E-51},  we have
\begin{align}\label{E-52}
q(x,t)=-q_{-}-i\sum_{n=1}^{2N}\hat{E}_{n}e^{-2i\theta(x,t;\hat{\xi}_{n})}
[\mu'_{-,11}(x,t;\hat{\xi}_{n})+\mu_{-,11}(x,t;\hat{\xi}_{n})D_{n}(x,t)],
\end{align}
where the $\mu_{-,11}(x,t;\hat{\xi}_{n})$ and $\mu'_{-,11}(x,t;\hat{\xi}_{n})$ can be solved by the following equation
\begin{align}\label{E-53}
MX=V,
\end{align}
where $X_{n}=\mu_{-,11}(x,t;\hat{\xi}_{n})$, $X_{2N+n}=\mu'_{-,11}(x,t;\hat{\xi}_{n})$, $V_{n}=\frac{iq_{-}}{\xi_{n}}$, $V_{2N+n}=\frac{iq_{-}}{\xi^{2}_{n}}$ $(n=1,2,\cdots,2N)$ and M is a $4N\times4N$ matrix. The elements can be determined by the equation \eqref{E-48} and \eqref{E-49} which are under the condition that the reflection coefficients $\rho(z)$ and $\tilde{\rho}(z)$ disappear.

\subsection{Soliton solutions for the double poles case}
In this section, the properties of the soliton solutions for the double case will be analysed. Similar to the simple pole case, we first investigate the solution when the dimensionless parameter $\epsilon$ is zero. When $N=1$, we apply the appropriate parameters and get the following images.
\\

{\rotatebox{0}{\includegraphics[width=3.6cm,height=3.0cm,angle=0]{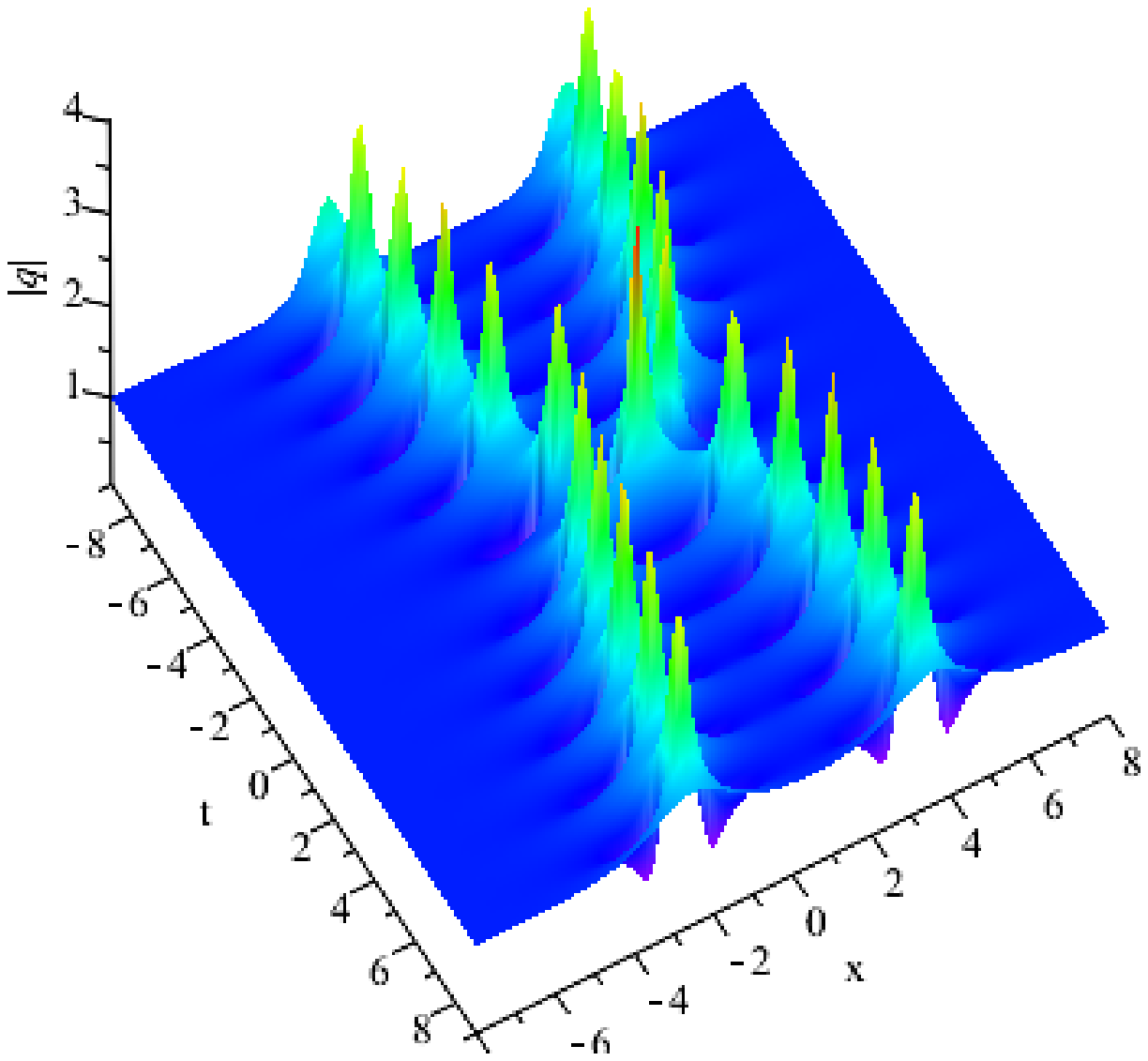}}}
~~~~
{\rotatebox{0}{\includegraphics[width=3.6cm,height=3.0cm,angle=0]{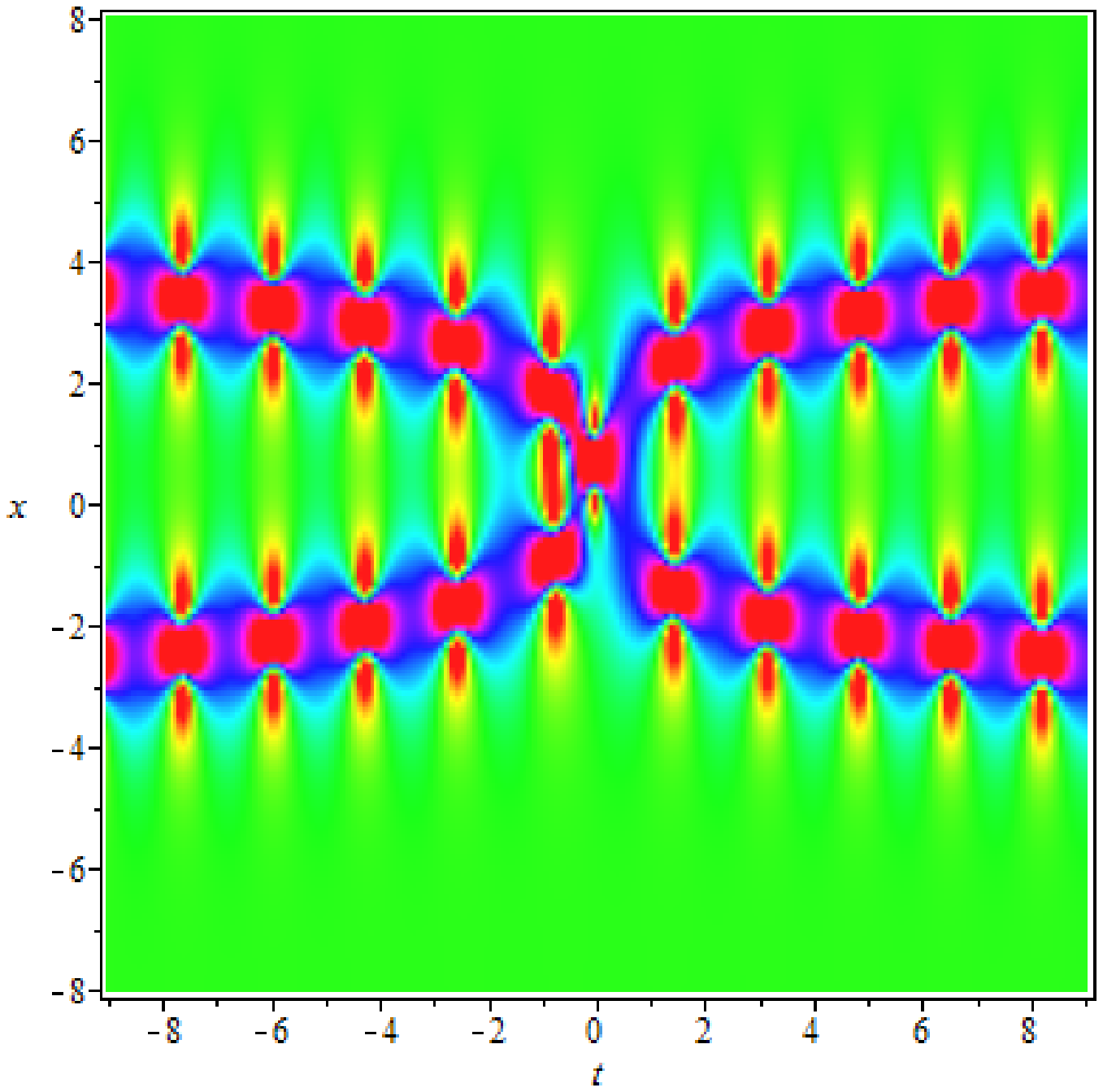}}}
~~~~
{\rotatebox{0}{\includegraphics[width=3.6cm,height=3.0cm,angle=0]{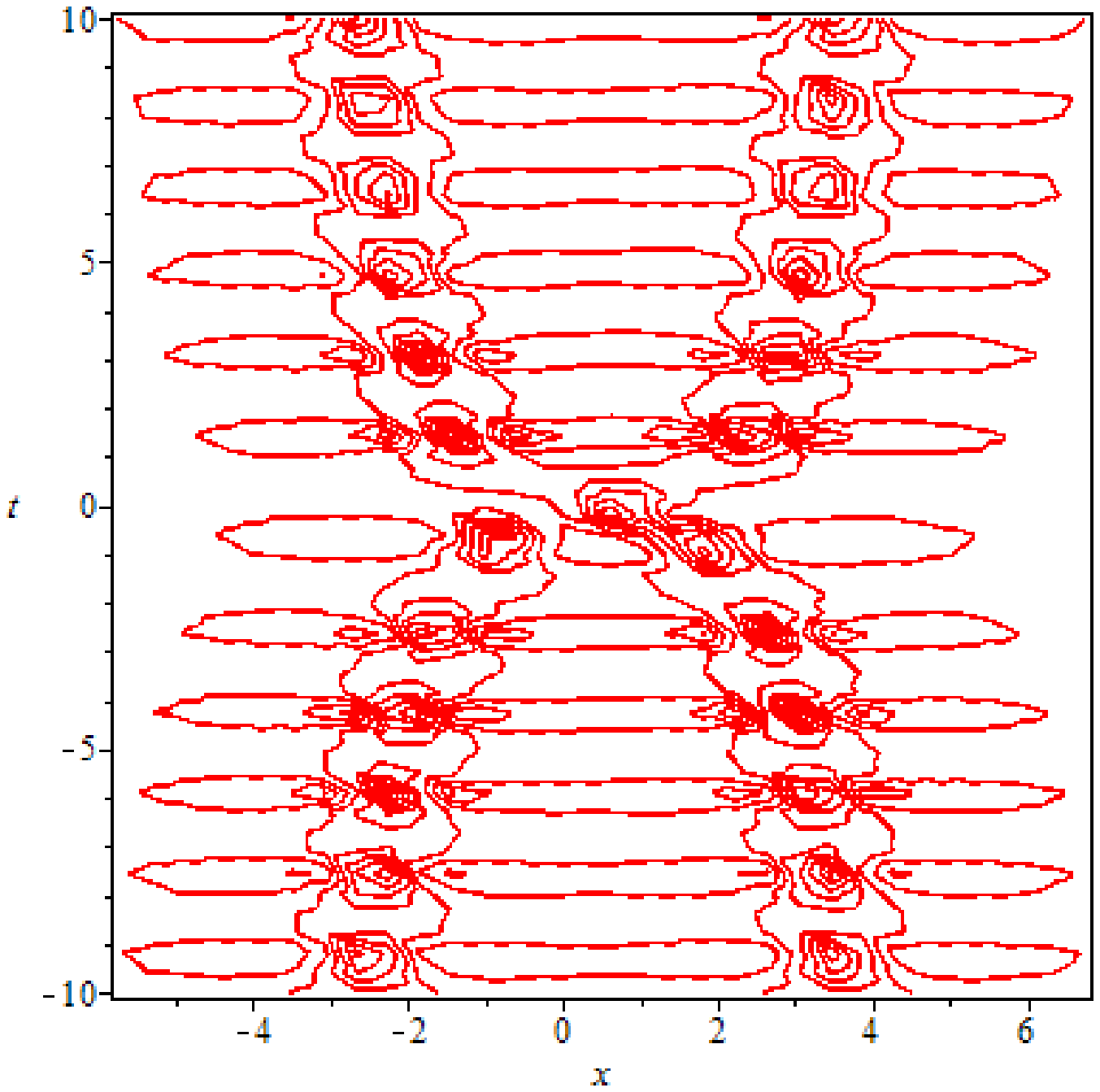}}}

$\qquad~~~~~~~~~(\textbf{a})\qquad \ \qquad\qquad\qquad\qquad~~~(\textbf{b})
\qquad\qquad\qquad\qquad\qquad~(\textbf{c})$\\
\noindent { \small \textbf{Figure 6.} (Color online) Plots of the soliton solution of the equation  with the parameters $\epsilon=0$, $q_{-}=1$, $\xi_{1}=-2i$ and $e_{1}=h_{1}=e^{1+i}$.
$\textbf{(a)}$: the soliton solution ,
$\textbf{(b)}$: the density plot ,
$\textbf{(c)}$: the contour line of the soliton solution.} \\

From the Figure 6, the interesting phenomenon that there are two columns of breather solutions are shown. It is worth noting that two columns of breather solutions interact in the process of propagation. Then, we change the boundary value $q_{-}$ and get the following graphs.\\

{\rotatebox{0}{\includegraphics[width=3.6cm,height=3.0cm,angle=0]{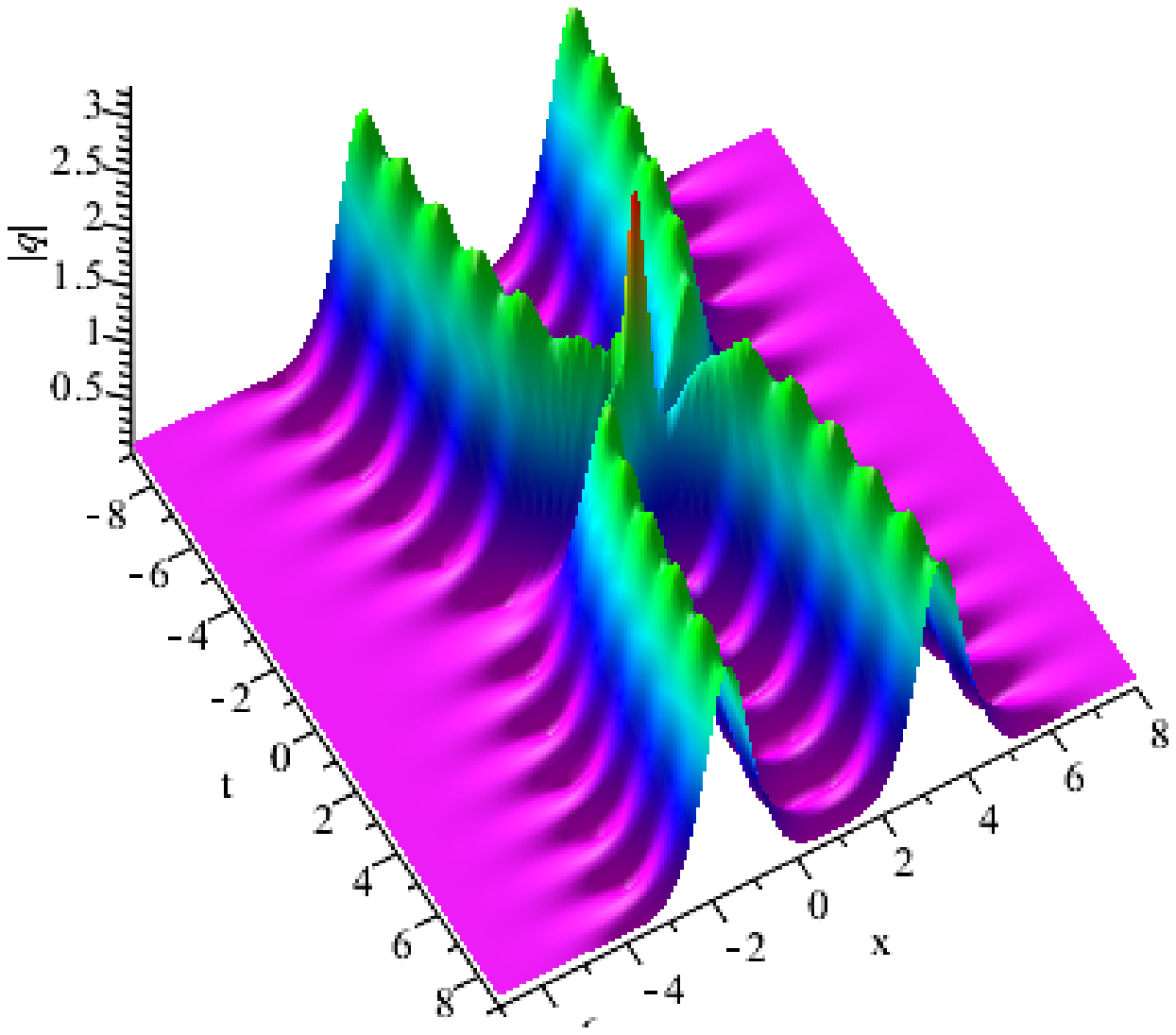}}}
~~~~
{\rotatebox{0}{\includegraphics[width=3.6cm,height=3.0cm,angle=0]{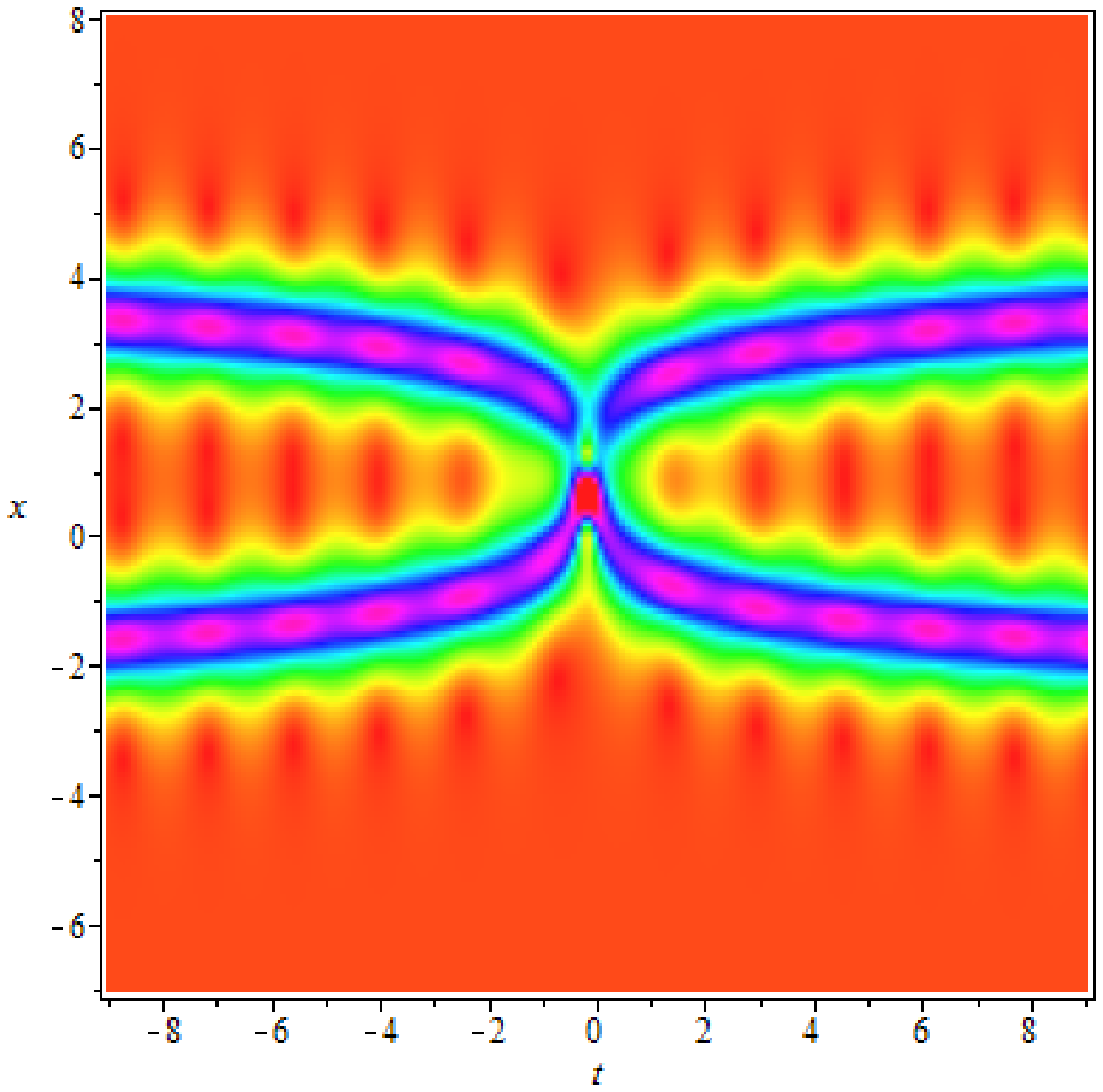}}}
~~~~
{\rotatebox{0}{\includegraphics[width=3.6cm,height=3.0cm,angle=0]{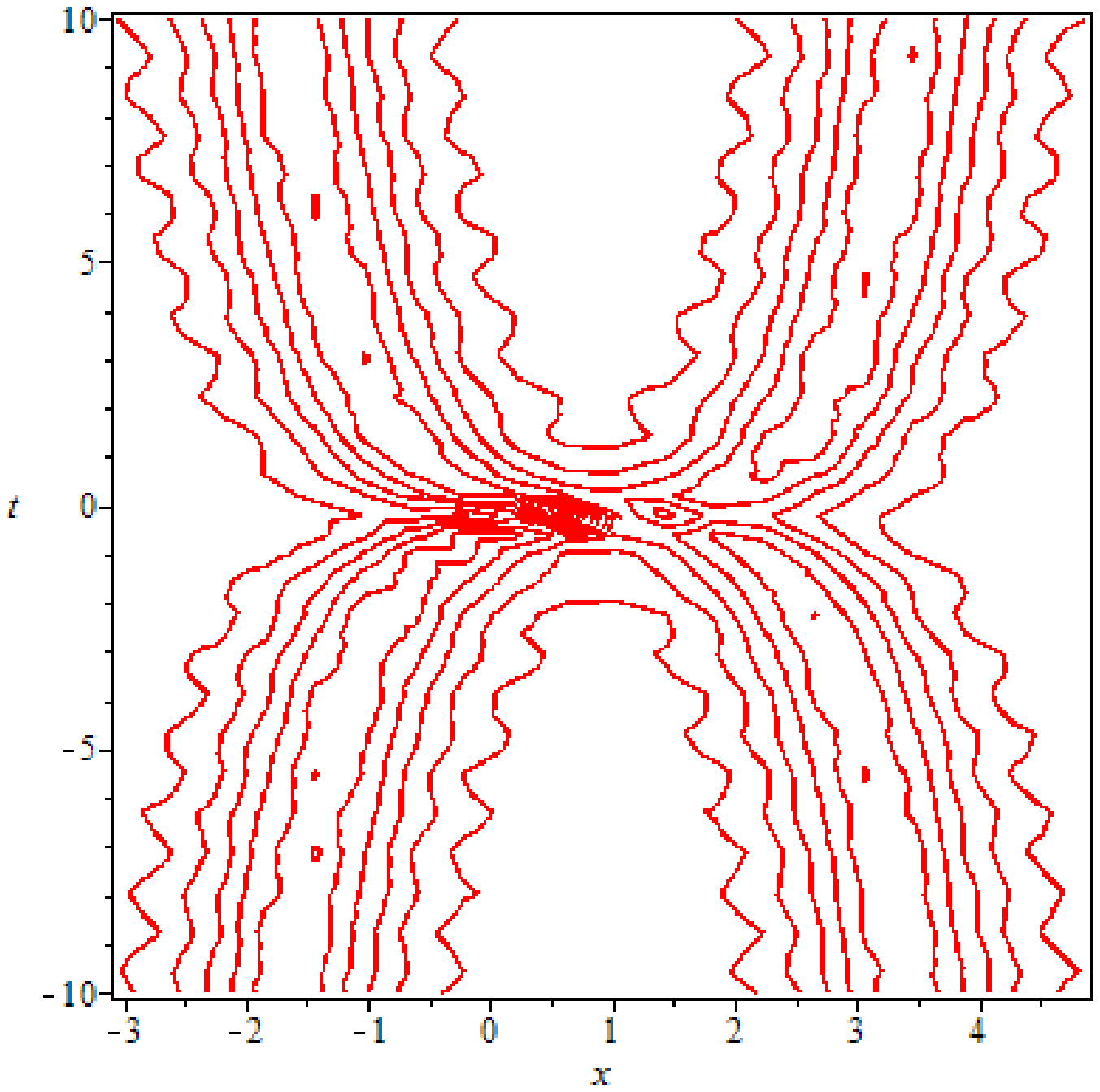}}}

$\qquad~~~~~~~~~(\textbf{a})\qquad \ \qquad\qquad\qquad\qquad~~~(\textbf{b})
\qquad\qquad\qquad\qquad\qquad~(\textbf{c})$\\

{\rotatebox{0}{\includegraphics[width=3.6cm,height=3.0cm,angle=0]{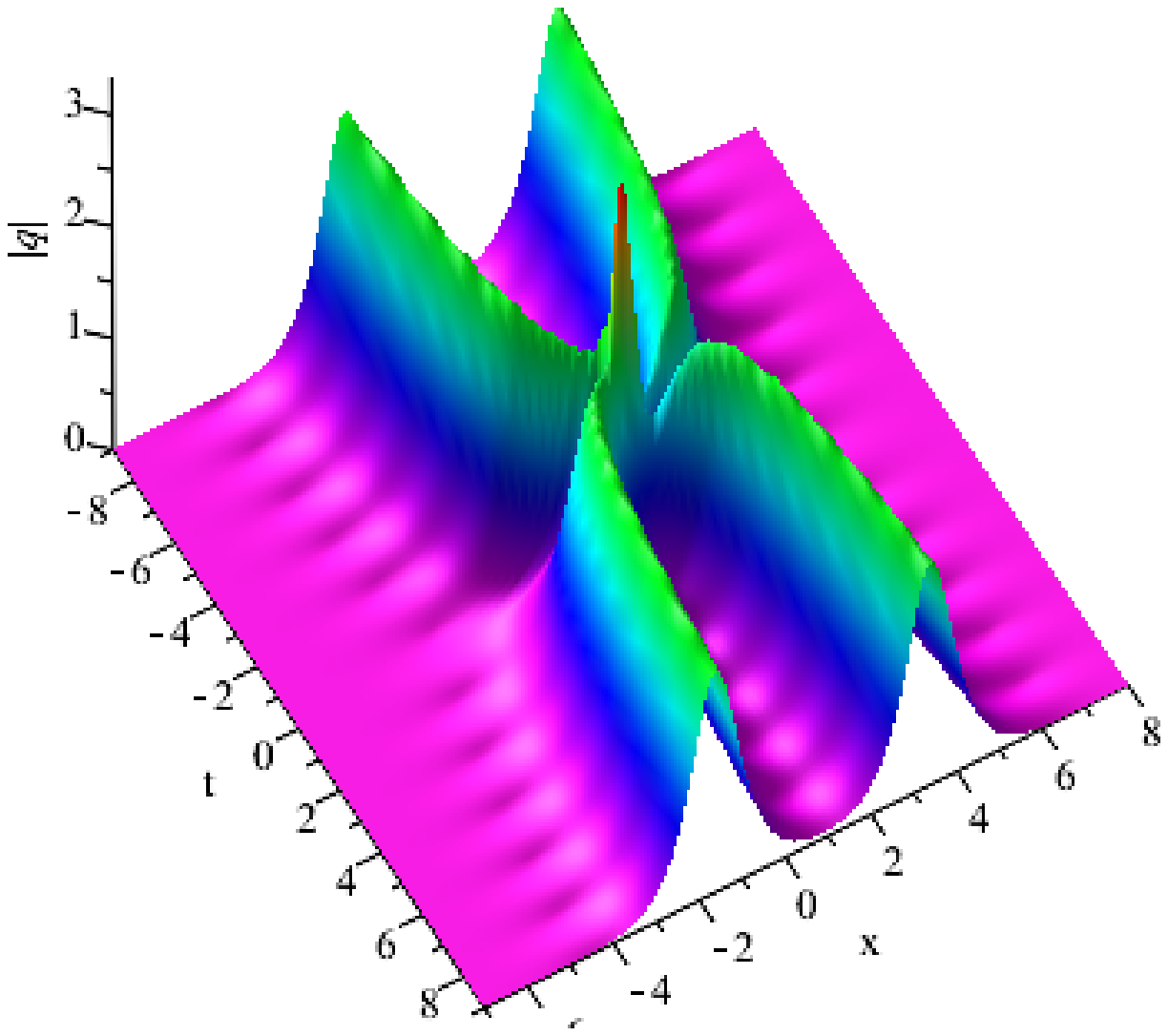}}}
~~~~
{\rotatebox{0}{\includegraphics[width=3.6cm,height=3.0cm,angle=0]{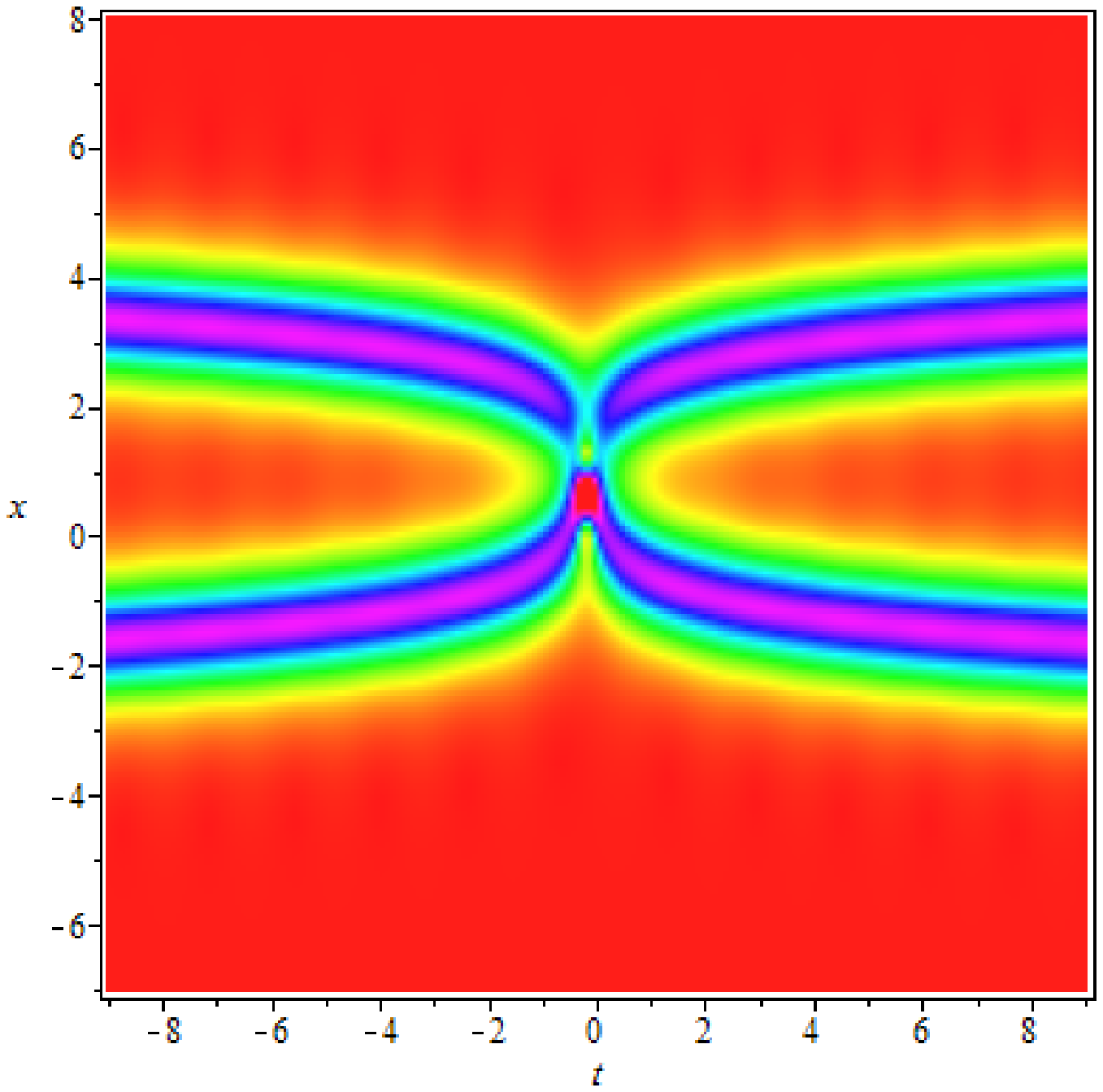}}}
~~~~
{\rotatebox{0}{\includegraphics[width=3.6cm,height=3.0cm,angle=0]{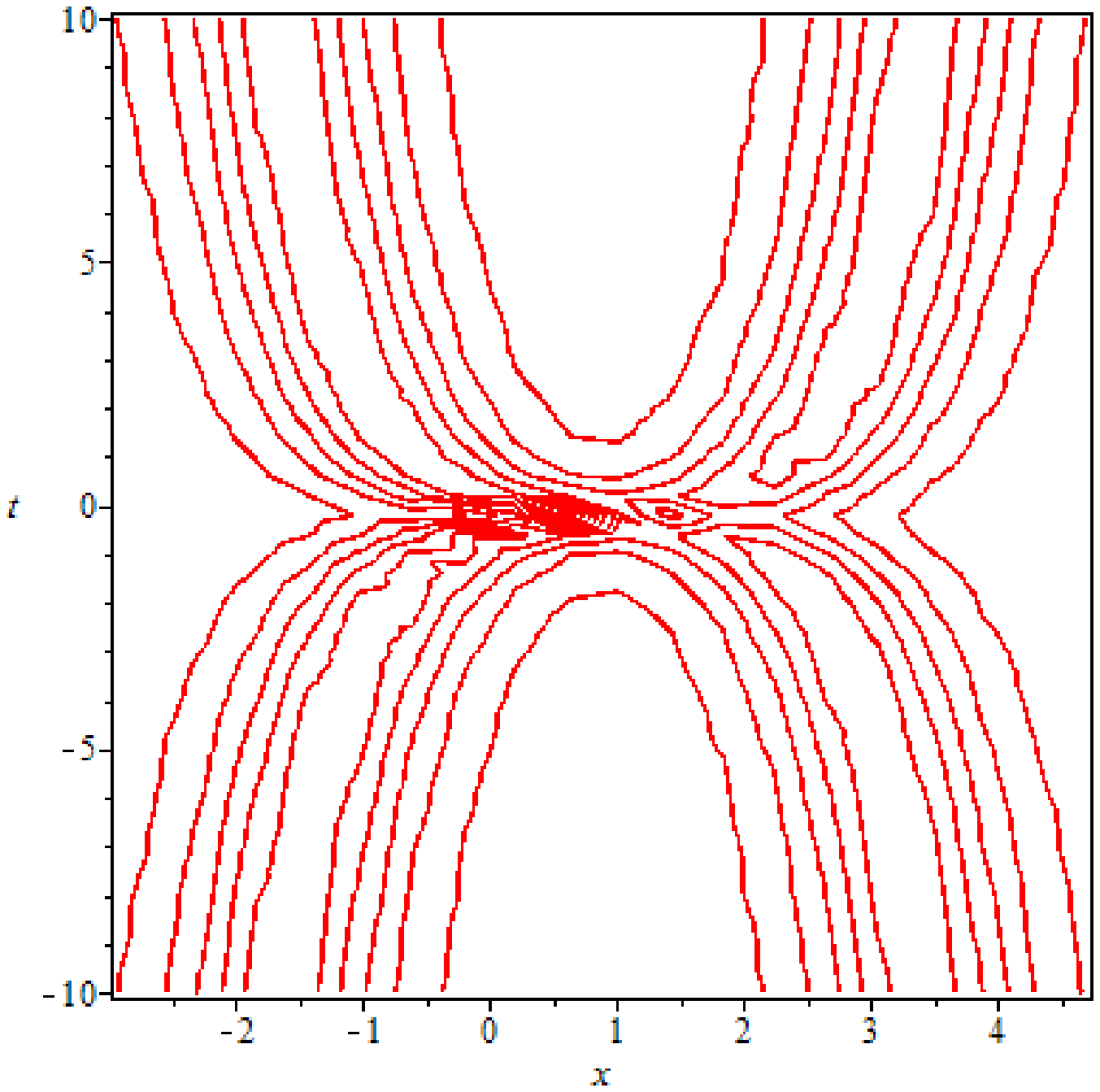}}}

$\qquad~~~~~~~~~(\textbf{d})\qquad \ \qquad\qquad\qquad\qquad~~~(\textbf{e})
\qquad\qquad\qquad\qquad\qquad~(\textbf{f})$\\
\noindent { \small \textbf{Figure 7.} (Color online) Plots of the soliton solutions of the equation  with the parameters $\epsilon=0$, $\xi_{1}=-2i$ and $e_{1}=h_{1}=e^{1+i}$.
$\textbf{(a)}$: the soliton solution with $q_{-}=0.1$,
$\textbf{(b)}$: the density plot corresponding to $(a)$,
$\textbf{(c)}$: the contour line of the soliton solution corresponding to $(a)$,
$\textbf{(d)}$: the soliton solution with $q_{-}=0.01$,
$\textbf{(e)}$: the density plot corresponding to $(d)$,
$\textbf{(f)}$: the contour line of the soiton solution corresponding to $(d)$.} \\

What we can learn from the Figure 7. are that there is only a sharp soliton exists in the place where the two columns of waves interact during the propagation process as $q_{-}$ gradually becomes smaller. While, the breathing phenomenon gradually disappears. Furthermore, if we change the dimensionless parameter $\epsilon$, more interesting phenomena will be generated. Now, we consider the case that the dimensionless parameter $\epsilon$ is non-zero.\\

{\rotatebox{0}{\includegraphics[width=3.6cm,height=3.0cm,angle=0]{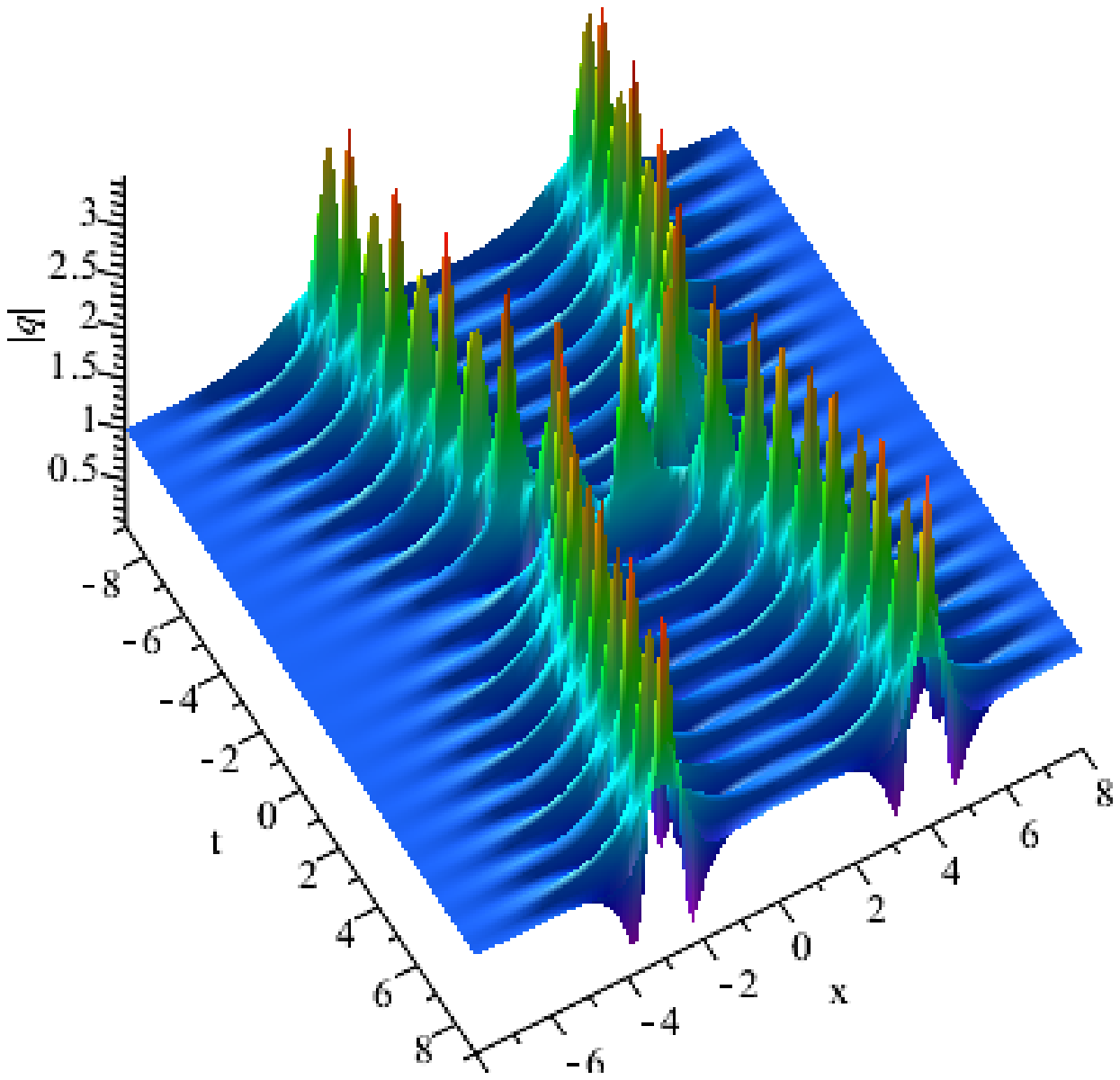}}}
~~~~
{\rotatebox{0}{\includegraphics[width=3.6cm,height=3.0cm,angle=0]{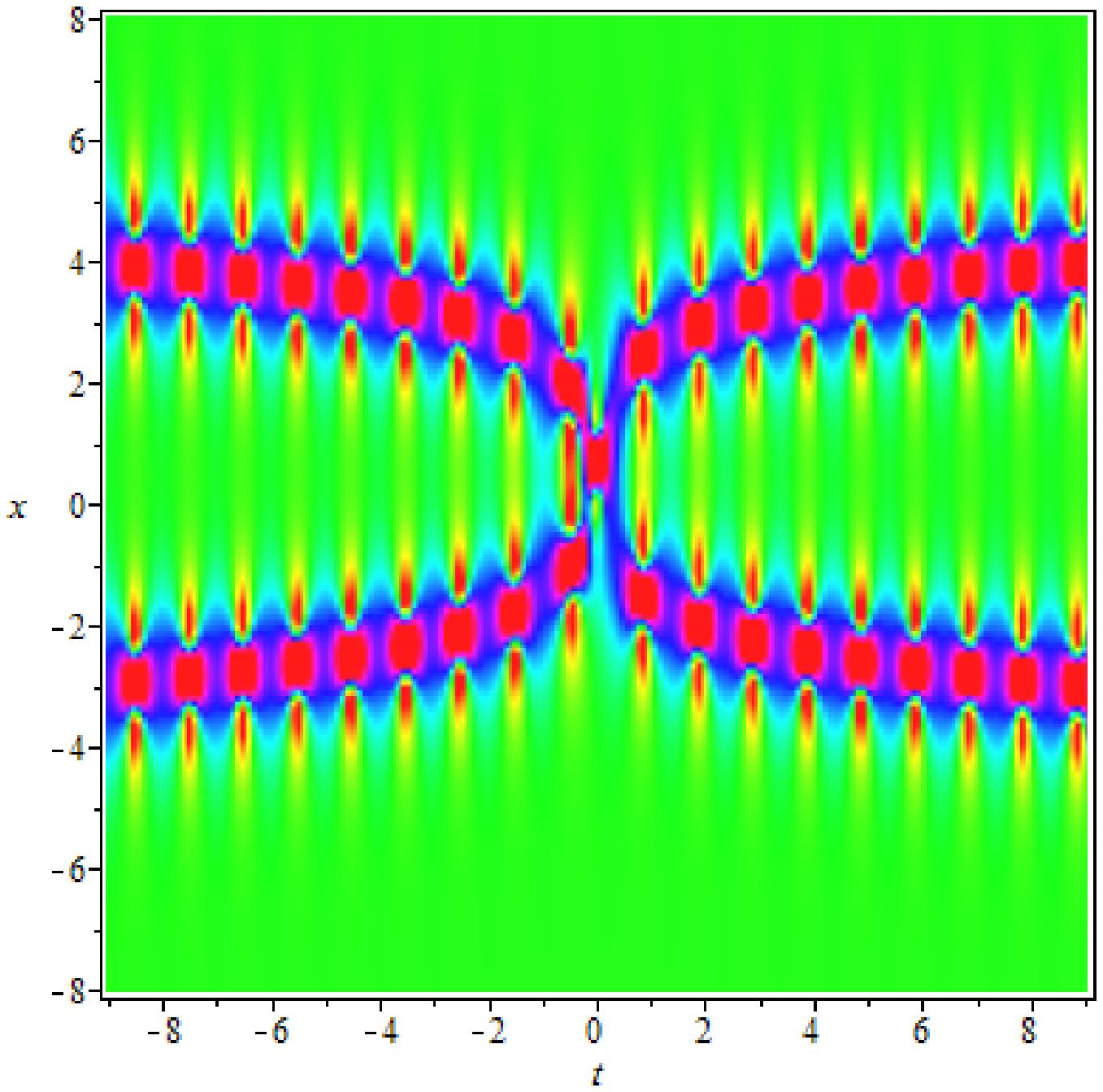}}}
~~~~
{\rotatebox{0}{\includegraphics[width=3.6cm,height=3.0cm,angle=0]{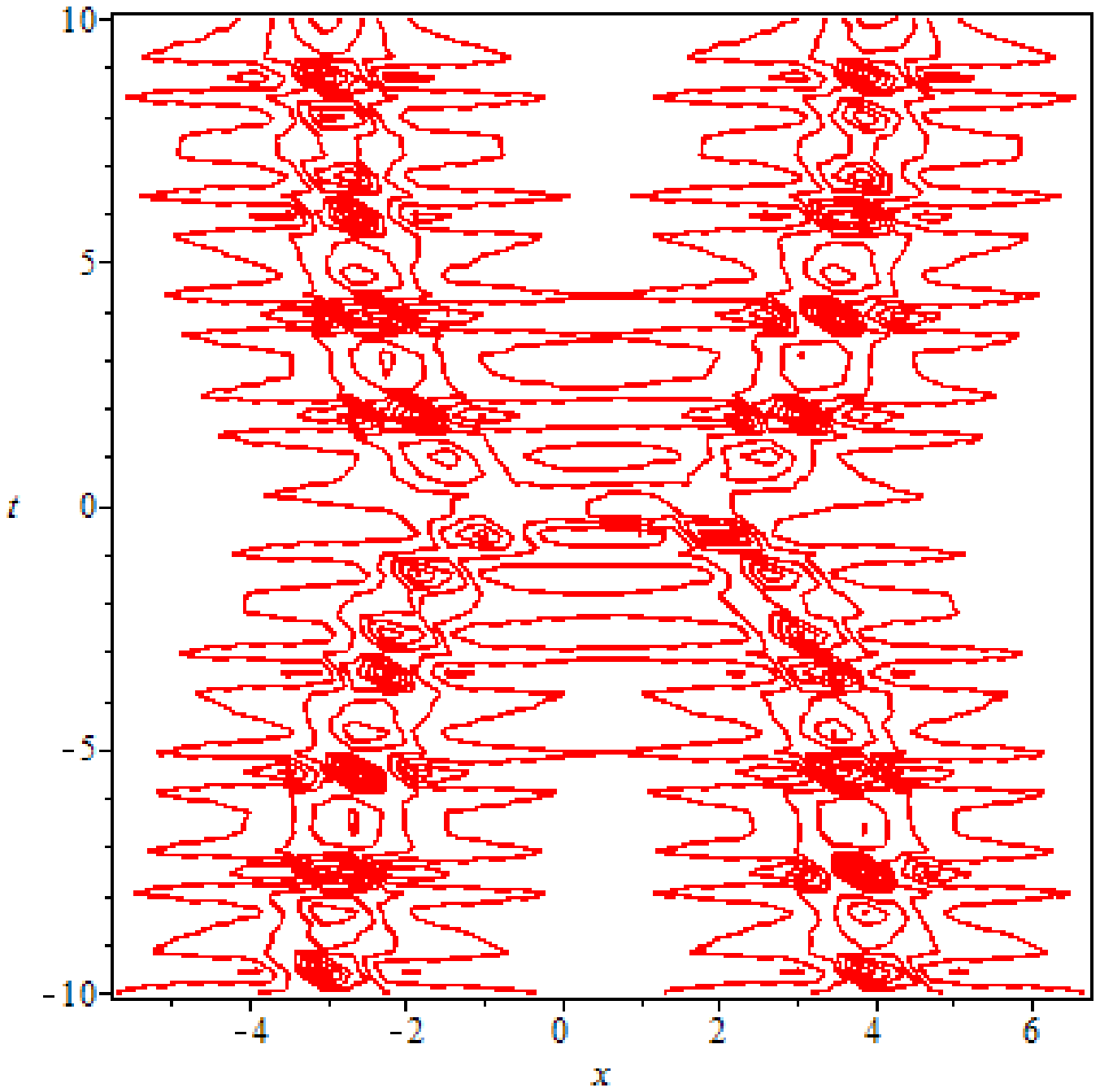}}}

$\qquad~~~~~~~~~(\textbf{a})\qquad \ \qquad\qquad\qquad\qquad~~~(\textbf{b})
\qquad\qquad\qquad\qquad\qquad~(\textbf{c})$ \\

{\rotatebox{0}{\includegraphics[width=3.6cm,height=3.0cm,angle=0]{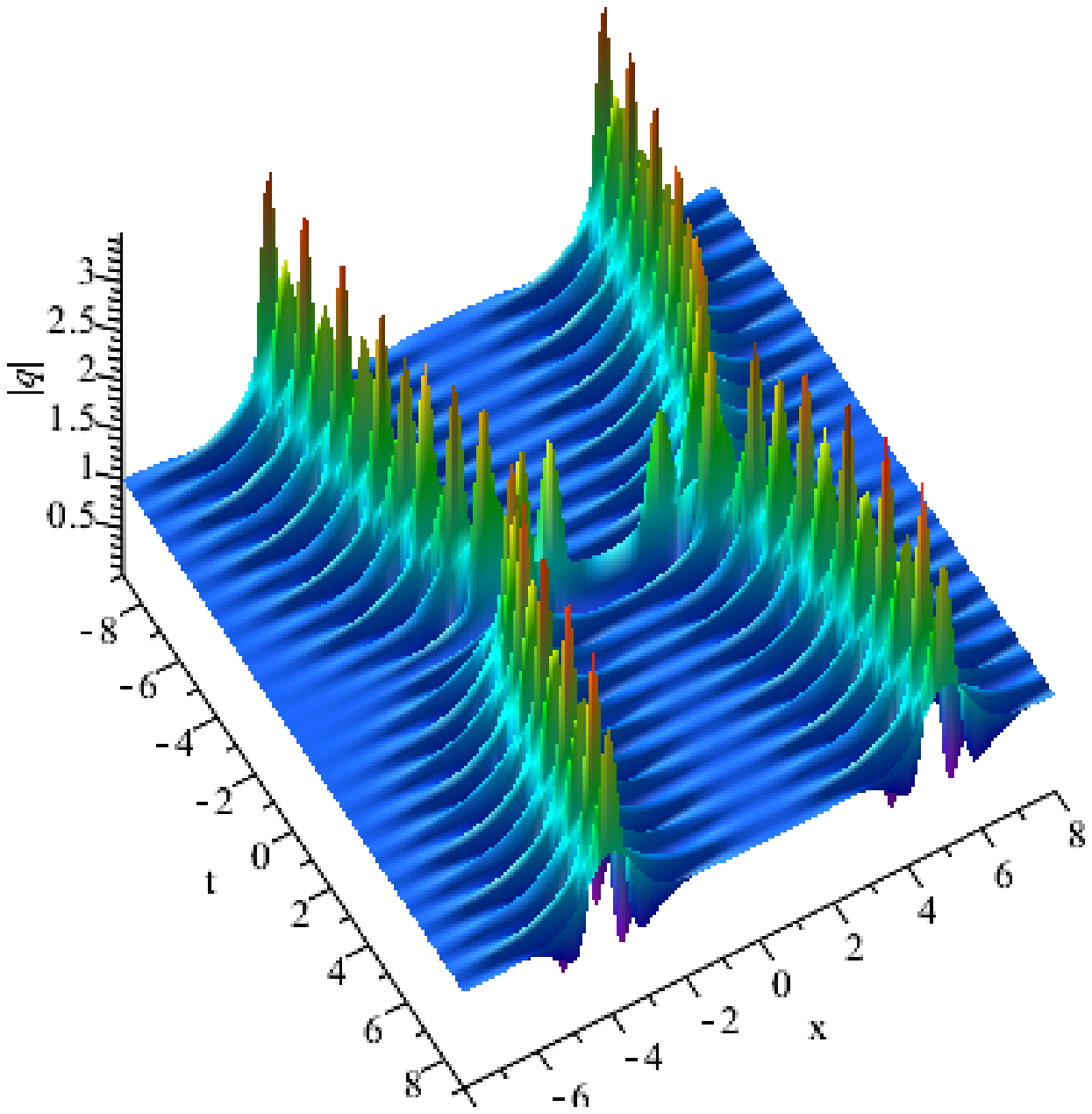}}}
~~~~
{\rotatebox{0}{\includegraphics[width=3.6cm,height=3.0cm,angle=0]{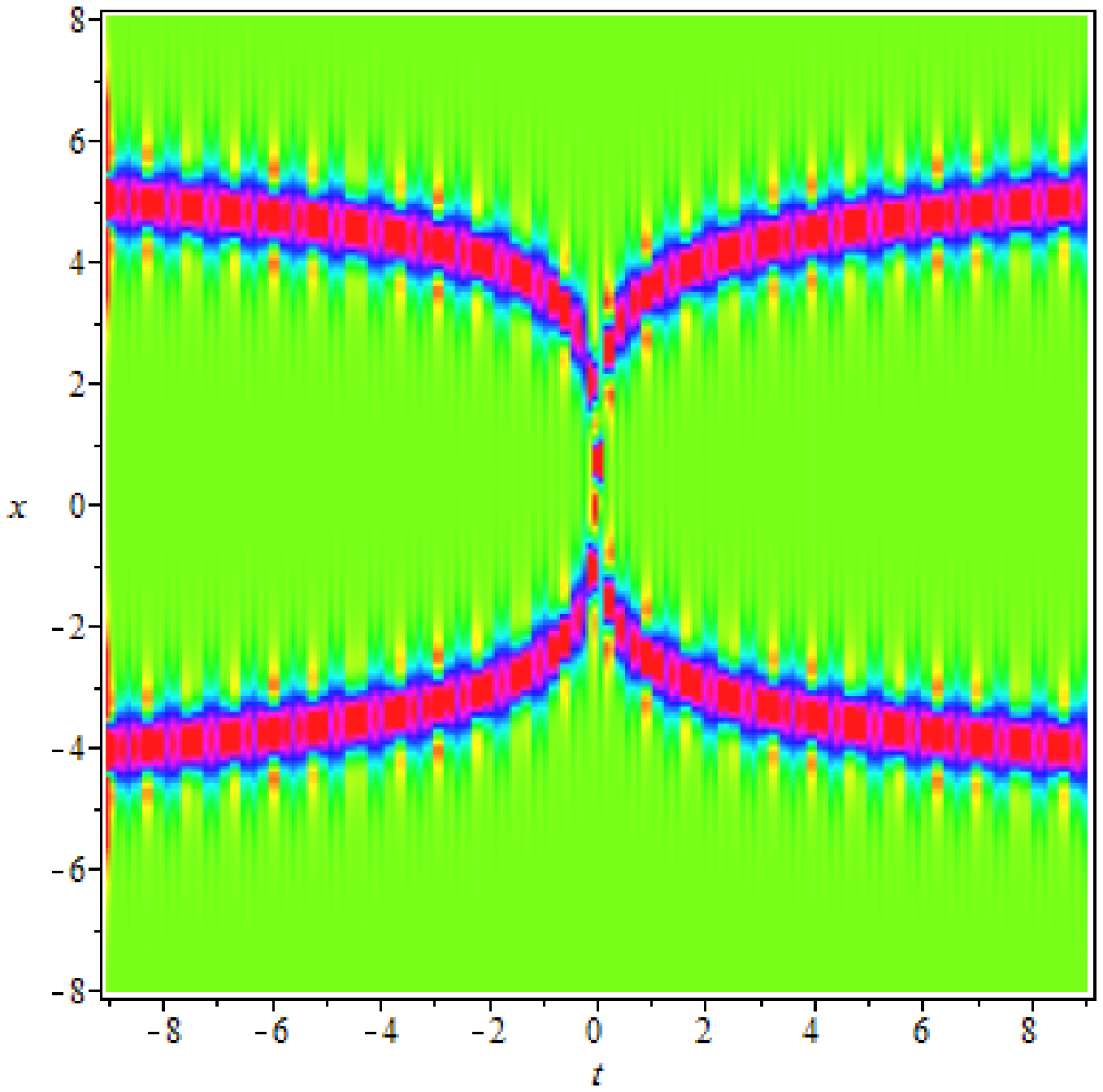}}}
~~~~
{\rotatebox{0}{\includegraphics[width=3.6cm,height=3.0cm,angle=0]{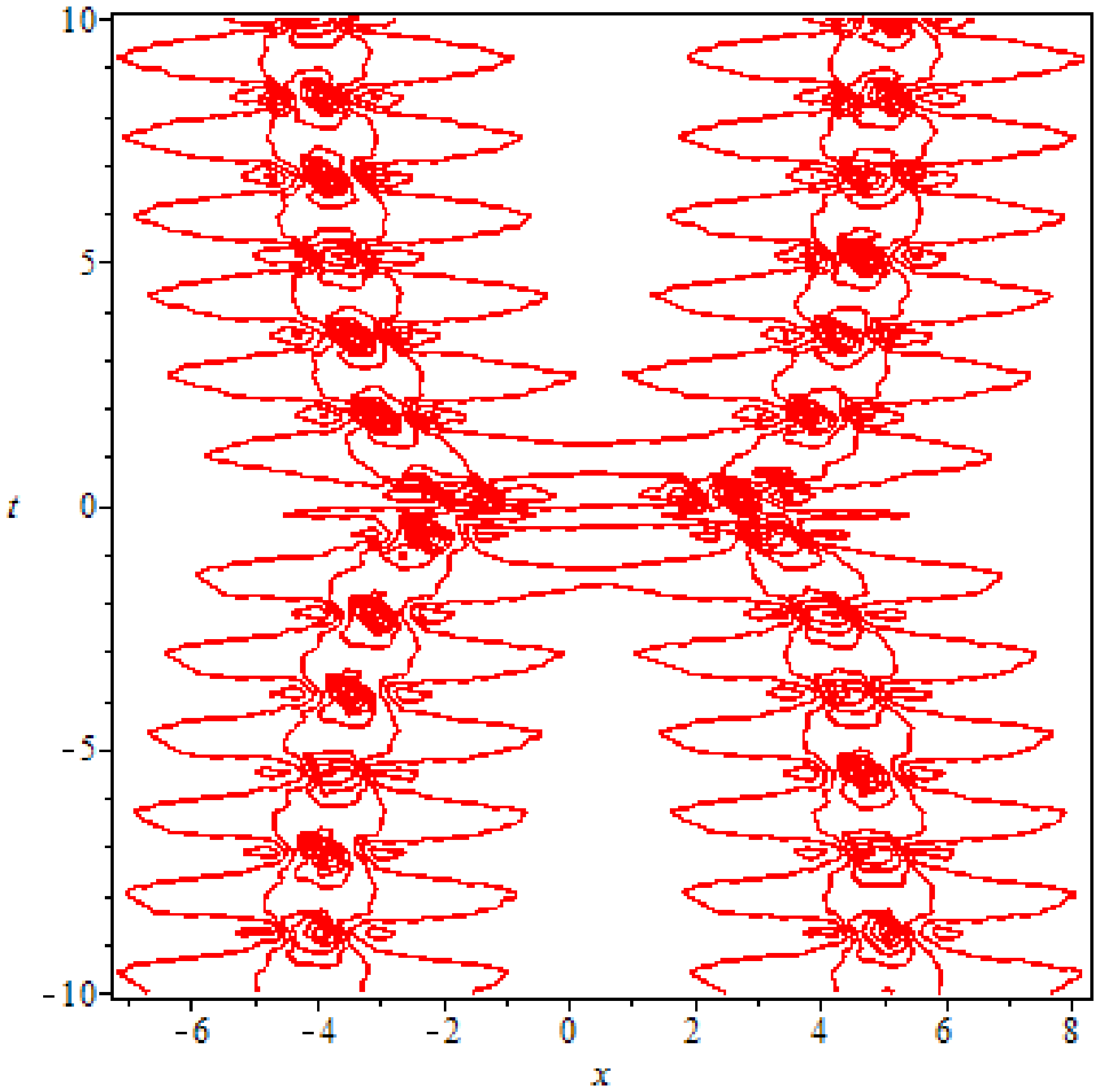}}}

$\qquad~~~~~~~~~(\textbf{d})\qquad \ \qquad\qquad\qquad\qquad~~~(\textbf{e})
\qquad\qquad\qquad\qquad\qquad~(\textbf{f})$\\
\noindent { \small \textbf{Figure 8.} (Color online) Plots of the soliton solutions of the equation  with the parameters $q_{-}=1$, $\xi_{1}=-2i$ and $e_{1}=h_{1}=e^{1+i}$.
$\textbf{(a)}$: the soliton solution with $\epsilon=1$,
$\textbf{(b)}$: the density plot corresponding to $(a)$,
$\textbf{(c)}$: the contour line of the soliton solution corresponding to $(a)$,
$\textbf{(d)}$: the soliton solution with $\epsilon=3$,
$\textbf{(e)}$: the density plot corresponding to $(d)$,
$\textbf{(f)}$: the contour line of the soliton solution corresponding to $(d)$.} \\

From the Figure 8. and comparing it with Figure 6, it is interesting that the both of the two column breather soliton solutions are more closely arranged as the dimensionless parameter $\epsilon$ becomes larger. For this phenomenon, the density plot shows more clearly. Similar to the simple pole, we can see that the appearance of $\epsilon$  will disturb the form of the soliton solution and has no effect on the structure of the soliton solution.

\section{ Conclusions and discussions}
In  \cite{I-20}, Yang, et al. have studied the rogue wave solutions and obtained rogue waves dynamics and several new spatial-temporal structures of the HDNLS equation by using the generalized Darboux transformation method. However, we investigated the HDNLS equation \eqref{1.1} with non-zero boundary conditions by applying the inverse scattering transform via RH approach which is quite different from the generalized Darboux transformation method. Meanwhile, the specific form of the analytical solution we obtained is different from the results in \cite{I-20}. In addition, we fully discuss the effects of different boundary conditions, discrete spectral points, and disturbance $\epsilon$ on the soliton solutions. Also, some interesting phenomena are obtained   when the spectrum points tend to singular points   by choosing appropriate parameters.

In this work, we studied the HDNLS equation with NBCs at infinity and presented the ISTs. Firstly, we overcome difficulties that the double-valued functions occur in the process of direct scattering through through introducing a appropriate Riemann surface and uniformization variable. After the discussion of the direct scattering problem and the inverse scattering problem, the solutions of the HDNLS equation with NBCs are presented. Meanwhile, special soliton solutions under the condition of reflection-less potentials are given for both of the two case i.e. simple and double poles. In addition, based on the concrete expression of the solution, some graphic analysis are presented via selecting some appropriate parameters.

\section*{Acknowledgements}
This work was supported by the Postgraduate Research and Practice of Educational Reform for Graduate students in CUMT under Grant No. 2019YJSJG046, the Natural Science Foundation of Jiangsu Province under Grant No. BK20181351, the Six Talent Peaks Project in Jiangsu Province under Grant No. JY-059, the Qinglan Project of Jiangsu Province of China, the National Natural Science Foundation of China under Grant No. 11975306, the Fundamental Research Fund for the Central Universities under the Grant Nos. 2019ZDPY07 and 2019QNA35, and the General Financial Grant from the China Postdoctoral Science Foundation under Grant Nos. 2015M570498 and 2017T100413.

\renewcommand{\baselinestretch}{1.2}


\begin{thebibliography}{00}\addtolength{\itemsep}{-1.5ex}

\bibitem{I-1}
V. E. Zakharov, Stability of periodicwaves of finite amplitude on the surface of
a deep fluid, Sov. Phys. J. Appl. Mech. Tech. Phys. 4 (1968)  190-194.
\bibitem{I-2}
V. E. Zakharov, Collapse of Langmuir waves, Sov. Phys. JETP, 35 (1972)  908-914.
\bibitem{I-3}
V. E. Zakharov , A. B. Shabat, Exact theory of two-dimensional self-focusing and one-dimensional self-modulation of waves in nonlinear media, Sov. Phys. JETP 34 (1972)  62-69.
\bibitem{I-6}
G.P. Agrawal, Nonlinear Fiber Optics Academic, San Diego, (1989).
\bibitem{I-7}
A. Hasegawa and Y. Kodama, Solitons in Optical Communications Clarendon, Oxford, 1995.
\bibitem{I-8}
A. K. Zvezdin, A. F.Popkov, Contribution to the nonlinear theory of
magnetostaticspin waves, Sov. Phys. JETP, 2 (1983)  350.
\bibitem{KN}
D.J. Kaup, A.C. Newell, An exact solution for a derivative nonlinear Schr\"{o}dinger
equation, J. Math. Phys., 19 (1978)  798-801.
\bibitem{CLL} H. H. Chen, Y. C. Lee, C. S. Liu, Integrability of nonlinear Hamiltonian systems by
inverse scattering method, Phys. Scr., 20 (1979)  490.
\bibitem{GI}
V. S. Gerdjikov, M. I. Ivanov, The quadratic bundle of general form and the nonlinear
evolution equations, Bulg. J. Phys., 10 (1983)  130-143.
\bibitem{Tian-PAMS}
S.F. Tian and T.T. Zhang, Long-time asymptotic behavior for the Gerdjikov-Ivanov type of derivative nonlinear Schr\"{o}dinger equation with time-periodic boundary condition. Proc. Am. Math. Soc. 146 (2018)  1713-1729.
\bibitem{Fan-dnls} E. G. Fan, A family of completely integrable multi-Hamiltonian systems explicitly
related to some celebrated equations, J. Math. Phys., 42 (2001)  4327-4344.

\bibitem{I-13}
S.F. Tian, Initial-boundary value problems for the general coupled nonlinear Schr\"{o}dinger equation on the interval via the Fokas method, J. Differ. Equ. 262 (2017)  506-558.
\bibitem{I-15}
S.F. Tian, The mixed coupled nonlinear Schr\"{o}dinger equation on the half-line via the Fokas method, Proc. R. Soc. Lond. A, 472(2195) (2016)  20160588.
\bibitem{Yanzy} Z.Y. Yan, An initial-boundary value problem for the integrable spin-1 Gross-Pitaevskii equations with a $4\times 4$ Lax pair on the half-line. Chaos 27(5) (2017) 053117.
\bibitem{Guobl} B. Guo, N. Liu, Y. Wang, A Riemann-Hilbert approach for a new type coupled
nonlinear Schr\"{o}dinger equations, J. Math. Anal. Appl. 459(1) (2018) 145-158.
\bibitem{Geng}
X. Geng, J. Wu, Riemann-Hilbert approach and N-soliton solutions for a generalized
Sasa-Satsuma equation, Wave Motion, 60 (2016) 62-72.
\bibitem{Mawx} W. X. Ma, Riemann-Hilbert problems and N-soliton solutions for a coupled mKdV
system, J. Geom. Phys. 132 (2018) 45-54.

\bibitem{J-1}
 J. Xu, E.G. Fan Long-time asymptotics for the Fokas-Lenells equation with decaying initial value problem: without solitons, J.    Differential Equations,   259(3) (2015)  1098-1148.
\bibitem{I-18}
M.J. Ablowitz, Z.H. Musslimani, Inverse scattering transform for the integrable nonlocal nonlinear Schr\"{o}dinger equation, Nonlinearity, 29(3) (2016)  915.
\bibitem{I-19}
K. Porsezian, M. Daniel, M. Lakshmanan, On the integrability aspects of the one-dimensional classical continuum isotropic biquadratic Heisenberg spin chain, J. Math. Phys., 33(5)  (1992)  1807-1816.
\bibitem{I-20}
B. Yang ,W.G. Zhang, H.Q. Zhang, et al. Generalized Darboux transformation and rogue wave solutions for the higher-order dispersive nonlinear Schr\"{o}dinger equation, Physica Scr., 88(6) (2013)  065004.
\bibitem{I-21}
T.A. Davydova, Y.A. Zaliznyak, Schr\"{o}dinger ordinary solitons and chirped solitons: fourth-order dispersive effects and cubic-quintic nonlinearity, Phys. D, 156 (2001)  260-282.
\bibitem{I-22}
F. Azzouzi, H. Triki, K. Mezghiche, A.E. Akrmi, Solitary wave solutions for high dispersive cubic-quintic nonlinear Schr\"{o}dinger equation , Chaos Solitons Fract. 39 (2009)  1304-1307.
\bibitem{I-23}
S.L. Palacios, J. M. Fern\'{a}ndez-D\'{\i}az, Black optical solitons for media with parabolic nonlinearity law in the presence of fourth order dispersion, Opt. Commun. 178 (2000) 457-460.
\bibitem{I-24}
M. Daniel, L. Kavitha, R. Amuda, Soliton spin excitations in an anisotropic Heisenberg ferromagnet with
octupole-dipole interaction, Phys. Rev. B, 59 (1999)  13774-13781.
\bibitem{I-25}
K. Porsezian, M. Daniel, M. Lakshmanan, On the integrability aspects of the one-dimensional classical
continuum isotropic biquadratic Heisenberg spin chain, J. Math. Phys. 33 (1992)  1807-1816.
\bibitem{I-26}
H.Q. Zhang, B. Tian, X.H. Meng, X. Lu, W.J. Liu, Conservation laws, soliton solutions and modulational
instability for the higher-order dispersive nonlinear Schr\"{o}dinger equation, Eur. Phys. J. B, 72 (2009)  233-239.
\bibitem{I-27}
L.H. Wang, K. Porsezian, J.S. He, Breather and rogue wave solutions of generalized nonlinear Schr\"{o}dinger equation, Phys. Rev. E, 87 (2013)  053202.
\bibitem{I-29}
C.S. Gardner, J. M. Greene, M. D. Kruskal, and R. M. Miura, Method for solving the Korteweg-de Vries
equation, Phys. Rev. Lett. 19 (1967)  1095-1097.

\bibitem{I-30}
B. Prinari, M. J. Ablowitz, G. Biondini, Inverse scattering transform for the vector nonlinear Schr\"{o}dinger equation with nonvanishing boundary conditions, J. Math. Phys. 47 (2006)  063508.
\bibitem{T-2017-50}
S. F. Tian, Initial-boundary value problems of the coupled modified Korteweg-de Vries
equation on the half-line via the Fokas method, J. Phys. A: Math. Theor. 50(39) (2017)
395204.
\bibitem{WGW-2019-266}
D. S. Wang, B. Guo, X. Wang, Long-time asymptotics of the focusing Kundu-Eckhaus
equation with nonzero boundary conditions, J. Differ. Equ. 266(9)  (2019) 5209-5253.

\bibitem{Yangjj1} J.J. Yang, S.F. Tian, Z.Q. Li,
Inverse scattering transform and soliton solutions for the focusing
  Kundu-Eckhaus equation with nonvanishing boundary conditions,  arXiv:1911.00340.
\bibitem{Yangjj2} J.J. Yang, S.F. Tian, Riemann-Hilbert problem for the modified Landau-Lifshitz equation with nonzero boundary conditions, arXiv preprint arXiv:1909.11263.
\bibitem{Faneg}
L. Wen, E. Fan, The Riemann-Hilbert approach to focusing Kundu-Eckhaus equation with nonzero boundary conditions. arXiv:1910.08921, 2019.
\bibitem{Faneg2}
Y.L. Yang, E.G. Fan, Riemann-Hilbert approach to the modified nonlinear Schr\"{o}inger equation with non-vanishing asymptotic boundary conditions, arXiv:1910.07720.
\bibitem{Maojj} J.J. Mao, S.F. Tian,
Riemann-Hilbert approach for the NLSLab
equation with nonzero boundary conditions, arXiv:1911.00683.
\bibitem{I-32}
M. J. Ablowitz, G. Biondini, B. Prinari, Inverse scattering transform for the integrable discrete nonlinear Schr\"{o}dinger equation with nonvanishing boundary conditions, Inverse Prob. 23 (2007)  1711-1758.
\bibitem{I-33}
B. Prinari, G. Biondini, A. D. Trubatch, Inverse scattering transform for the multi-component nonlinear
Schr\"{o}dinger equation with nonzero boundary conditions, Stud. Appl. Math. 126 (2011)  245-302.
\bibitem{I-34}
F. Demontis, B. Prinari, C. van der Mee, F. Vitale, The inverse scattering transform for the defocusing
nonlinear Schr\"{o}dinger equations with nonzero boundary conditions, Stud. Appl. Math. 131 (2013)  1-40.
\bibitem{I-35}
F. Demontis, B. Prinari, C. van der Mee, F. Vitale, The inverse scattering transform for the focusing
nonlinear Schr\"{o}dinger equation with asymmetric boundary conditions, J. Math. Phys. 55 (2014)  101505.
\bibitem{zHANGGQ} G. Zhang, Z. Yan, Inverse scattering transforms and $N$-double-pole solutions for the
derivative NLS equation with zero/non-zero boundary conditions, arXiv:1812.02387.
\bibitem{I-36}
G. Biondini, G. Kova\v{c}i\v{c}, Inverse scattering transform for the focusing nonlinear Schr\"{o}dinger equation with nonzero boundary conditions, J. Math. Phys. 55 (2014)  031506.

\bibitem{I-38}
D. Kraus, G. Biondini, G. Kova\v{c}i\v{c}, The focusing Manakov system with nonzero boundary conditions, Nonlinearity 28 (2015)  3101-3151.

\bibitem{I-40}
B. Prinari, Inverse scattering transform for the focusing nonlinear Schr\"{o}dinger equation with one-sided nonzero boundary condition, Cont. Math. 651 (2015)  157-194.

\bibitem{I-42}
G. Biondini, D. Kraus, Inverse scattering transform for the defocusing Manakov system with nonzero
boundary conditions, SIAM J. Math. Anal. 47 (2015)  706-757.
\bibitem{I-43}
G. Biondini, E. Fagerstrom, B. Prinari, Inverse scattering transform for the defocusing nonlinear
Schr\"{o}dinger equation with fully asymmetric non-zero boundary conditions, Phys. D, 333 (2016)  117-136.
\bibitem{I-44}
G. Biondini, D. K. Kraus, B. Prinari, The three-component defocusing nonlinear Schr\"{o}dinger equation
with nonzero boundary conditions, Commun. Math. Phys. 348 (2016)  475-533.
\bibitem{I-45}
M. Pichler, G. Biondini, On the focusing non-linear Schr\"{o}dinger equation with non-zero boundary conditions and double poles, IMA J. Appl. Math. 82 (2017)  131-151.
\bibitem{I-46}
M. J. Ablowitz, X. D. Luo, Z. H. Musslimani, Inverse scattering transform for the nonlocal nonlinear
Schr\"{o}dinger equation with nonzero boundary conditions, J. Math. Phys. 59 (2018)  011501.
\bibitem{I-47}
B. Prinari, F. Demontis, S. Li, T. P. Horikis, Inverse scattering transform and soliton solutions for square matrix nonlinear Schr\"{o}dinger equations with non-zero boundary conditions, Phys. D, 368 (2018)  22-49.
\bibitem{I-48}
G.Q. Zhang, Z.Y. Yan, Inverse scattering transforms for the focusing and defocusing mKdV equations with nonzero boundary conditions,   arXiv:1810.12150.

\bibitem{I-50}
L.D. Faddeev, L.A. Takhtajan, Hamiltonian Methods in the Theory of Solitons (Springer, Berlin, 1987).

\end{thebibliography}
\end{document}